\theoremstyle{plain}
\newtheorem{theorem}{Theorem}
\newtheorem{lemma}{Lemma}
\newtheorem{proposition}{Proposition}
\newtheorem{corollary}{Corollary}
\newcommand{\be}{\begin{eqnarray}}
\newcommand{\ee}{\end{eqnarray}}
\newcommand{\ben}{\begin{eqnarray*}}
\newcommand{\een}{\end{eqnarray*}}
\numberwithin{equation}{section}
\newtheorem{myRem}{Remark}
\newtheorem{case}{Special case}
\newtheorem{example}{Example}
\newtheorem{defin}{Definition}
\newtheorem{assump}{Assumption}
\newtheorem{thm}{Theorem}
\newtheorem{prop}{Proposition}
\newtheorem{cor}{Corollary}
\newcommand{\R}{\mathbb R}
\newcommand{\p}{{\rm I}\kern-0.18em{\rm P}}
\newcommand{\1}{{\rm 1}\kern-0.24em{\rm I}}
\newcommand{\E}{{\rm I}\kern-0.18em{\rm E}}
\newcommand{\cov}{\text{Cov}}
\newcommand{\uderbar}[1]{\underset{\raise0.3em\hbox{$\smash{\scriptscriptstyle-}$}}{#1}}
\def\boxit#1{\vbox{\hrule\hbox{\vrule\kern6pt\vbox{\kern6pt#1\kern6pt}\kern6pt\vrule}\hrule}}
\begin{document}

\title{  Covariate Selection for Optimizing Balance with an Innovative Adaptive Randomization Approach}

\author{
    Ziqing Guo 
    \\     Department of ISOM,  HKUST, Hong Kong \\
    Yang Liu
      \\ 
Institute of Statistics and Big Data, Renmin University of China,  China\\
     Lucy Xia\thanks{Corresponding author (Email: lucyxia@ust.hk)}
     \\    Department of ISOM, HKUST, Hong Kong \\
    }

\date{}
\maketitle
\setcounter{footnote}{1}

\begin{abstract}
Balancing influential covariates is crucial for valid treatment comparisons in clinical studies. While covariate-adaptive randomization is commonly used to achieve balance, its performance can be inadequate when the number of baseline covariates is large. It is therefore essential to identify the influential factors associated with the outcome and ensure balance among these critical covariates.  In this article, we propose a novel adaptive randomization approach that integrates the patients' responses and covariates information to select sequentially significant covariates and maintain their balance. We establish theoretically the consistency of our covariate selection method and demonstrate that the improved covariate balancing, as evidenced by a faster convergence rate of the imbalance measure, leads to higher efficiency in estimating treatment effects. Furthermore, we provide extensive numerical and empirical studies to illustrate the benefits of our proposed method across various settings.

{\textbf{Keywords}: Covariate-adjusted response-adaptive randomization; covariate selection; covariate balance; treatment effect estimation. }

\end{abstract}

\section{Introduction}
\noindent

In clinical trials, ensuring the comparability of treatment groups with respect to crucial baseline covariates is critical for maintaining the credibility of the results. Covariate-adaptive randomization (CAR) methods \citep{rosenberger2008handling,hu2014adaptive} are extensively used to balance these influential covariates. Typically, permuted block randomization or biased coin design can be implemented within the strata generated from baseline covariates \citep{zelen1974randomization,baldi2011covariate} to balance these baseline factors. However, this approach may result in inadequate balance when the number of strata is relatively large compared to the sample size \citep{toorawa2009use}. To address this issue, Minimization \citep{taves1974minimization}, and its randomized counterpart, the \cite{pocock1975sequential} procedure are considered to ensure the balance of the covariates’ margins. \cite{hu2012asymptotic} and \cite{hu2023multi} further extend these methods to ensure multi-dimensional covariate balancing simultaneously. Compared to other adaptive design methods, CAR procedures generally do not affect the consistency of the treatment effect estimate and do not lead to increased Type I error rates when analyzed with
appropriate primary analysis method such as the permutation test \citep{FDA2019}.  Review articles such as \cite{taves2010use}, \cite{kahan2012reporting}, \cite{lin2015pursuit}, and \cite{ciolino2019ideal} have reported the prevalent applications of these methods in leading medical journals. For a detailed exploration of the development of CAR procedures, we direct readers to foundational works by \cite{taves1974minimization}, \cite{pocock1975sequential}, \cite{baldi2011covariate}, \cite{hu2012asymptotic}, \cite{hu2023multi}, and \cite{ma2022new}.

Despite the widespread applications of CAR procedures, their finite-sample performance may be inadequate when dealing with a large set of baseline covariates. This issue can be particularly severe when prior knowledge for selecting relevant covariates is insufficient and the sample size is not proportionate to or even less than the number of covariates.  
 For example, in Phase III trials of newly developed treatments for emerging diseases, there is often limited information available about the treatment and the disease itself. In these situations, although a large pool of candidate covariates may be collected, there may be insufficient understanding of which covariates should be incorporated into the trial design. Consequently, selecting the most relevant factors becomes essential for effectively designing the trial.
  Typically, CAR procedures seek to sequentially minimize a weighted imbalance score that incorporates various covariates \citep{hu2012asymptotic,ma2022new,hu2023multi}. The over-inclusion of covariates in CAR might lead to the under-weighting of those truly influential covariates, thereby reducing the procedures' effectiveness on the key covariates. Furthermore, when dealing with continuous covariates, Mahalanobis distance has been adopted commonly as an imbalance measure \citep{qin2022adaptive,morgan2012rerandomization}. On the other hand, optimal designs can also be used in clinical trials \citep{atkinson1982optimum, atkinson2002comparison} to balance continuous covariates and achieve higher efficiency for estimating treatment effects. When the dimensionality is larger than the sample size, all such methods will require the estimation of the precision matrix and thus are either not implementable or highly unstable.  As a result, accurately identifying and selecting influential covariates is vital for ensuring the effectiveness of covariate balancing in CAR implementation. This becomes particularly important in practical scenarios, such as the development of drugs for new and urgent diseases, where practitioners lack prior knowledge in choosing appropriate baseline factors.  In turn, there is a pressing need to develop procedures that integrate covariate selection with covariate balancing within adaptive randomization for phase III trials.

The randomization procedure utilizing both patients' covariates and response information falls under the framework of covariate-adjusted response-adaptive randomization (CARA) \citep{atkinson2013randomised,hu2006theory,rosenberger2015randomization}. This class of procedures is proposed to allocate a patient to the beneficial treatment arm with a higher probability based on the previous patients' covariates, treatment assignments, response profiles, and the current patient's covariates
\citep{rosenberger2001covariate}. \cite{zhang2007asymptotic} provides a general framework for CARA and derives the asymptotic properties of the treatment effect estimate and the allocation ratios.  Works including, but not limited to \cite{sverdlov2013utility}, \cite{huang2013longitudinal}, \cite{cheung2014covariate}, \cite{biswas2016class}, and~\cite{mukherjee2023covariate,mukherjee2024covariate}, have further demonstrated the use of CARA for survival outcomes, generalized linear model, and longitudinal data.
\cite{hu2015unified} and \cite{Baldi2012CARA} consider extending this class of CARA to address concerns such as cost, estimation efficiency, ethics, and other related factors.  In a recent work, \cite{ZhuCARA2023} proposes incorporating semi-parametric estimation techniques within CARA designs to address the concern of model misspecification. In another line of research,  \cite{villar2018covariate} introduces a new class of CARA methods based on the forward-looking Gittins index rule to achieve the goal of non-myopic response-adaptive randomization.

Nevertheless, very few works in the adaptive randomization literature tackle the simultaneous selection and balancing of influential covariates. For discrete covariates, \cite{zhang2022covariate} proposes an adaptive randomization that sequentially selects influential covariates via group Lasso and then balances them utilizing \cite{hu2012asymptotic}'s procedure. When dealing with continuous covariates, one can discretize them and apply \cite{zhang2022covariate}, but this approach may result in efficiency loss \citep{LiContinuousCAR2019,ma2020statistical}. In practice, while the CAR procedures for discrete covariates are the most widely used \citep{zelen1974randomization,pocock1975sequential,hu2012asymptotic,baldi2011covariate}, there has been recent development of designs that can directly balance continuous covariates \citep{hu2012balancing,ma2013balancing,zhou2018sequential,qin2022adaptive}. Incorporating covariate selection \citep{fan2009ultrahigh,shah2013variable,muller2024isotonic,reeve2023optimal} into these existing methods can effectively address the issue of over-incorporation and improve the estimation efficiency of treatment effects. Notably, \cite{ma2022new} introduces a novel framework that balances not only individual continuous covariates but also functions of them, showcasing its potential for enhancing estimation efficiency. The number of functions can be significantly larger than the original number of covariates; consequently, there is an urgent need to adopt simultaneous covariate selection when applying methods like \cite{ma2022new}.

In this article, we propose an adaptive randomization procedure, named {\sf ARCS} (Adaptive Randomization with Covariate Selection), that aims to achieve the dual goals of covariate selection and covariate balancing. {\sf ARCS} is an iterative procedure by assigning the recruited patients in batches. At each iteration, we first select important covariates by employing either simple $l_1$ regularization techniques like Lasso or incorporating covariate selection using an additive outcome model. In the second step, we minimize the imbalance measures associated with various functional forms for the selected covariates, as proposed in \cite{ma2022new}. This ensures that the {\sf ARCS} procedure not only selects the important covariates, but also adapts to newly developed CAR procedures with flexible covariate balancing. This adaptive randomization procedure enables a significant reduction in the dimension of covariates through the selection step, greatly enhances balancing efficiency, and improves the accuracy of inference results. In particular, we consider two special cases of \cite{ma2022new} as the choices of imbalance measure: 1) Mahalanobis distance, and we denote the corresponding procedure as {\sf ARCS-M}; 2) the weighted sum of differences-in-covariates means and covariances, and we denote the corresponding procedure as {\sf ARCS-COV}. Extensive numerical studies are conducted to comprehensively evaluate the superior performance of the proposed {\sf ARCS} procedures.

Our theoretical results demonstrate the advantages of {\sf ARCS} by simultaneously selecting and balancing covariates. As we allow for an increasing number of covariates, our analyses reveal that the rate of imbalance depends on the covariate dimension. Specifically, the performance of traditional CAR procedures without covariate selection suffers from a slow convergence rate of imbalance when the dimension of covariates is large. Furthermore, if significant covariates associated with the response are not adequately balanced, the efficiency gains from using CAR in estimating treatment effects may be compromised. On the other hand, the appropriate selection of influential covariates significantly mitigates the impact of covariate dimension on the performance of CAR, particularly when the number of influential covariates associated with the response is finite. To this end, we derive the consistency of covariate selection and demonstrate the improved convergence rates of imbalance for selected covariates under {\sf ARCS}. Furthermore, we derive the asymptotic distribution of difference-in-means estimate of the treatment effect under the {\sf ARCS-M} or {\sf ARCS-COV} based on a  linear outcome model.  We show that the efficiency gain achieved through {\sf ARCS-M} or {\sf ARCS-COV} is substantial, as the difference-in-means estimate attains optimal efficiency for treatment effect estimation.   Note that the asymptotic properties of the treatment effect estimate are often of great importance, as they can be used to construct subsequent statistical inference methods. \cite{FDA2023} has indicated that ignoring the covariates used in CAR may lead to overestimated standard errors, resulting in undesirable properties such as reduced Type I error rates. Consequently, these findings underscore the necessity of achieving covariate selection and balance through {\sf ARCS}, evaluating the property of CAR in high-dimensional settings, and provide the basis for developing subsequent valid statistical inference methods.


The rest of this article is organized as follows. In Section~\ref{sec:framework}, we introduce the essential notations and lay out the details of {\sf ARCS}. In Section~\ref{sec:theo}, we present the mild assumptions and the theoretical properties of {\sf ARCS}. Extensive simulation studies and a well-calibrated clinical trial example are provided in Sections~\ref{Section: Numerical Studies} and~\ref{sec:real data} respectively. We conclude the article with a summary and short discussion in Section~\ref{sec:conclusion}. The technical proofs and additional numerical results are relegated to the Appendix.

\subsection{Notations}

For a set $\mathcal{A}$, let $|\mathcal{A}|$ be the cardinality of $\mathcal{A}$, and $\mathcal{A}^c$ be the complement of $\mathcal{A}$.  For a vector $\beta = (\beta_1,\dots,\beta_p)^\top$, denote the set of indices corresponding to the nonzero elements of $\beta$ as $J(\beta) := \{j\in\{1,\dots,p\}:\beta_j \neq 0\}$. Further, let $\lVert \beta \rVert_1$ and $\lVert \beta \rVert_2$ represent the $L_1$ norm and $L_2$ norm of $\beta$, respectively. In particular, $\lVert \beta \rVert_1 = \sum_{i=1}^p |\beta_i|$, and $\lVert \beta \rVert_2 = (\sum_{i=1}^p \beta_i^2)^{1/2}$. For any $q\times p$ -dimensional matrix $\mathbb{M}\in\mathbb{R}^{q\times p}$, with $m_{ij}$ denoting its element in the $i$-th row and $j$-th column, we denote its Frobenius norm as $\lVert \mathbb{M} \rVert_F = (\sum_{i=1}^q \sum_{j=1}^p m_{ij}^2)^{1/2}$, and $\lVert \mathbb{M} \rVert_{1,2} = \max_{1\leq j \leq p} \sqrt{\sum_{i=1}^q m_{ij}^2}$. Lastly, $\stackrel{d}{\to}$ means converge in distribution. 

\section{Framework}\label{sec:framework}

Consider an experiment with two treatments and $n$ patients.  Let $T_i$ denote the treatment assignment for the $i$-th patient, where $T_i = 1$ if the patient is assigned to the treatment group and $T_i = 0$ if the patient is assigned to the control group.  Then, the numbers of patients in the treatment and control groups are  $n_1 = \sum_{i=1}^n T_i$ and $n_0 = \sum_{i=1}^n (1 - T_i)$, respectively.  Let $X_i = (x_{i1},\dots,x_{ip})^\top \in\mathbb{R}^p$ be a $p$-dimensional vector representing the baseline  covariates of the $i$-th patient and $\mathbb{X} = (X_1, \dots, X_n)^\top \in \mathbb{R}^{n\times p}$ be an $n\times p$-dimensional matrix. Moreover, let $\mathbb{X}(0)$ be the submatrix of $\mathbb{X}$ containing covariates for all patients in the control group and $\mathbb{X}(1)$ is defined similarly. Furthermore, we define $X_{i,J}$ as the subvector of $X_{i}$ constrained to set $J$, i.e., the vector containing all $\{x_{ik}: k\in{J}\}$.

After the assignment of the $i$-th unit, we assume the observed outcome satisfies the following linear model:
\begin{equation}
Y_i = \mu(1) T_i + \mu(0) (1-T_i) + X_i^\top\beta^* + \varepsilon_i,\label{eq:outcome}
\end{equation}
where $\mu(1)$, $\mu(0)$ are the main effects for the treatment group and control group, and $\tau = \mu(1)- \mu(0)$ represents the treatment effect. Let $\beta^* = (\beta^*_1,\dots,\beta^*_p)^\top \in\mathbb{R}^p$, $J^* = J(\beta^*)$ denote the set of indices corresponding to the non-zero coordinates in $\beta^*$ and $s = |J(\beta^*)|$ represent the sparsity. Furthermore,  $\{\varepsilon_i\}_{i=1}^n$ are independent and identically distributed random errors with mean zero and variance $\sigma^2_\varepsilon$.



To incorporate the concern of covariates balancing from multiple perspectives, we adopt  $\phi(X_i)$ proposed in \cite{ma2022new} as a $q$-dimensional function of covariate
for the construction of the imbalance measure. Here, $\phi(X_i):\mathbb{R}^p\to\mathbb{R}^q$ is a function where $q \geq p$, and is used to incorporate additional functionals beyond individual covariates of $X_i$ to achieve finer covariate balance. The corresponding imbalance measure is  
$$
\text{Imb}_{n}^\phi =\lVert \Lambda_{n}
 \rVert^2_2=\left\lVert  \sum_{i=1}^n (2T_i-1)\phi(X_i)\right\rVert^2_2,
$$
 where $\Lambda_{n} = \sum_{i=1}^n (2T_i-1) \phi(X_{i})$. Given that $q\ge p$, $\phi(X_i)$ may have a much higher dimension than $X_i$. This could potentially lead to down-weighting of the influential covariates, which highlights the need for a design that incorporates covariate selection. The following {\sf ARCS} is proposed to address this concern.  Our procedure assigns the patients in batches, with $b$ denoting the batch number.  We use the first $N_0$ patients for the initial stage and proceed with our sequential randomization procedure for the remaining  $(n-N_0)/N$  batches of patients, with $N$ representing the batch size so that $(n-N_0)/N$ is an integer. When $N = 1$, this goes back to the traditional individual assignment.

\begin{itemize}
\item[(1)] Given an even integer $N_0$ (size of the initial batch $b=0$), for  $1\le i\le N_0/2$,  assign $(1,0)$ or $ (0,1)$ to  $(T_{2i-1}, T_{2i})$   with probability 1/2. After observing $\lbrace Y_i, X_i, T_i \rbrace_{i=1}^{N_0}$, apply  Lasso to obtain $\hat{\mu}^{(0)}(a)$, $\hat{\beta}^{(0)}(a)$ for $a \in \{0,1\}$ from the following optimization problem:
 $$
(\hat{\mu}^{(0)}(a), (\hat{\beta}^{(0)}(a))^\top)^{\top} = \mathop{\arg\min}\limits_{(\mu, \beta)} \left\{\frac{1}{n_a^{(0)}} \sum_{1\leq i\leq N_0: T_i=a} (Y_i - \mu(a) - X_i^\top\beta)^2 + \lambda_a \sum_{j=1}^p |\beta_j| \right\}\,, 
$$
where $\beta = (\beta_1,\dots,\beta_p)^\top$, $\hat{\beta}^{(0)}(a) = (\hat{\beta}_{1}^{(0)}(a),\dots,\hat{\beta}_{p}^{(0)}(a))^\top$, $n_0^{(0)} = N_0 -\sum_{i=1}^{N_0} T_i$, $n_1^{(0)} = \sum_{i=1}^{N_0} T_i$, and $\lambda_0$ and $\lambda_1$ are tuning parameters. Let $\widehat{J}^{(0)}$ denote the set of selected covariates from the initial stage, defined as the intersection of the supports of $\hat{\beta}^{(0)}(0)$ and $\hat{\beta}^{(0)}(1)$ : $\widehat{J}^{(0)} = J(\hat{\beta}^{(0)}(0)) \cap J(\hat{\beta}^{(0)}(1)).$

\item[(2)] For $1 \leq b \leq (n-N_0)/N$, repeat steps (3)-(4) given below until all $(n-N_0)/N $ batches of patients are assigned.

\item[(3)] Suppose the first  $b-1$ batches of subjects (with a total of $N_0 + (b-1)N$ patients) have already been assigned,   repeat the following until the $N$ patients in the $b$-th batch is assigned. For $1\le j \le N$,  suppose the first  $j-1$ patients in the $b$-th batch have been assigned and the $j$-th patient is waiting for the assignment. In other words, it is the $i= (N_0+ (b-1)N+j)$-th patient in the whole sample about to be assigned. Define the imbalance measure for the selected covariates evaluated from the first $i$ patients  by
$$
\text{Imb}^{\phi}_{i, \widehat{J}^{(b-1)}} =\lVert \Lambda_{i,\widehat{J}^{(b-1)}}
 \rVert^2_2= \left\lVert \sum_{k=1}^{i} (2T_k-1) \phi(X_{k, \widehat{J}^{(b-1)}}) \right\rVert_2^2,
$$
with the last assignment $T_i$ unknown and to be determined.  More specifically, Let   $\text{Imb}^{\phi}_{i, \widehat{J}^{(b-1)}}(a)$ denote the imbalance measure $\text{Imb}^{\phi}_{i, \widehat{J}^{(b-1)}}$ if $T_i = a$, for $a \in \{0, 1\}$. Assign the $i$-th patient with probability

$$
\p \left(T_i = 1 \mid \left\lbrace X_j\right\rbrace_{j=1}^i, \left\lbrace T_j\right\rbrace_{j=1}^{i-1}\right) = \left\{\begin{array}{ll}
\rho, & \text{Imb}^{\phi}_{i, \widehat{J}^{(b-1)}}(1) < \text{Imb}^{\phi}_{i, \widehat{J}^{(b-1)}}(0), \\
1-\rho, & \text{Imb}^{\phi}_{i, \widehat{J}^{(b-1)}}(1) > \text{Imb}^{\phi}_{i, \widehat{J}^{(b-1)}}(0), \\
0.5, & \text{Imb}^{\phi}_{i, \widehat{J}^{(b-1)}}(1) = \text{Imb}^{\phi}_{i, \widehat{J}^{(b-1)}}(0),
\end{array}\right.
$$
where $\rho\in (0.5,1)$  is the probability of a biased coin \citep{efron1971forcing}.   Intuitively, with $\rho>0.5$, we assign the $i$-th patient to the under-represented treatment arm defined in terms of covariates, leading to a smaller imbalance measure.

\item[(4)] After the first $b$ batches with $N_0+ bN$ patients have been assigned,  perform covariate selection with Lasso for $a \in \{0,1\}$ as:
$$
(\hat{\mu}^{(b)}(a), (\hat{\beta}^{(b)}(a))^\top) = \mathop{\arg\min}\limits_{(\mu, \beta)} \left\{\frac{1}{n_a^{(b)}} \sum_{1\leq i\leq N_0+bN: T_i=a} (Y_i - \mu - X_i^\top\beta)^2 + \lambda_a \sum_{j=1}^p |\beta_j| \right\}\,,
$$
where $\hat{\beta}^{(b)}(a) = (\hat{\beta}^{(b)}_1(a),\dots,\hat{\beta}^{(b)}_p(a))^\top$, $n_0^{(b)} = N_0+bN -\sum_{i=1}^{N_0+bN} T_i$, and $n_1^{(b)} = \sum_{i=1}^{N_0+bN} T_i$. Then update the set of selected covariates $\widehat{J}^{(b)}$ to be the intersection of the supports of $\hat{\beta}^{(b)}(0)$ and $\hat{\beta}^{(b)}(1)$:  $\widehat{J}^{(b)} = J(\hat{\beta}^{(b)}(0)) \cap J(\hat{\beta}^{(b)}(1))$.

\end{itemize}

\begin{myRem}

In general, the values of $N_0$ and $N$ may not affect the asymptotic properties of {\sf ARCS}. For the implementation, we may choose a relatively large number of $N_0$ to ensure the initial estimation of $\widehat{J}^{(0)}$, e.g., $N_0 \in \lbrace 30, 40,50\rbrace$. Furthermore, the value of the batch size improves the flexibility of {\sf ARCS}.    When $N=1$, the corresponding procedure is fully adaptive and may exhibit finer finite-sample performance, though at a higher computational cost. For numerical evidence, see Section~\ref{Section: Numerical Studies} for details.

Steps (3) and (4) are the key components of {\sf ARCS}. Step (3) ensures the covariate balancing through sequential adaptive randomization.  It assigns the $i$-th patient to the treatment arm leading to a smaller imbalance measure, with a higher probability $(\rho>0.5)$. Typically, a large value of $\rho$, such as 0.85 and 0.9,  is suggested for practical implementations \citep{hu2012asymptotic}. Furthermore, $\phi(\cdot)$  in the imbalance measure can address different concerns regarding covariate balancing and improve the estimation of treatment effect \citep{ma2022new}. Typically, the first element of $\phi(\cdot)$ is chosen as a positive constant, so that the difference of the numbers of patients in the two treatment arms are considered for balancing. We demonstrate the usage of $\phi(\cdot)$ with two special cases, as given below this remark.

Step (4) selects important covariates by sequentially applying regularization methods such as Lasso. Combining steps (3) and (4), the main contribution of the {\sf ARCS} procedures is that it works with the imbalance measure $\text{Imb}_{i, \widehat{J}}^\phi$, which is defined only on the sequentially selected set of covariates $\widehat{J}$, rather than utilizing $\text{Imb}_{i}^\phi$ defined on all the $p$ covariates. Through iteratively selecting and balancing the influential covariates, theoretically we guarantee that $\widehat{J}$ is a consistent estimate of the true relevant set of covariates $J^*$ (see Theorem \ref{prop:consistency}). Thus, the {\sf ARCS} procedure asymptotically and adaptively solves the issue of over-incorporation that may arise when using all the $p$ covariates, as well as the problem of under-incorporation that could occur if we artificially decide on a few covariates to use in the procedure, as this may cause us to miss important covariates.

\end{myRem}

While {\sf ARCS} can adapt to various choices of  $\phi(\cdot)$, we illustrate its theoretical properties and empirical performance through two common examples from the literature.

\begin{case}[Balancing the Mahalanobis distance]\label{ex:M}
Suppose $\Sigma_{J^*} = \cov (X_{k,J^*})$ is known and the eigen-decomposition of the precision matrix is $\Sigma_{J^*}^{-1} = \Gamma D \Gamma^\top$. Further, let us assume $\phi(X_{k,J^*}) = (\sqrt{w_0},\sqrt{w_1} X_{k,J^*}^\top \Gamma D^{1/2})^\top$, then the corresponding imbalance measure for the first $i$ patients is 
\begin{align*}
\text{Imb}^{\phi}_{i,J^*} &= w_0(n_1-n_0)^2 \cr 
&+ w_1 \left( \sum_{1\leq k\leq i:T_k=1} X_{k,J^*} - \sum_{1\leq k\leq i:T_k=0} X_{k,J^*} \right)^\top \Sigma_{J^*}^{-1} \left( \sum_{1\leq k\leq i:T_k=1} X_{k,J^*} - \sum_{1\leq k\leq i:T_k=0} X_{k,J^*} \right),
\end{align*}
where $w_0$, $w_1$ are positive constants and $w_0+w_1=1$. The first term in $\text{Imb}^{\phi}_{i,J^*}$ retains the magnitude of the overall difference $n_1- n_0$, and thus the second term in $\text{Imb}^{\phi}_{i,J^*}$ is asymptotically proportional to the Mahalanobis distance. We refer to {\sf ARCS} with the choice of $\phi(X_{k,J^*})$ in this special case as {\sf ARCS-M}\footnote{Since $\text{Imb}^{\phi}_{i,J^*}$ is proportional to the Mahalanobis distance and $\Sigma_{J^*}$ is unknown in practice, we implement {\sf ARCS-M} by replacing step (3) of {\sf ARCS} by the {\sf ARM} procedure in \citep{qin2022adaptive}. For details, please refer to Section \ref{append:ARCS} in the Appendix.}.

\end{case}

 Mahalanobis distance is often considered in the literature for several reasons~\citep{qin2022adaptive,morgan2012rerandomization}. It is an affinely invariant measure, so the scaling of the covariates will not affect the value of the imbalance. Further, minimizing the Mahalanobis distance ensures the balance for each component in $X_{i,J^*}$. Additionally, improving balance in terms of the Mahalanobis distance usually leads to efficiency gains in estimating the treatment effect ~\citep{qin2022adaptive,morgan2012rerandomization}.  {\sf ARCS-M}  is thus considered as an example of $\phi(X_{k,J^*})$ for achieving balancing through Mahalanobis distance.

\begin{case}[Balancing the mean and the covariance]\label{ex:cov}
Let $\phi(X_{k,J^*}) = (\sqrt{w_0},  \sqrt{w_1} X_{k,J^*}^\top,\\ \sqrt{w_2} \text{vec} (X_{k,J^*}X_{k,J^*}^\top)^\top)^\top$, then the corresponding imbalance measure for the first $i$ patients is 
\begin{align*}
\text{Imb}_{i,J^*}^\phi &= w_0(n_1 - n_0)^2 + w_1 \left\lVert \sum_{1\leq k\leq i:T_k=1} X_{k,J^*} - \sum_{1\leq k\leq i:T_k=0} X_{k,J^*}\right\rVert_2^2 \\
& + w_2 \left\lVert \sum_{1\leq k\leq i:T_k=1} X_{k,J^*} X_{k,J^*}^\top - \sum_{1\leq k\leq i:T_k=0} X_{k,J^*} X_{k,J^*}^\top\right\rVert_F^2,
\end{align*}
where $w_0$, $w_1$ and $w_2$ are nonnegative constants with $w_0 + w_1 + w_2 = 1$. We refer to {\sf ARCS} with the choice of $\phi(X_{k,J^*})$ in this special case as {\sf ARCS-COV}.
\end{case}

{\sf ARCS-COV} measures imbalance through the means and covariances of the covariates. Similar to the Mahalanobis distance, including covariate means can also enhance the efficiency of estimating treatment effects \citep{LiContinuousCAR2019,ma2022new}. Practically, $Y$ may be influenced by higher-order covariate effects, and $\text{vec} (X_{k,J^*}X_{k,J^*}^\top)$ is incorporated into $\phi(X_{k,J^*})$ to account for quadratic effects. When more complex relationships between $X_{i,J^*}$ and $Y_i$ are of interest, more intricate forms of $\phi(X_{i,J^*})$, such as the kernelized imbalance measure discussed in \cite{ma2022new}, may be considered. In cases where additive relationships are assumed for $\lbrace X_{i,J^*}, Y_i \rbrace$, we illustrate an extension of {\sf ARCS} in the following remark.

\begin{myRem}\label{remark2}
An extension of the {\sf ARCS} framework is to select covariates through methods designed for sparse additive models \citep{ravikumar2009sparse}. This extension is especially beneficial if we consider a nonlinear outcome model as follows, where the performance of Lasso may not be desirable:
\begin{equation}\label{eq:additive}
Y_i = \mu(1) T_i +\mu(0) (1-T_i) + \sum_{j=1}^p \beta_j^* g_j(x_{ij}) + \varepsilon_i,
\end{equation}
where $\{g_j\}_{j=1}^p$ are component functions. Correspondingly, we modify step (4) in {\sf ARCS} to (4') as follows.
\end{myRem}

\begin{itemize}
\item[(4')] After the first $b$ batches with $N_0+ bN$ patients have been assigned, we perform covariate selection via a sparse additive model such as \cite{ravikumar2009sparse} for $a \in \{0,1\}$ as:
$$
(\hat{\mu}^{(b)}(a), (\hat{\beta}^{(b)}(a))^\top) = \mathop{\arg\min}\limits_{(\mu, \beta)} \left\{\frac{1}{n_a^{(b)}} \sum_{1\leq i\leq N_0+bN: T_i=a} (Y_i - \mu - \sum_{j=1}^p \beta_j g_j(x_{ij}))^2 + \lambda_a \sum_{j=1}^p |\beta_j| \right\}\,.
$$
where $\hat{\beta}^{(b)}(a) = (\hat{\beta}^{(b)}_1(a),\dots,\hat{\beta}^{(b)}_p(a))^\top$, $n_0^{(b)} = N_0+bN -\sum_{i=1}^{N_0+bN} T_i$, and $n_1^{(b)} = \sum_{i=1}^{N_0+bN} T_i$. Then update the set of selected covariates $\widehat{J}^{(b)}$ to be the intersection of the supports of $\hat{\beta}^{(b)}(0)$ and $\hat{\beta}^{(b)}(1)$:  $\widehat{J}^{(b)} = J(\hat{\beta}^{(b)}(0)) \cap J(\hat{\beta}^{(b)}(1))$. 
\end{itemize}

We name the procedures with step (4') {\sf ARCS-M-add} and {\sf ARCS-COV-add}. Thus far, we have finished constructing the {\sf ARCS} framework illustrated with four specific methods: {\sf ARCS-M}, {\sf ARCS-COV}, {\sf ARCS-M-add} and {\sf ARCS-COV-add}. We summarize the {\sf ARCS} procedure in Algorithm \ref{alg:ARCS} in the Appendix.

\section{Theoretical results}
\label{sec:theo}

\subsection{Properties of {\sf ARCS}}

In this section, we evaluate the properties of {\sf ARCS} (with $\phi(\cdot)$ in the two special cases) for covariate selection and covariate balancing. Recall $\Lambda_{n,J} = \sum_{i=1}^n (2T_i-1) \phi(X_{i,J})$ is the vector of imbalance measure evaluated for the selected set of covariates $J$. For $i>N_0$, let $b_i = \lfloor (i-N_0)/N \rfloor$ be the batch index before the batch in which patient $i$ is located. Because the following results are valid for any batch $b=0,1,\dots,(n-N_0)/N$, we write $\hat{\beta}^{(b)}(a)$ in short as $\hat{\beta}(a)$ and $n_a^{(b)}$ as $n_a$ for simplicity, when there is no ambiguity.


\begin{defin}[Restricted Eigenvalue (RE) condition \citep{bickel2009simultaneous}]\label{def:RE}
We say matrix $A\in \mathbb{R}^{q\times p}$ satisfies $\text{RE} (\tilde{s}, \tilde{c})$ condition with $\tilde{c}>0$ and $\tilde{s}$ an integer such that $1\leq \tilde{s} \leq p$, if the following condition holds:
$$
\kappa(\tilde{s},\tilde{c}) \coloneq \min_{\substack{J\subset\{1,\dots,p\}, \\ |J|\leq \tilde{s}}} \min_{\substack{\delta\neq 0, \\ \lVert \delta_{J^{\sf c}}\rVert_1 \leq \tilde{c}\lVert \delta_{J}\rVert_1}} \frac{\lVert A \delta\rVert_2}{\sqrt{q}\lVert \delta_{J}\rVert_2} > 0 \,.
$$
\end{defin}

\begin{defin}[Isotropic $\psi_2$ random vector with constant $\iota$ \citep{rudelson2012reconstruction}]\label{def:psi}
A random vector $u\in\mathbb{R}^p$ is called isotropic $\psi_2$ with 
 constant $\iota$ if for every $v\in\mathbb{R}^p$, $\E|u^\top v|^2 = \lVert v\rVert_2^2$, and 
$$
\lVert u^\top v \rVert_{\psi_2} \coloneq \inf\{t: \E(\exp((u^\top v)^2/t^2)) \leq 2\} \leq \iota \lVert v \rVert_2 . 
$$ 
\end{defin}

\begin{assump}
Under the outcome model (\ref{eq:outcome}),  the following conditions hold. 
\begin{enumerate}[label=(\roman*)]\label{assum}
\item\label{assum:subgaussian} 
For $a=\{0,1\}$, $\mathbb{X}(a)$ can be represented as $\mathbb{X}(a) =  \Psi_a A_a$, for some $\Psi_a \in \mathbb{R}^{n_a\times p}$, $A_a \in\mathbb{R}^{p\times p}$, where the rows of $\Psi_a$ are all isotropic $\psi_2$ random vectors with constant $\iota$, and $p^{1/2}A_a$ satisfies $\text{RE}(s, 9)$ condition. Additionally, $\lVert A_a \rVert_{1,2}$ is upper bounded by a constant.

\item\label{assum:noise} $\varepsilon_1,\dots,\varepsilon_n$ are i.i.d. copies of random variable $\varepsilon$ with mean 0 and variance $\sigma_\varepsilon^2$. Furthermore, $\lVert \varepsilon \rVert_{\psi_2}\leq L\sigma_{\varepsilon}$ holds for some positive constant $L$. 
\item\label{assum:minimal_signal} Let $\kappa(s,3)$ be the value associated with $p^{1/2}A_a$ under $\text{RE}(s,3)$. We will assume $\min_{j \in J^*} |\beta_j^*| \geq l_\beta$, where
$$ 
l_\beta \geq  \frac{8c \sigma_\varepsilon s^{1/2}}{(1-r_a)^2 \kappa^2(s,3)}\max\left\lbrace \sqrt{ \frac{(1+ \alpha) \log n_a + \log 2p }{\xi_a n_a}}, 
\frac{(1+ \alpha) \log n_a + \log 2p }{\xi_a n_a} \right\rbrace ,   \ \  a \in \lbrace 0,1 \rbrace
$$
for  some $r_0, r_1\in (0,1)$, and positive constants $c$, $\xi_0$, $\xi_1$ and $\alpha$.

\item\label{assum:moment} $\E (\lVert \phi(X_{i,J^*})\rVert_2^\gamma) = O(s_\phi^{\gamma/2})$ for some $\gamma > 2$, where $s_\phi$ is the number of dimensions of vector $\phi(X_{i,J^*})$. The nonzero eigenvalues of $\E(\phi(X_{i,J^*})\phi(X_{i,J^*})^\top)$ are bounded below by a positive constant.
\end{enumerate}
\end{assump}


\begin{myRem}
In Assumption \ref{assum}, \ref{assum:subgaussian}- \ref{assum:minimal_signal} are commonly assumed in the Lasso literature  \citep{bickel2009simultaneous,rudelson2012reconstruction}. Assumption \ref{assum:noise} roughly assumes that $\varepsilon$ follows sub-Gaussian distributions.

Assumption \ref{assum:moment} is similar to Assumptions 2 and 3 in \cite{ma2022new}, but with two key differences. First, we consider $\phi(X_{i,J^*})$ defined on an important subset $J^*$, while \cite{ma2022new} considers $\phi(X)$ defined on all covariates. In high-dimensional settings where the number of covariates $p$ can greatly exceed the sample size $n$, the dimension of $\phi(X)$ is on the order of $p$ or powers of $p$. Directly assuming $\E (\lVert \phi(X)\rVert_2^\gamma)$ to be finite, the nonzero eigenvalue of $\E(\phi(X)\phi(X)^\top)$ to be lower-bounded by a positive constant, and balancing all dimensions of $\phi(X)$ may not be reasonable. In contrast, by focusing on the important subset $J^*$, we effectively reduce the dimension of $\phi(X_{i,J^*})$ to $s_\phi$, which is at the same order of $s$ or powers of $s$, making the corresponding assumptions more reasonable. Second, instead of assuming a constant moment condition, we introduce $s_\phi$ that can slowly diverge depending on the choice of $\gamma$, allowing for greater flexibility. We discuss the implication of $s_\phi$ in Theorem \ref{thm:balance_phi}. Lastly, the minimum nonzero eigenvalue assumption could be relaxed to allow it to slowly diminish to 0, but then it would appear in the rate of $\Lambda_{n,J^*}$. For a cleaner presentation, we will maintain the current Assumption \ref{assum:moment}.
\end{myRem}


\begin{prop}[Covariate selection consistency]\label{prop:consistency}
Under Assumption \ref{assum} \ref{assum:subgaussian}-\ref{assum:minimal_signal} with the defined constants, for $a\in\lbrace0,1 \rbrace$, let the penalty parameter satisfy
$$
\lambda_a = 2c \sigma_\varepsilon \max\left( \sqrt{\frac{(1+\alpha)\log n_a + \log (2p)}{\xi_a n_a}}, \frac{(1+\alpha)\log n_a + \log (2p)}{\xi_a n_a} \right)\,,
$$
Then, when $n_a>C\log(p)$, with probability at least $1-n_0^{-1-\alpha} - n_1^{-1-\alpha} - \exp\lbrace-C_0 n_0\rbrace - \exp\lbrace-C_1 n_1  \rbrace$, 
\begin{equation}\label{eq:consistency}
\widehat{J} \coloneq \bigcap_{a\in \lbrace 0, 1 \rbrace} \left\{ j: |\hat{\beta}_j(a)| \geq l_\beta \right\} = J^* \, ,
\end{equation}
where $C$, $C_0$ and $C_1$ are some positive constants depending on $\iota$, $r_0$ and $r_1$ in the assumptions.
\end{prop}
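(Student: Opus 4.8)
The idea is to analyse the two Lasso fits, for $a=0$ and $a=1$, separately: for each arm I would show that on a high‑probability event the thresholded estimate $\{j:|\hat\beta_j(a)|\ge l_\beta\}$ equals $J^*$ exactly, and then intersect, so that on the intersection of the two arms' events $\widehat J=J^*$. A union bound over the two arms' failure events then produces the stated probability $1-n_0^{-1-\alpha}-n_1^{-1-\alpha}-\exp\{-C_0 n_0\}-\exp\{-C_1 n_1\}$. Everything is carried out conditionally on the structural decomposition $\mathbb X(a)=\Psi_a A_a$ furnished by Assumption \ref{assum}\ref{assum:subgaussian}; this is precisely the device that lets one bypass the fact that $\mathbb X(a)$ is an adaptively selected, hence a priori non‑i.i.d., subsample of the covariates and reduce matters to a standard high‑dimensional regression analysis. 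The unpenalized intercept $\mu(a)$ is profiled out, which turns each program into a Lasso on centered responses and design and does not affect the bounds below.

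\textbf{Two events.} Fix $a$. The first ingredient is a restricted‑eigenvalue bound for the empirical Gram matrix. Since the rows of $\Psi_a$ are isotropic $\psi_2$ with constant $\iota$, $p^{1/2}A_a$ satisfies $\text{RE}(s,9)$, and $\lVert A_a\rVert_{1,2}$ is bounded, the reconstruction theorem of \cite{rudelson2012reconstruction} applies and yields an event $\mathcal E_a^{\text{RE}}$, of probability at least $1-\exp\{-C_a n_a\}$ provided $n_a>C\log p$, on which $n_a^{-1/2}\mathbb X(a)$ satisfies $\text{RE}(s,3)$ with curvature at least $(1-r_a)^2\kappa^2(s,3)$ --- the shrinkage of the cone constant from $9$ to $3$ absorbing the usual multiplicative blow‑up in the population‑to‑sample transfer, with $r_a\in(0,1)$ the contraction tolerance (smaller $r_a$ forcing a smaller $C_a$). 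The second ingredient controls the score: by Assumption \ref{assum}\ref{assum:subgaussian} the columns of $\mathbb X(a)$ are sub‑Gaussian with norm bounded through $\iota\lVert A_a\rVert_{1,2}$, and by Assumption \ref{assum}\ref{assum:noise} $\varepsilon$ is sub‑Gaussian, so each coordinate of $n_a^{-1}\mathbb X(a)^\top\varepsilon$ is sub‑exponential; Bernstein's inequality together with a union bound over the $2p$ coordinate sign patterns gives an event $\mathcal E_a^{\text{noise}}$, of probability at least $1-n_a^{-1-\alpha}$, on which $\lVert n_a^{-1}\mathbb X(a)^\top\varepsilon\rVert_\infty\le\lambda_a/2$. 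The two‑regime form $\max\{\sqrt{\cdot},\cdot\}$ of $\lambda_a$, the exponent $(1+\alpha)\log n_a+\log(2p)$, and the constant $\xi_a$ (which collects the sub‑Gaussian norms of the design columns and of the noise) are exactly what this sub‑exponential tail bound delivers after the union bound.

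\textbf{From estimation error to support recovery.} On $\mathcal E_a^{\text{RE}}\cap\mathcal E_a^{\text{noise}}$ the choice $\lambda_a\ge 2\lVert n_a^{-1}\mathbb X(a)^\top\varepsilon\rVert_\infty$ places $\hat\beta(a)-\beta^*$ in the cone $\lVert\delta_{(J^*)^{\sf c}}\rVert_1\le 3\lVert\delta_{J^*}\rVert_1$, so the standard Lasso basic inequality together with $\text{RE}(s,3)$ \citep{bickel2009simultaneous} gives
\[
\lVert\hat\beta(a)-\beta^*\rVert_\infty\;\le\;\lVert\hat\beta(a)-\beta^*\rVert_2\;\le\;\frac{C\, s^{1/2}\lambda_a}{(1-r_a)^2\kappa^2(s,3)}.
\]
Substituting $\lambda_a=2c\sigma_\varepsilon\max\{\cdot\}$, the right‑hand side is controlled by $l_\beta$ up to the factor absorbed into the constant $8$ in Assumption \ref{assum}\ref{assum:minimal_signal}; since that beta‑min condition forces $\min_{j\in J^*}|\beta^*_j|\ge l_\beta$ and the per‑coordinate estimation error is correspondingly smaller than the gap between the signal size and the threshold, every coordinate in $J^*$ clears the threshold $l_\beta$ while every coordinate outside $J^*$ --- having zero true value --- falls short of it. Hence $\{j:|\hat\beta_j(a)|\ge l_\beta\}=J^*$ for each $a$, and intersecting over $a\in\{0,1\}$ and adding the failure probabilities of $\mathcal E_0^{\text{RE}},\mathcal E_0^{\text{noise}},\mathcal E_1^{\text{RE}},\mathcal E_1^{\text{noise}}$ yields (\ref{eq:consistency}) with the claimed probability.

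\textbf{Main obstacle.} The delicate step is the empirical restricted‑eigenvalue transfer: obtaining $\text{RE}(s,3)$ for $n_a^{-1/2}\mathbb X(a)$ with the sharp $(1-r_a)^2$ curvature loss and an exponentially small failure probability is where the requirement $n_a>C\log p$ and the machinery for anisotropic sub‑Gaussian designs enter, and it is also the step that quietly legitimizes working with the adaptively chosen subsample --- everything downstream is the textbook Lasso support‑recovery argument, valid only because Assumption \ref{assum}\ref{assum:subgaussian} guarantees that $\mathbb X(a)$ still behaves like a sub‑Gaussian ensemble. A subsidiary but necessary chore is the constant bookkeeping that keeps the estimation‑error bound, the penalty level $\lambda_a$, and the threshold $l_\beta$ mutually consistent simultaneously for both arms.
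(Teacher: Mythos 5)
Your proposal follows essentially the same route as the paper's proof: a Rudelson--Zhou transfer of the population $\text{RE}(s,9)$ condition to an empirical $\text{RE}(s,3)$ condition for $\mathbb{X}(a)$ with exponentially small failure probability, a Bernstein-plus-union-bound event controlling the score $\lVert n_a^{-1}\mathbb{X}(a)^\top\varepsilon\rVert_\infty$ with probability $1-n_a^{-(1+\alpha)}$ (which is exactly what produces the stated form of $\lambda_a$), the Bickel--Ritov--Tsybakov cone argument on the intersection of these events, thresholding at $l_\beta$ under the beta-min condition, and a union bound over the two arms. The only cosmetic difference is that you pass to the conclusion through an $\ell_2/\ell_\infty$ error bound while the paper uses the $\ell_1$ bound $\lVert\hat\beta(a)-\beta^*\rVert_1\le l_\beta$; both rest on the same two lemmas and the same constant bookkeeping.
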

Note that, (\ref{eq:consistency}) implies that if $p = o(e^{n_a})$ for $a \in 
\lbrace0,1 \rbrace$, then $\widehat{J}^{((n-N_0)/N)} = J^*$ almost surely as $n_0,n_1$ diverge, indicating the consistent selection of the significant covariates  from step (4) of {\sf ARCS}.

\begin{thm}\label{thm:balance_phi}

Suppose Assumptions \ref{assum} \ref{assum:subgaussian} - \ref{assum:moment} hold and  {\sf ARCS} is used, then 
$\E(\lVert \Lambda_{n,J^*}\rVert_2^2) = O(n^{\frac{1}{\gamma-1}}s_\phi^{\frac{2\gamma-3}{\gamma-1}}) $ and $\Lambda_{n,J^*} = O_p(n^{\frac{1}{2(\gamma-1)}}s_\phi^{\frac{2\gamma-3}{2\gamma-2}})$. Further, if $\E (\lVert \phi(X_{i,J^*})\rVert_2^\gamma) = O(s_\phi^{\frac{\gamma}{2}})$ for all $\gamma>2$, then $\Lambda_{n,J^*} = O_p(n^\epsilon s^{1-\epsilon}_\phi)$ for any small $\epsilon>0$.

\end{thm}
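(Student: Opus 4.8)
The plan is to prove the theorem in two stages. First, use the selection consistency of Proposition~\ref{prop:consistency} to reduce, on a high-probability event, to the situation in which {\sf ARCS} is eventually balancing exactly the coordinates in $J^*$. Second, run a drift (Lyapunov) argument on $V_i := \lVert \Lambda_{i,J^*}\rVert_2^2$ that is robust to the heavy tails permitted by Assumption~\ref{assum}~\ref{assum:moment}. For the reduction: by Proposition~\ref{prop:consistency}, $\widehat J^{(b)} = J^*$ for every batch once that batch has more than $C\log p$ units in each arm, and since the biased coin forces $\p(T_i=1\mid\mathcal F_{i-1})\in[1-\rho,\rho]$ the allocation fractions in both arms are $\Theta(n)$ with high probability, so this holds for all $b$ past an index $i_0 = N_0 + b_0 N = O(N_0+\log p)$ on an event $\mathcal E_n$ with $\p(\mathcal E_n^{\sf c}) \le n_0^{-1-\alpha} + n_1^{-1-\alpha} + e^{-C_0 n_0} + e^{-C_1 n_1}$. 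Writing $\Lambda_{n,J^*} = \Lambda_{i_0,J^*} + \sum_{i=i_0+1}^n (2T_i-1)\phi(X_{i,J^*})$, on $\mathcal E_n$ the assignment of each $i>i_0$ is the biased coin minimizing $\lVert \Lambda_{i-1,J^*}+(2T_i-1)\phi(X_{i,J^*})\rVert_2^2$ (batch boundaries being irrelevant to this per-unit recursion), which is exactly the structure used below; the initial term $\Lambda_{i_0,J^*}$, with $\lVert\Lambda_{i_0,J^*}\rVert_2\le\sum_{i\le i_0}\lVert\phi(X_{i,J^*})\rVert_2$ of controlled moments by Assumption~\ref{assum}~\ref{assum:moment}, is erased by the drift within a number of further steps of smaller order than $n$ in the operative regime; and on $\mathcal E_n^{\sf c}$ one bounds $\E[\lVert\Lambda_{n,J^*}\rVert_2^2\,\mathbf 1_{\mathcal E_n^{\sf c}}]$ by Cauchy--Schwarz using $\lVert\Lambda_{n,J^*}\rVert_2\le\sum_{i\le n}\lVert\phi(X_{i,J^*})\rVert_2$ and the exponentially small $\p(\mathcal E_n^{\sf c})$, which is negligible.

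For the drift step, abbreviate $\phi_i=\phi(X_{i,J^*})$ and $h_i=\lVert\phi_i\rVert_2$. Expanding the square and using that, given $\mathcal F_{i-1}$ and $X_i$, the biased coin takes the imbalance-reducing arm with probability $\rho$, one obtains the exact identity $\E[V_i\mid\mathcal F_{i-1}] = V_{i-1} - 2(2\rho-1)\,\E_{X_i}\bigl|\langle\Lambda_{i-1,J^*},\phi_i\rangle\bigr| + \E h_i^2$, with $\E h_i^2 = O(s_\phi)$ by Jensen's inequality and Assumption~\ref{assum}~\ref{assum:moment}. The quantitative core is a lower bound for the middle term. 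Because $\Lambda_{i-1,J^*}$ lies almost surely in the range of $\E[\phi_i\phi_i^\top]$, the minimum-nonzero-eigenvalue part of Assumption~\ref{assum}~\ref{assum:moment} gives $\E_{X_i}\langle v,\phi_i\rangle^2\gtrsim\lVert v\rVert_2^2$, while Cauchy--Schwarz and the moment bound give $\E_{X_i}\bigl|\langle v,\phi_i\rangle\bigr|^\gamma\le\lVert v\rVert_2^\gamma\,\E h_i^\gamma = O(\lVert v\rVert_2^\gamma s_\phi^{\gamma/2})$; interpolating --- by log-convexity of $t\mapsto\log\E\bigl|\langle v,\phi_i\rangle\bigr|^t$, equivalently by truncating $\phi_i$ at level $M\asymp s_\phi^{\gamma/(2(\gamma-2))}$ and using the sharpened second moment on $\{h_i\le M\}$ --- yields $\E_{X_i}\bigl|\langle v,\phi_i\rangle\bigr|\gtrsim\lVert v\rVert_2\,s_\phi^{-q}$ with $q=\gamma/(2(\gamma-2))$. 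Combining, $\E[V_i\mid\mathcal F_{i-1}]\le V_{i-1} - c_1 s_\phi^{-q}\sqrt{V_{i-1}} + c_2 s_\phi$.

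The last step, which I expect to be the main obstacle, is to turn this $\sqrt{V}$-drift into the claimed second-moment rate. One cannot simply iterate on $\E V_n$: under only a $\gamma$-th moment on $\phi_i$ the quantity $\lVert\Lambda_{i,J^*}\rVert_2$ need not concentrate, so $\E V_n$ can dominate $(\E\sqrt{V_n})^2$ and the $\sqrt{V_{i-1}}$ drift does not close a recursion for $\E V_n$. The device is a second truncation at a level $M_n$ growing with $n$: on the steps with $h_i\le M_n$ the increments are bounded by $M_n^2$ and the drift argument iterates to control the bulk of $V_n$, while the rare steps with $h_i>M_n$ --- there are $O_p(n\,s_\phi^{\gamma/2}M_n^{-\gamma})$ of them, with total squared contribution $O(n\,s_\phi^{\gamma/2}M_n^{2-\gamma} + n^2 s_\phi^{\gamma}M_n^{2-2\gamma})$ by the moment bound and the independence of the $X_i$ --- are absorbed crudely; optimizing $M_n$ (a power of $n$ times a power of $s_\phi$) to balance the two sources gives $\E\lVert\Lambda_{n,J^*}\rVert_2^2 = O\bigl(n^{1/(\gamma-1)}s_\phi^{(2\gamma-3)/(\gamma-1)}\bigr)$, and then $\lVert\Lambda_{n,J^*}\rVert_2 = O_p\bigl(n^{1/(2\gamma-2)}s_\phi^{(2\gamma-3)/(2\gamma-2)}\bigr)$ by Markov's inequality. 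Finally, if $\E h_i^\gamma = O(s_\phi^{\gamma/2})$ holds for every $\gamma>2$, one is free to pick $\gamma$ with $1/(2\gamma-2)=\epsilon$, so that the two exponents become $\epsilon$ and $1-\epsilon$, giving $\Lambda_{n,J^*} = O_p(n^{\epsilon}s_\phi^{1-\epsilon})$. Besides this heavy-tailed drift analysis, the only other technical point is the Stage-1 control of the random burn-in index $b_0$ and the implicit requirement that $\log p$ be small relative to the target rate.
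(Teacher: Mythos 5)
Your Stage-1 reduction (selection consistency plus a burn-in index, with the complement event killed by Cauchy--Schwarz) matches the paper's Step 1--2 in spirit, and your exact one-step identity for $V_i$ is the same drift the paper uses. The genuine gap is in the size of the drift term and in the final assembly. Your interpolation gives only $\E\bigl|\langle v,\phi_i\rangle\bigr|\gtrsim \lVert v\rVert_2\, s_\phi^{-q}$ with $q=\gamma/(2(\gamma-2))$, whereas the paper's proof (following Theorem 3.1 of \cite{ma2022new}) uses the constant-order bound $\E\bigl(|\Lambda_{n,J^*}^\top\phi(X_{n+1,J^*})|\mid \Lambda_{n,J^*}\bigr)\ge c_0\lVert\Lambda_{n,J^*}\rVert_2$, and the claimed exponents are not reachable from your weaker inequality: with drift $-c_1 s_\phi^{-q}\sqrt{V}+c_2 s_\phi$ the equilibrium scale of $V$ is already $s_\phi^{2+\gamma/(\gamma-2)}$, and running the standard recursion with this drift (the $\gamma$-th power recursion plus H\"older interpolation, i.e.\ the paper's own machinery) yields at best $\E\lVert\Lambda_{n,J^*}\rVert_2^2=O(n^{1/(\gamma-1)}s_\phi^{3})$ --- both strictly worse than the stated $s_\phi^{(2\gamma-3)/(\gamma-1)}<s_\phi^{2}$. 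So either you must justify a constant-order lower bound on the conditional first absolute moment, uniformly over directions (which, as your own worst-case interpolation shows, requires more than a bound $\E|\langle v,\phi_i\rangle|^\gamma\le\lVert v\rVert_2^\gamma\E\lVert\phi_i\rVert_2^\gamma=O(s_\phi^{\gamma/2})$, e.g.\ a constant-order $\gamma$-moment bound for linear forms or an anti-concentration condition), or the theorem's rate does not follow from what you have established.

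Relatedly, your last step is asserted rather than proved, and it is a different mechanism from the paper's. You anticipate correctly that one cannot iterate directly on $\E V_n$, but you replace the needed argument with a truncation of $\phi_i$ at a level $M_n$ and the claim that ``optimizing $M_n$ gives'' the bound; no bound for the truncated part as a function of $M_n$ is derived, and the crude error terms you list do not visibly balance to the claimed exponents (indeed a single truncated increment $M_n^2$ can exceed the target bound when $n\gg s_\phi^{\gamma}$). The paper closes the loop without any truncation: it applies the drift at the $\gamma$-th power, using that $h(x)=-c_1x^{\gamma-1}+c'_\gamma s_\phi x^{\gamma-2}$ has maximum of order $s_\phi^{\gamma-1}$, to get $\E\lVert\Lambda_{m_0:n,J^*}\rVert_2^\gamma\le C_\gamma n s_\phi^{\gamma-1}$; it then defines $m'$ as the last index at which the second-moment drift is nonnegative, so that $\E\lVert\Lambda_{m_0:m',J^*}\rVert_2=O(s_\phi)$ and $\E V$ is nonincreasing afterwards; and it interpolates by H\"older between the $L^\gamma$ bound and this $L^1$ bound to obtain $\E\lVert\Lambda_{n,J^*}\rVert_2^2=O(n^{1/(\gamma-1)}s_\phi^{(2\gamma-3)/(\gamma-1)})$. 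If you adopt that device you still need the constant-order drift to recover the stated $s_\phi$-power, which brings you back to the gap above; the $O_p(n^\epsilon s_\phi^{1-\epsilon})$ corollary via choosing $\gamma$ with $1/(2\gamma-2)=\epsilon$ is fine once the main bound is in place.
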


Since $\gamma>2$, Theorem \ref{thm:balance_phi} indicates that the convergence rate of the imbalance measure under {\sf ARCS} is faster than that of many other CAR procedures, such as $O_p(\sqrt{n})$ under complete randomization. Additionally, with simple algebraic calculations, $\Lambda_{n,J^*} = o_p(\sqrt{n})$ when $s_\phi = o(n^{\frac{\gamma-2}{2\gamma-3}})$. This allows $s_\phi$ to diverge with a rate defined by $\gamma$. Furthermore, if relaxing Assumption \ref{assum} \ref{assum:moment} is of concern, by allowing the minimum nonzero eigenvalue of $\E(\phi(X_{i,J^*})\phi(X_{i,J^*})^\top)$ to diminish with a rate $l_n$ to 0 (instead of lower bounded by a positive constant as in Assumption \ref{assum} \ref{assum:moment}), $\Lambda_{n,J^*} = O_p(n^{\frac{1}{2(\gamma-1)}}s_\phi^{\frac{2\gamma-3}{2\gamma-2}}l_n^{-\frac{\gamma-2}{2\gamma-2}} )$.   

\begin{cor}[{\sf ARCS-M}]\label{cor:M}
Suppose the conditions of Theorem~\ref{thm:balance_phi} hold. If {\sf ARCS-M} is used, then
\begin{itemize}
\item[(a).] $\sum_{i=1}^n (2T_i-1) = O_p(n^{\frac{1}{2(\gamma-1)}}s_\phi^{\frac{2\gamma-3}{2\gamma-2}})$,
\item[(b).] $(\bar{X}_{n,J^*}(1) - \bar{X}_{n,J^*}(0))^\top \Sigma_{J^*}^{-1} (\bar{X}_{n,J^*}(1) - \bar{X}_{n,J^*}(0))=O_p(n^{-\frac{2\gamma-3}{\gamma-1}}s_\phi^{\frac{3\gamma-4}{\gamma-1}})$, where $\bar{X}_{n,J^*}(a) = n_a^{-1} \sum_{1\leq i\leq n:T_i=a} X_{i,J^*}$ for $a\in \lbrace0,1 \rbrace$.
\end{itemize}
\end{cor}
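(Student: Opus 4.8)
Everything follows from the explicit form of $\lVert\Lambda_{n,J^*}\rVert_2^2$ for {\sf ARCS-M} together with Theorem~\ref{thm:balance_phi}. Writing $S_n := \sum_{i=1}^n (2T_i-1) X_{i,J^*}$, the choice $\phi(X_{k,J^*})=(\sqrt{w_0},\sqrt{w_1}X_{k,J^*}^\top\Gamma D^{1/2})^\top$ with $\Gamma D\Gamma^\top=\Sigma_{J^*}^{-1}$ gives, as recorded in Special case~\ref{ex:M},
\[
\lVert\Lambda_{n,J^*}\rVert_2^2 \;=\; w_0\Big(\textstyle\sum_{i=1}^n(2T_i-1)\Big)^2 \;+\; w_1\, S_n^\top\Sigma_{J^*}^{-1}S_n ,
\]
a sum of two nonnegative terms. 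Theorem~\ref{thm:balance_phi} states $\lVert\Lambda_{n,J^*}\rVert_2 = O_p\!\big(n^{1/(2(\gamma-1))}s_\phi^{(2\gamma-3)/(2\gamma-2)}\big)$, hence $\lVert\Lambda_{n,J^*}\rVert_2^2 = O_p\!\big(n^{1/(\gamma-1)}s_\phi^{(2\gamma-3)/(\gamma-1)}\big)$. Part~(a) is then immediate: $\big(\sum_{i=1}^n(2T_i-1)\big)^2 \le w_0^{-1}\lVert\Lambda_{n,J^*}\rVert_2^2$, so taking square roots yields $\sum_{i=1}^n(2T_i-1)=O_p\!\big(n^{1/(2(\gamma-1))}s_\phi^{(2\gamma-3)/(2\gamma-2)}\big)$; and the same decomposition shows $S_n^\top\Sigma_{J^*}^{-1}S_n=O_p\!\big(n^{1/(\gamma-1)}s_\phi^{(2\gamma-3)/(\gamma-1)}\big)$.

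For part~(b), I would relate $\Delta := \bar X_{n,J^*}(1)-\bar X_{n,J^*}(0)$ to $S_n$ and to the count imbalance. From $S_n = n_1\bar X_{n,J^*}(1) - n_0\bar X_{n,J^*}(0)$ and $n=n_0+n_1$ one has the deterministic identity
\[
n\,\Delta \;=\; 2 S_n \;-\; (n_1-n_0)\big(\bar X_{n,J^*}(1)+\bar X_{n,J^*}(0)\big).
\]
Applying $(u+v)^\top M (u+v)\le 2u^\top M u + 2 v^\top M v$ (valid for positive semidefinite $M$) with $M=\Sigma_{J^*}^{-1}$ then gives
\[
\Delta^\top\Sigma_{J^*}^{-1}\Delta \;\le\; \frac{8}{n^2}\,S_n^\top\Sigma_{J^*}^{-1}S_n \;+\; \frac{2(n_1-n_0)^2}{n^2}\,\big(\bar X_{n,J^*}(1)+\bar X_{n,J^*}(0)\big)^\top\Sigma_{J^*}^{-1}\big(\bar X_{n,J^*}(1)+\bar X_{n,J^*}(0)\big).
\]

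It remains to control the three random factors on the right. By part~(a) and the displayed decomposition, $S_n^\top\Sigma_{J^*}^{-1}S_n$ and $(n_1-n_0)^2$ are both $O_p\!\big(n^{1/(\gamma-1)}s_\phi^{(2\gamma-3)/(\gamma-1)}\big)$. For the quadratic form in the sample means, I would use that for {\sf ARCS-M}, $\lVert\phi(X_{i,J^*})\rVert_2^2 = w_0 + w_1 X_{i,J^*}^\top\Sigma_{J^*}^{-1}X_{i,J^*}$, so Assumption~\ref{assum}\ref{assum:moment} and Jensen's inequality (as $\gamma/2>1$) yield $\E\big[X_{i,J^*}^\top\Sigma_{J^*}^{-1}X_{i,J^*}\big]=O(s_\phi)$; combining the triangle inequality for $\Sigma_{J^*}^{-1/2}\bar X_{n,J^*}(a)=n_a^{-1}\sum_{i:T_i=a}\Sigma_{J^*}^{-1/2}X_{i,J^*}$, Cauchy--Schwarz, Markov's inequality, and the event $\{n_0\wedge n_1\ge cn\}$ (whose probability tends to one, since part~(a) makes $|n_1-n_0|=o_p(n)$) gives $\bar X_{n,J^*}(a)^\top\Sigma_{J^*}^{-1}\bar X_{n,J^*}(a)=O_p(s_\phi)$ for $a\in\{0,1\}$, hence the third factor is $O_p(s_\phi)$. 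Substituting, the first summand is $O_p\!\big(n^{-2+1/(\gamma-1)}s_\phi^{(2\gamma-3)/(\gamma-1)}\big)$ and the second is $O_p\!\big(n^{-2+1/(\gamma-1)}s_\phi^{(2\gamma-3)/(\gamma-1)+1}\big)$; since $-2+\tfrac{1}{\gamma-1}=-\tfrac{2\gamma-3}{\gamma-1}$ and $\tfrac{2\gamma-3}{\gamma-1}+1=\tfrac{3\gamma-4}{\gamma-1}$, and $s_\phi\ge 1$ makes the larger $s_\phi$-exponent dominate, we obtain $\Delta^\top\Sigma_{J^*}^{-1}\Delta=O_p\!\big(n^{-(2\gamma-3)/(\gamma-1)}s_\phi^{(3\gamma-4)/(\gamma-1)}\big)$, which is exactly~(b).

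The only steps requiring genuine care are the bound $\bar X_{n,J^*}(a)^\top\Sigma_{J^*}^{-1}\bar X_{n,J^*}(a)=O_p(s_\phi)$ with a diverging $s_\phi$ --- this is where Assumption~\ref{assum}\ref{assum:moment} enters, in the $\Sigma_{J^*}^{-1}$-weighted form specific to {\sf ARCS-M}, and where the $o_p(n)$ control of the treatment-size imbalance from part~(a) is needed to guarantee $n_0,n_1\asymp n$ --- together with keeping track of the $n$- and $s_\phi$-exponents when combining the two summands. Everything else is the deterministic identity above applied on top of Theorem~\ref{thm:balance_phi}.
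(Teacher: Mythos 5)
Your proof is correct and follows essentially the same route as the paper: the paper's own argument simply notes that for this choice of $\phi$, $\lVert\Lambda_{n,J^*}\rVert_2^2 = w_0(n_1-n_0)^2 + w_1\,S_n^\top\Sigma_{J^*}^{-1}S_n$ and invokes Theorem~\ref{thm:balance_phi}, while you additionally spell out the steps it leaves implicit for (b) --- the identity $n\Delta = 2S_n-(n_1-n_0)(\bar X_{n,J^*}(1)+\bar X_{n,J^*}(0))$ and the $O_p(s_\phi)$ control of the sample-mean quadratic form, which is exactly what produces the extra factor of $s_\phi$ in the exponent $\tfrac{3\gamma-4}{\gamma-1}$. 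The only (very mild) caveat is that your step $n_0\wedge n_1\gtrsim n$ uses $|n_1-n_0|=o_p(n)$, which from part (a) requires $s_\phi=o(n)$; this is implicit in the paper's setting as well, so it is not a substantive gap.
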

In particular, with a diverging $s_\phi = o(n^{\frac{\gamma-2}{2\gamma-3}})$, $\sum_{i=1}^n (2T_i-1)=o_p(\sqrt{n})$ matching with the results in \cite{ma2022new}; and $(\bar{X}_{n,J^*}(1) - \bar{X}_{n,J^*}(0))^\top \Sigma_{J^*}^{-1} (\bar{X}_{n,J^*}(1) - \bar{X}_{n,J^*}(0)) = o_p(n^{-\frac{\gamma-1}{2\gamma-3}})$. If we assume $s_\phi=O(1)$, $\sum_{i=1}^n (2T_i-1)=O_p(n^{\frac{1}{2(\gamma-1)}})$ and $(\bar{X}_{n,J^*}(1) - \bar{X}_{n,J^*}(0))^\top \Sigma_{J^*}^{-1}(\bar{X}_{n,J^*}(1) - \bar{X}_{n,J^*}(0))=O_p(n^{-\frac{2\gamma-3}{\gamma-1}})$.  

Furthermore, since the imbalance measure considered in {\sf ARCS-M} is proportional to the Mahalanobis distance, we would like to provide a discussion about its comparison with \cite{qin2022adaptive}, where the {\sf ARM} method is developed to uniquely minimize the Mahalanobis distance, assuming $X$ follows a Gaussian distribution. Corollary \ref{cor:M} generalizes {\sf ARM} by extending it to sub-Gaussian distributions. Note that {\sf ARM} in \cite{qin2022adaptive} works with only finite dimension assuming $s_\phi=O(1)$. Thus, when $\E (\lVert \phi(X_{i,J^*})\rVert_2^\gamma) = O(s_\phi^{\frac{\gamma}{2}})$ for all $\gamma > 2$, the rate in Corollary \ref{cor:M}(b) degenerates to  that of \cite{qin2022adaptive},  that is, $O_p(n^{-2})$. This is much faster than the rate for complete randomization, i.e., $O_p(n^{-1})$, when $s_{\phi}=O(1)$.

The next corollary is an immediate consequence of Theorem \ref{thm:balance_phi}.

\begin{cor}[{\sf ARCS-COV}]\label{cor:cov} Assume the conditions of Theorem~\ref{thm:balance_phi} hold,  with {\sf ARCS-COV},
\begin{itemize}
\item[(a).] $\sum_{i=1}^n (2T_i-1) = O_p(n^{\frac{1}{2(\gamma-1)}}s_\phi^{\frac{2\gamma-3}{2\gamma-2}})$,
\item[(b).] $\sum_{i=1}^n (2T_i-1) X_{ij} = O_p(n^{\frac{1}{2(\gamma-1)}}s_\phi^{\frac{2\gamma-3}{2\gamma-2}})$ for $j\in J^*$,
\item[(c).] $\sum_{i=1}^n (2T_i-1) X_{ij} X_{ij'} = O_p(n^{\frac{1}{2(\gamma-1)}}s_\phi^{\frac{2\gamma-3}{2\gamma-2}})$ for $j,j' \in J^*$.
\end{itemize}

\end{cor}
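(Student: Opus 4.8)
The plan is to read each of (a)--(c) directly off the bound on $\lVert\Lambda_{n,J^*}\rVert_2$ furnished by Theorem~\ref{thm:balance_phi}, specialized to the map $\phi$ of Special case~\ref{ex:cov}. For {\sf ARCS-COV} the imbalance vector $\Lambda_{n,J^*}=\sum_{i=1}^n(2T_i-1)\phi(X_{i,J^*})$ splits into three blocks,
$$
\Lambda_{n,J^*}=\Big(\sqrt{w_0}\,\Lambda^{(0)}_n,\ \sqrt{w_1}\,(\Lambda^{(1)}_n)^\top,\ \sqrt{w_2}\,(\Lambda^{(2)}_n)^\top\Big)^\top,
$$
where $\Lambda^{(0)}_n=\sum_{i=1}^n(2T_i-1)$, $\Lambda^{(1)}_n=\sum_{i=1}^n(2T_i-1)X_{i,J^*}$, and $\Lambda^{(2)}_n=\sum_{i=1}^n(2T_i-1)\text{vec}(X_{i,J^*}X_{i,J^*}^\top)$. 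Since every coordinate of a vector is bounded in absolute value by its $L_2$ norm, one has, whenever the corresponding weight is positive, $|\Lambda^{(0)}_n|\le w_0^{-1/2}\lVert\Lambda_{n,J^*}\rVert_2$, $|(\Lambda^{(1)}_n)_j|\le w_1^{-1/2}\lVert\Lambda_{n,J^*}\rVert_2$ for each $j\in J^*$, and $|(\Lambda^{(2)}_n)_{jj'}|\le w_2^{-1/2}\lVert\Lambda_{n,J^*}\rVert_2$ for each $j,j'\in J^*$; note that $(\Lambda^{(1)}_n)_j=\sum_{i=1}^n(2T_i-1)X_{ij}$ and $(\Lambda^{(2)}_n)_{jj'}=\sum_{i=1}^n(2T_i-1)X_{ij}X_{ij'}$, which are exactly the quantities in (b) and (c).

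Next I would check that the hypotheses of Theorem~\ref{thm:balance_phi} are indeed met by this particular $\phi$ under the blanket ``conditions of Theorem~\ref{thm:balance_phi}''. The only item specific to the choice of $\phi$ is Assumption~\ref{assum}~\ref{assum:moment}: here $s_\phi=1+s+s^2=\Theta(s^2)$, and since $\lVert\phi(X_{i,J^*})\rVert_2^2=w_0+w_1\lVert X_{i,J^*}\rVert_2^2+w_2\lVert X_{i,J^*}\rVert_2^4$, the moment bound $\E(\lVert\phi(X_{i,J^*})\rVert_2^\gamma)=O(s_\phi^{\gamma/2})$ reduces to a moment bound on $\lVert X_{i,J^*}\rVert_2$ of the same order, while the lower bound on the smallest nonzero eigenvalue of $\E(\phi(X_{i,J^*})\phi(X_{i,J^*})^\top)$ is assumed directly. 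Invoking Theorem~\ref{thm:balance_phi} then gives $\lVert\Lambda_{n,J^*}\rVert_2=O_p\big(n^{\frac{1}{2(\gamma-1)}}s_\phi^{\frac{2\gamma-3}{2\gamma-2}}\big)$, and combining this with the three coordinate bounds above yields (a), (b) and (c) — with the stated $O_p$ rate holding simultaneously over all $j,j'\in J^*$, since every coordinate is dominated by the single quantity $\lVert\Lambda_{n,J^*}\rVert_2$.

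As the text already remarks, this is an immediate consequence of Theorem~\ref{thm:balance_phi} (whose proof in turn incorporates the selection consistency of Proposition~\ref{prop:consistency}, so that {\sf ARCS} effectively balances $\phi(X_{i,J^*})$ on the true support $J^*$), and there is no real obstacle. The only points needing a sentence of care are the bookkeeping $s_\phi=\Theta(s^2)$ for this $\phi$, and the treatment of degenerate weights: when some $w_\ell=0$ the corresponding block is absent from $\phi$, so one either reads (a)--(c) under $w_0,w_1,w_2>0$, or appends the missing coordinate with an arbitrarily small positive weight, which leaves the rate unchanged.
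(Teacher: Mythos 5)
Your proposal is correct and follows essentially the same route as the paper, which treats the corollary as an immediate consequence of Theorem~\ref{thm:balance_phi}: each of the quantities in (a)--(c) is (up to the factor $w_\ell^{-1/2}$) a coordinate of $\Lambda_{n,J^*}$ for the $\phi$ of Special case~\ref{ex:cov}, hence bounded by $\lVert\Lambda_{n,J^*}\rVert_2=O_p\big(n^{\frac{1}{2(\gamma-1)}}s_\phi^{\frac{2\gamma-3}{2\gamma-2}}\big)$, exactly as in the paper's appendix argument for the analogous {\sf ARCS-M} corollary. Your added remarks on $s_\phi=\Theta(s^2)$ and on reading the statement under strictly positive weights are sensible bookkeeping that the paper leaves implicit.
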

This implies that the convergence rates in Corollary \ref{cor:cov} (a)-(c) are all $o_p(\sqrt{n})$ when $s_\phi = o(n^{\frac{\gamma-2}{2\gamma-3}})$. This matches the results in \cite{ma2022new} with a more reasonable moment condition in high-dimensional settings, and is better than the $O_p(\sqrt{n})$ under complete randomization.

\begin{prop}[Parallel results to those in \cite{ma2022new} without covariate selection]\label{prop:highd}
Suppose $X_1,\dots,X_n$ are i.i.d. copies of $X$ and $\E (\lVert \phi(X)\rVert_2^\gamma) = O(a_n^{\gamma/2})$ for some $\gamma>2$, and a rate $a_n$ to be discussed. Also, let us assume the nonzero eigenvalues of $\E(\phi(X) \phi(X)^\top)$ are lower-bounded by a positive constant. Then, utilizing methods in \cite{ma2022new}, $\Lambda_{n} = \sum_{i=1}^n (2T_i-1)\phi(X_i) = O_p(n^{\frac{1}{2(\gamma-1)}}a_n^\frac{2\gamma-3}{2\gamma-2})$.
\end{prop}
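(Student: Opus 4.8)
The plan is to recognize that this statement is exactly the imbalance bound underlying Theorem~\ref{thm:balance_phi}, read off in the situation where \emph{no} covariate selection is performed. When there is no selection step, the randomization rule is precisely the sequential biased-coin balancing procedure of \cite{ma2022new} applied to the i.i.d.\ vectors $\phi(X_1),\dots,\phi(X_n)$; so it suffices to re-run that procedure's imbalance analysis while carrying the constant $a_n^{\gamma/2}$ (which is $O(1)$ in \cite{ma2022new}) wherever their moment bound is used, with $a_n$ playing the role that $s_\phi$ plays in Theorem~\ref{thm:balance_phi}. The target is the second-moment bound $\E(\lVert\Lambda_n\rVert_2^2) = O(n^{1/(\gamma-1)}a_n^{(2\gamma-3)/(\gamma-1)})$, from which Markov's inequality gives $\Lambda_n = O_p(n^{1/(2(\gamma-1))}a_n^{(2\gamma-3)/(2\gamma-2)})$.

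For the second-moment bound I would follow the drift computation from the proof of Theorem~\ref{thm:balance_phi}. Letting $\mathcal{F}_{i-1}$ denote the history before the $i$-th assignment, expand
\[
\E\left(\lVert\Lambda_i\rVert_2^2\mid\mathcal{F}_{i-1}\right) = \lVert\Lambda_{i-1}\rVert_2^2 + \E\left(\lVert\phi(X_i)\rVert_2^2\right) + 2\,\E\left((2T_i-1)\,\Lambda_{i-1}^\top\phi(X_i)\mid\mathcal{F}_{i-1}\right).
\]
Because the biased coin picks the arm that decreases the imbalance with probability $\rho>1/2$, the last term equals $-2(2\rho-1)\,\E(|\Lambda_{i-1}^\top\phi(X_i)|\mid\mathcal{F}_{i-1})$. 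Since $\Lambda_{i-1}$ is a linear combination of the $\phi(X_k)$'s it lies in the range of $\E(\phi(X)\phi(X)^\top)$, so the minimum-nonzero-eigenvalue assumption yields $\E((\Lambda_{i-1}^\top\phi(X_i))^2\mid\mathcal{F}_{i-1}) \ge c_0\lVert\Lambda_{i-1}\rVert_2^2$; combining this with the Cauchy--Schwarz bound $\E(|\Lambda_{i-1}^\top\phi(X_i)|^\gamma\mid\mathcal{F}_{i-1})\le \lVert\Lambda_{i-1}\rVert_2^\gamma\,\E(\lVert\phi(X)\rVert_2^\gamma)$ and the log-convexity (interpolation) inequality among $L^1$, $L^2$ and $L^\gamma$ norms produces a lower bound on $\E(|\Lambda_{i-1}^\top\phi(X_i)|\mid\mathcal{F}_{i-1})$ that is a constant multiple of $\lVert\Lambda_{i-1}\rVert_2\,a_n^{-\gamma/(2(\gamma-2))}$. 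To cope with the heavy tail of $\lVert\phi(X_i)\rVert_2$ (only $\gamma>2$ moments), I would truncate $\lVert\phi(X_i)\rVert_2$ at a level $M$, control the discarded mass by Markov's inequality with the $\gamma$-th moment $O(a_n^{\gamma/2})$, and then choose $M$ optimally. Substituting these estimates back gives a recursion for $d_i := \E(\lVert\Lambda_i\rVert_2^2)$ of the schematic form $d_i \le d_{i-1} - (\text{drift in } d_{i-1}, a_n) + (\text{truncation error in } M, a_n)$; iterating over $i=1,\dots,n$ with the optimal $M$ yields the claimed $O(n^{1/(\gamma-1)}a_n^{(2\gamma-3)/(\gamma-1)})$ bound.

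The hard part is the heavy-tailed martingale control: a straightforward Lyapunov argument does not close, because the drift supplied by the biased coin is only of order $\lVert\Lambda_{i-1}\rVert_2$ --- not $\lVert\Lambda_{i-1}\rVert_2^2$ --- while the per-step increment $\lVert\phi(X_i)\rVert_2^2$ is merely integrable, so one is forced to interleave the truncation, the interpolation bound, and the excursion behaviour of the process $\lVert\Lambda_i\rVert_2$ in the way done in \cite{ma2022new}. The remainder is bookkeeping: propagating the factor $a_n$ (in place of a universal constant) through the truncation level, the interpolation exponents and the solved recursion so that exactly the powers $\tfrac{1}{\gamma-1}$ on $n$ and $\tfrac{2\gamma-3}{\gamma-1}$ on $a_n$ appear, and noting that the i.i.d.\ assumption on $X_1,\dots,X_n$ is exactly what makes $\E(\lVert\phi(X_i)\rVert_2^2)$ and $\E(\phi(X_i)\phi(X_i)^\top)$ identical at every step, which is all that the argument requires. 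A final application of Markov's inequality converts the second-moment bound into the stated $O_p$ rate.
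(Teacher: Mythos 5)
Your high-level plan (re-run the drift analysis of \cite{ma2022new} carrying the factor $a_n$, target $\E\lVert\Lambda_n\rVert_2^2=O(n^{1/(\gamma-1)}a_n^{(2\gamma-3)/(\gamma-1)})$, finish by Markov) is exactly the paper's, but the step where you quantify the biased-coin drift breaks the argument. You lower-bound $\E\bigl(|\Lambda_{i-1}^\top\phi(X_i)|\mid\mathcal{F}_{i-1}\bigr)$ by interpolating between the $L^2$ lower bound $c_0\lVert\Lambda_{i-1}\rVert_2^2$ and the Cauchy--Schwarz bound $\E|\Lambda_{i-1}^\top\phi(X_i)|^\gamma\le\lVert\Lambda_{i-1}\rVert_2^\gamma\E\lVert\phi(X)\rVert_2^\gamma=O(\lVert\Lambda_{i-1}\rVert_2^\gamma a_n^{\gamma/2})$, which gives a drift of order $\lVert\Lambda_{i-1}\rVert_2\,a_n^{-\gamma/(2(\gamma-2))}$. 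With that degraded drift coefficient the recursion $d_{i+1}-d_i\le -c\,a_n^{-\gamma/(2(\gamma-2))}\E\lVert\Lambda_i\rVert_2+Ca_n$ only stalls when $\E\lVert\Lambda_i\rVert_2\gtrsim a_n^{1+\gamma/(2(\gamma-2))}$, whose $a_n$-exponent already exceeds the claimed $(2\gamma-3)/(2\gamma-2)<1$; no choice of truncation level $M$ or bookkeeping can then recover the stated powers of $a_n$. The paper avoids this by asserting (importing the argument of Theorem 3.1 of \cite{ma2022new}, via the eigen-decomposition of $\E(\phi(X)\phi(X)^\top)$ and its minimum nonzero eigenvalue) the drift bound $\E(|\Lambda_n^\top\phi(X_{n+1})|\mid\Lambda_n)\ge c_0\lVert\Lambda_n\rVert_2$ with $c_0$ \emph{free of} $a_n$; your proposal neither obtains nor flags this stronger bound, and your weaker one does not yield the proposition.

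Separately, the part you leave as ``interleave the truncation, the interpolation bound, and the excursion behaviour \dots in the way done in \cite{ma2022new}'' is precisely where the paper does concrete work, and it is not a truncation argument. The paper applies the elementary inequality $\lVert u+v\rVert_2^\gamma-\lVert u\rVert_2^\gamma\le\gamma u^\top v\lVert u\rVert_2^{\gamma-2}+c_\gamma(\lVert v\rVert_2^\gamma+\lVert v\rVert_2^2\lVert u\rVert_2^{\gamma-2})$ to get a Lyapunov recursion for the $\gamma$-th power, whose per-step increase is at most the maximum of $h(x)=-c_1x^{\gamma-1}+c'a_nx^{\gamma-2}$, giving $\E\lVert\Lambda_n\rVert_2^\gamma\le C\,n\,a_n^{\gamma-1}$; then, for the second moment, it takes $m'$ to be the last index at which the expected one-step change of $\lVert\Lambda_i\rVert_2^2$ is nonnegative (so $\E\lVert\Lambda_{m'}\rVert_2\le Ca_n$ and the second moment is non-increasing afterwards) and closes with H\"older interpolation between the first and $\gamma$-th moments, $\E\lVert\Lambda\rVert_2^2\le(\E\lVert\Lambda\rVert_2^\gamma)^{1/(\gamma-1)}(\E\lVert\Lambda\rVert_2)^{(\gamma-2)/(\gamma-1)}\le Cn^{1/(\gamma-1)}a_n^{(2\gamma-3)/(\gamma-1)}$. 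These two steps (the $\gamma$-power recursion and the last-stalling-index plus moment-interpolation argument) are the substance of the proof and are missing from your proposal; your truncation-at-$M$ scheme is not shown to close and, combined with the degraded drift, would not produce the exponents $\tfrac{1}{\gamma-1}$ and $\tfrac{2\gamma-3}{\gamma-1}$.
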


Proposition \ref{prop:highd} characterizes the convergence rate of the imbalance measure $\Lambda_n$ for the CAR without covariate selection~\citep{ma2022new}, under high-dimensional settings. \cite{ma2022new} assumes a finite moment condition, which implicitly suggests a form of sparsity constraint in  high-dimensional settings. However, this assumption may lose its validity as the dimension of baseline covariates, i.e., $p$,  increases significantly as a function of $n$ and covariate selection is not considered. Typically, in the absence of a sparsity assumption or the covariate selection, the parameter $a_n$ should depend on $p$ and be considerably larger than $s_\phi$. Thus, employing CAR without covariates selection could lead to  inadequate  balance, particularly as $a_n$ diverges.

While our analysis primarily evaluates the properties of covariate balance for continuous covariates, the phenomenon outlined in Proposition~\ref{prop:highd} is also applicable to procedures designed for discrete covariates \citep{taves1974minimization,pocock1975sequential,atkinson1982optimum}. These procedures generally perform well when the dimension of the covariates is fixed. However, they may face challenges similar to those identified by \cite{ma2022new} as the dimension of covariates increases, potentially leading to divergence in the imbalance measure. In contrast, our proposed {\sf ARCS} method ensures a stable convergence rate for the imbalance measure, effectively addressing this challenge.



By effectively  selecting covariates, {\sf ARCS}  improves the convergence rate of the imbalance measure on  the important  covariates,  from $O_p(n^{\frac{1}{2(\gamma-1)}}a_n^\frac{2\gamma-3}{2\gamma-2})$ to $O_p(n^{\frac{1}{2(\gamma-1)}}s_\phi^\frac{2\gamma-3}{2\gamma-2})$. The  significant improvement can be illustrated by considering a simple case where  $\gamma=3$, $a_n = O(n)$, and $s_{\phi} = O(n^{1/3-\epsilon})$ for a given $\epsilon > 0$. In this setting, CAR without covariate selection results in $\Lambda_n = O_p(n)$, whereas under {\sf ARCS}, $\Lambda_{n,J^*} = O_p(n^{1/2 - 3\epsilon/4})$. Given that $n^{-1/2}\Lambda_n = O_p(n^{1/2})$ is diverging, CAR without covariate selection becomes ineffective in balancing each covariate component. In contrast, {\sf ARCS} ensures that $n^{-1/2}\Lambda_{n,J^*} = O_p(n^{-3\epsilon/4}) = o_p(1)$. In accordance with the outcome model (\ref{eq:outcome}) and as delineated in Theorem~\ref{thm:treatment_phi}, this improved balance translates into substantial efficiency gain for estimating treatment effects, thus underscoring the advantages of {\sf ARCS}.



\subsection{Treatment Effect Estimation}

In this section, we demonstrate the impact of {\sf ARCS} on the subsequent estimation of treatment effect.  For simplicity, we consider the difference-in-means estimator:
$$
\hat{\tau} = \frac{1}{n_1}\sum_{i=1}^n T_i Y_i - \frac{1}{n_0}\sum_{i=1}^n (1-T_i) Y_i.
$$

\begin{thm}(Optimal precision)\label{thm:treatment_phi}
Under the conditions of Theorem \ref{thm:balance_phi}, and let us assume $s = O(1)$, $s_\phi = o(n^{\frac{\gamma-2}{2\gamma-3}})$. With both {\sf ARCS-M} and {\sf ARCS-COV}, we have
$$
\sqrt{n}(\hat{\tau} - \tau) \stackrel{d}{\to} \mathcal{N} (0,4\sigma_\varepsilon^2).
$$
\end{thm}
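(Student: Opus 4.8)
The plan is to decompose $\hat\tau - \tau$ using the outcome model (\ref{eq:outcome}) and show that the covariate-related and error-related remainder terms are all $o_p(n^{-1/2})$, leaving only the term that produces the stated limiting variance. Plugging (\ref{eq:outcome}) into the difference-in-means estimator gives
\begin{equation*}
\hat\tau - \tau = \left(\bar X_{n}(1) - \bar X_{n}(0)\right)^\top\beta^* + \left(\frac{1}{n_1}\sum_{i:T_i=1}\varepsilon_i - \frac{1}{n_0}\sum_{i:T_i=0}\varepsilon_i\right),
\end{equation*}
where $\bar X_{n}(a) = n_a^{-1}\sum_{i:T_i=a}X_i$. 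Since $\beta^*$ is supported on $J^*$, the first term equals $(\bar X_{n,J^*}(1) - \bar X_{n,J^*}(0))^\top\beta^*_{J^*}$, and by Cauchy--Schwarz it is bounded by $\lVert \beta^*_{J^*}\rVert_2$ times the square root of the Mahalanobis-type quantity controlled in Corollary \ref{cor:M}(b) (up to the bounded eigenvalues of $\Sigma_{J^*}$); with $s=O(1)$ and $s_\phi = o(n^{(\gamma-2)/(2\gamma-3)})$ that quantity is $o_p(n^{-\frac{\gamma-1}{2\gamma-3}})$, so the first term is $o_p(n^{-1/2})$ since $\frac{\gamma-1}{2\gamma-3} > \frac12$ for $\gamma > 2$. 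For {\sf ARCS-COV} the same conclusion follows from Corollary \ref{cor:cov}(b) applied coordinatewise over the finitely many $j \in J^*$. (Here I am implicitly using that the bounded-eigenvalue assumption in Assumption~\ref{assum}\ref{assum:moment} lets me pass between the $\Sigma_{J^*}^{-1}$-weighted and unweighted norms on the $s$-dimensional space $J^*$.)

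Next I would handle the error term. Because allocation is a biased coin driven only by past covariates and assignments, one needs $n_1/n \to 1/2$ and a CLT for $\sum(2T_i-1)\varepsilon_i/\sqrt n$ with the $\varepsilon_i$ independent of the $T_i$ given the past. Corollary \ref{cor:M}(a) / \ref{cor:cov}(a) gives $\sum_{i=1}^n(2T_i - 1) = o_p(\sqrt n)$ under the stated rate on $s_\phi$, hence $n_a/n \to 1/2$ in probability, and more precisely $n_1^{-1} - (2/n) = o_p(n^{-1})$ type refinements let me rewrite
\begin{equation*}
\frac{1}{n_1}\sum_{i:T_i=1}\varepsilon_i - \frac{1}{n_0}\sum_{i:T_i=0}\varepsilon_i = \frac{2}{n}\sum_{i=1}^n(2T_i-1)\varepsilon_i + o_p(n^{-1/2}).
\end{equation*}
For the leading sum I would apply a martingale CLT to $M_n = \sum_{i=1}^n(2T_i-1)\varepsilon_i$: conditionally on $\mathcal{F}_{i-1} = \sigma(X_1,\dots,X_i,T_1,\dots,T_{i-1},\varepsilon_1,\dots,\varepsilon_{i-1})$ and the sub-$\sigma$-field generated by the assignment mechanism, $\varepsilon_i$ has mean zero, so $(2T_i-1)\varepsilon_i$ is a martingale difference array with conditional variance $\sigma_\varepsilon^2$; the Lindeberg condition follows from $\lVert\varepsilon\rVert_{\psi_2} \le L\sigma_\varepsilon$ (Assumption~\ref{assum}\ref{assum:noise}). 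This gives $n^{-1/2}M_n \stackrel{d}{\to}\mathcal N(0,\sigma_\varepsilon^2)$, hence $\frac{2}{\sqrt n}M_n \stackrel{d}{\to}\mathcal N(0,4\sigma_\varepsilon^2)$, and combining with the negligibility of the covariate term and Slutsky's theorem yields $\sqrt n(\hat\tau - \tau)\stackrel{d}{\to}\mathcal N(0,4\sigma_\varepsilon^2)$.

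The main obstacle is the joint handling of the randomization-induced dependence: the $T_i$ are not i.i.d., they depend on all past covariates through the imbalance rule and (via the periodically re-estimated $\widehat J^{(b)}$) on past responses, so I must argue carefully that (i) conditioning on the assignment history still leaves $\varepsilon_i$ centered, which requires that $\widehat J^{(b-1)}$ and the coin flips depend on $\varepsilon_1,\dots,\varepsilon_{i-1}$ only, not on $\varepsilon_i$ — true by construction since batch $b$ uses data through index $N_0+(b-1)N < i$ — and (ii) the selection event $\{\widehat J^{(b)} = J^*$ for all $b$ past the first$\}$ holds with probability $\to 1$ by Proposition \ref{prop:consistency} (using $s = O(1)$ so $\log p = o(n_a)$ is implied), so that on this event the balancing is genuinely being done on $J^*$ and Theorem \ref{thm:balance_phi} / Corollary \ref{cor:M} apply. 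A secondary technical point is upgrading the rates of Corollary \ref{cor:M} to the $o_p(n^{-1/2})$ statements I need for the cross term; this is where the hypothesis $s_\phi = o(n^{(\gamma-2)/(2\gamma-3)})$ is used, exactly as in the discussion following Corollary \ref{cor:M}.
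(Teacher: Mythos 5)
Your overall architecture matches the paper's: decompose $\hat\tau-\tau$ into a covariate-imbalance term plus an error term, show the former is $o_p(n^{-1/2})$, and apply a CLT to $n^{-1/2}\sum_i(2T_i-1)\varepsilon_i$ (the paper states this CLT conditionally on $T_1,\dots,T_n$; your martingale-difference formulation with respect to $\mathcal{F}_{i-1}$ is, if anything, the more careful justification of the same step, and your handling of the $n_a$ versus $n/2$ mismatch via $\sum_i(2T_i-1)=o_p(\sqrt n)$ is also what the paper does). Your treatment of {\sf ARCS-COV}, i.e.\ applying Corollary~\ref{cor:cov}(b) coordinatewise over the finitely many $j\in J^*$, is exactly the paper's argument.

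The gap is in the {\sf ARCS-M} half. You bound $(\bar X_{n,J^*}(1)-\bar X_{n,J^*}(0))^\top\beta^*_{J^*}$ by a constant times the \emph{square root} of the Mahalanobis quantity in Corollary~\ref{cor:M}(b), and then conclude it is $o_p(n^{-1/2})$ ``since $\tfrac{\gamma-1}{2\gamma-3}>\tfrac12$.'' That is the wrong threshold: because the linear functional scales like the square root of the quadratic form, you need the quadratic form to be $o_p(n^{-1})$, i.e.\ the exponent must exceed $1$, and $\tfrac{\gamma-1}{2\gamma-3}<1$ for every $\gamma>2$. With the rate you quote, the covariate term is only $o_p(n^{-\frac{\gamma-1}{2(2\gamma-3)}})$, and $\tfrac{\gamma-1}{2(2\gamma-3)}<\tfrac12$ for all $\gamma>2$ (e.g.\ $\gamma=3$ gives exponent $1/3$), so after multiplying by $\sqrt n$ the bound is vacuous. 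Nor can this be repaired by sharper bookkeeping inside Corollary~\ref{cor:M}(b): its rate carries an extra factor of $s_\phi$ relative to Theorem~\ref{thm:balance_phi}, so under only $s_\phi=o(n^{\frac{\gamma-2}{2\gamma-3}})$ it never improves beyond $o_p(n^{-\frac{\gamma-1}{2\gamma-3}})$. Two fixes are available: (i) the paper's route, which avoids the mean-difference Mahalanobis distance altogether and instead uses the explicit form $\phi(X_{k,J^*})=(\sqrt{w_0},\sqrt{w_1}X_{k,J^*}^\top\Gamma D^{1/2})^\top$, so that Theorem~\ref{thm:balance_phi} gives $\sum_i(2T_i-1)\sum_{j\in J^*}x_{ij}\gamma_{jr}\eta_r^{-1/2}=o_p(\sqrt n)$ for each $r$, and since the eigenvalues $\eta_r$ of $\Sigma_{J^*}$ are bounded (Assumption~\ref{assum}~\ref{assum:subgaussian} together with $s=O(1)$) this yields $\sum_i(2T_i-1)x_{ij}=o_p(\sqrt n)$ coordinatewise, exactly as in your {\sf ARCS-COV} argument; or (ii) note that for {\sf ARCS-M} with $s=O(1)$ one has $s_\phi=s+1=O(1)$, so Corollary~\ref{cor:M}(b) gives $O_p(n^{-\frac{2\gamma-3}{\gamma-1}})=o_p(n^{-1})$ and your Cauchy--Schwarz step then does close. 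The remaining points you flag (centering of $\varepsilon_i$ given the assignment history, and selection consistency via Proposition~\ref{prop:consistency}) are handled correctly.
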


It is worth noting that with complete randomization, $\sqrt{n}(\hat{\tau} - \tau) $ converges to $\mathcal{N} (0,4\lbrace\sigma_\varepsilon^2+\text{Var}[X^{\top}\beta^*] \rbrace)$ in distribution, thus the difference of $4\text{Var}[X^{\top}\beta^*]$ demonstrates the benefit of covariate balancing  \citep{ma2020statistical,ma2022new}


\begin{myRem}
The asymptotic variance of {\sf ARCS-M} and {\sf ARCS-COV} in Theorem~\ref{thm:treatment_phi} achieve the same optimal rates because both methods are based on the same linear generative model (\ref{eq:outcome}). Therefore, if the important covariates in $J^*$ are asymptotically balanced, which is guaranteed by Theorem \ref{thm:balance_phi}, both methods will achieve the optimal asymptotic variance.

This phenomenon might change if we relax the model (\ref{eq:outcome}) to a general outcome model since {\sf ARCS-M} can only balance the mean of covariates, while {\sf ARCS-COV} balances higher-order moments or more sophisticated functions of covariates. In such situations, we generally anticipate $\sigma_{\sf COV}^2 \leq \sigma_{\sf M}^2$, where $\sigma_{\sf COV}^2$ and $\sigma_{\sf M}^2$ are the corresponding counterparts of $\sigma_\varepsilon^2$ for the asymptotic variance of $\hat{\tau}$ under {\sf ARCS-COV} and {\sf ARCS-M}. Lastly, for the {\sf ARCS-M} procedure, when $p > n$, it solves the inverse of $\widehat{\Sigma}_{J^*}$ instead of the singular matrix $\widehat{\Sigma}$ in the Mahalanobis distance, thus significantly improving the results.
\end{myRem}

\section{Numerical Studies}
\label{Section: Numerical Studies}

In this section, we demonstrate the superiority of {\sf ARCS} through four simulation studies. In Example \ref{ex:1} and Example \ref{ex:discrete}, we compare our method {\sf ARCS-M} with complete randomization ({\sf CR}), rerandomization ({\sf RR}) in \cite{morgan2012rerandomization}, adaptive randomization via Mahalanobis distance ({\sf ARM}) in \cite{qin2022adaptive}, and compare {\sf ARCS-COV} with the original {\sf COV} procedure without covariate selection in \cite{ma2022new} and {\sf CR}. In terms of simulation models, we consider both the low-dimensional ($n\geq p$) and the high-dimensional settings ($n<p$)\footnote{When $n<p$, we implement {\sf ARM} utilizing a generalized inverse since the sample covariance matrix is not invertible.}. It is worth noting that {\sf RR} may be problematic when $n<p$ \footnote{When $n<p$, it may not be feasible to implement {\sf RR} because we find the Mahalanobis distance remains unchanged for any assignment. Thus, its value may always exceed the threshold defined in {\sf RR}, making it impossible to find assignments that satisfy the stopping rule. }, thus it is not considered for high-dimensional settings.

More specifically, in Example \ref{ex:1} we illustrate the performance of various methods where all the covariates are continuous. While the current theoretical framework only accommodates continuous covariates, {\sf ARCS} procedures perform well numerically when the covariates are discrete or mixed. To further demonstrate the robustness of {\sf ARCS} against variable types, in Example \ref{ex:discrete} we implement a model with a combination of discrete and continuous covariates. Example \ref{ex:additive} considers a non-linear outcome model and we compare the performances of {\sf CR}, {\sf COV}, with that of {\sf ARCS-M-add} and {\sf ARCS-COV-add} discussed in Remark \ref{remark2}. We aim to illustrate the advantages of utilizing sparse additive models in terms of covariate balancing in such situations. In Example \ref{ex:phi_outcome}, we consider a nonlinear outcome model that matches the selection of $\phi$ in {\sf COV}, which is similar to the simulation settings in \cite{ma2022new}. We compare the performance of {\sf COV} with that of {\sf ARCS-COV} and {\sf ARCS-COV-add}. This special setting will demonstrate the superior performance of the {\sf ARCS} methods, not only in terms of covariate balance but also in terms of the standard error of the estimated treatment effect.

We will compare different methods based on the following measures. For {\sf RR}, {\sf ARM}, {\sf ARCS-M}, {\sf ARCS-M-add}, we compare $$\text{Imb}_{n,J^*}^M = (n/2) (\bar{X}_{n,J^{*}}(1) - \bar{X}_{n,J^{*}}(0))^\top \widehat{\Sigma}_{J^{*}}^{-1} (\bar{X}_{n,J^{*}}(1) - \bar{X}_{n,J^{*}}(0)),$$  for which these methods are designed, and it is the Mahalanobis distance with sample covariance matrix.  For {\sf COV}, {\sf ARCS-COV} and {\sf ARCS-COV-add}, we report the results of Differences-in-Normed Covariates Means (DNCM), i.e., $n^2\lVert \bar{X}_{n,J^*}(1) - \bar{X}_{n,J^*}(0)\rVert^2_2$,  Differences-in-Normed Covariances (DNC), i.e., $n^2\lVert \widehat{\Sigma}_{J^*}(1) - \widehat{\Sigma}_{J^*}(0)\rVert_F^2$, and $\text{Imb}_{n,J^*}^\phi$ with $\phi$ in the Special case \ref{ex:cov}.  For these three {\sf COV}-related methods, the randomization is based on the imbalance measure $\text{Imb}_{n,J^*}^\phi$, where DNCM and DNC are components of $\text{Imb}_{n,J^*}^\phi$.

In all of the examples, we report results based on 5,000 repetitions. For the covariate selection step in {\sf ARCS}, we apply 5-fold cross-validation to choose the penalty parameters.  We set initial batch size $N_0=30$, and to obtain comprehensive and reliable results, we systematically vary the incremental batch size $N=1,2,10$. The treatment effect is estimated by the difference-in-means estimator $\hat{\tau}$.\footnote{All numerical analyses in this paper are conducted using R. We implement Lasso regression with cross-validation utilizing the R package \texttt{glmnet}.}

\begin{example}[Continuous covariates only]\label{ex:1}
Let us assume the following outcome model (\ref{sim1}), with $\beta^* = (3, 1.5, 0, 0, 2, \mathbf{0}_{p-5}^\top)^\top$, $\mu(0)=0$ and $\mu(1)=1$,
  \begin{equation}\label{sim1}
      Y_i = \mu(1) T_i + \mu(0)(1-T_i) + X_i^\top\beta^* + \varepsilon_i.
  \end{equation} Furthermore, we generate the i.i.d. covariates $\{X_i\}_{i=1}^n\sim\mathcal{N}(0,\Sigma)$, with  $\Sigma_{ij} = 0.5^{|i-j|}$, $i,j \in \{1,\cdots,p\}$, and $\{\varepsilon_i\}_{i=1}^n$ are i.i.d. $\mathcal{N}(0,1)$. 
\begin{itemize}
\item[(a)] Low-dimensional setting with $s=3$, $p=10$, and $n\in\{60,120\}$.
\item[(b)] High-dimensional setting with $s=3$, $p=150$, and $n\in\{60,120\}$.
\end{itemize}
\end{example}

We summarize the performance of {\sf ARCS-M} and its competitors under settings (a) and (b) in Table \ref{tb:1_ARM_low} and \ref{tb:1_ARM_high}, respectively. Similarly, we report the performance of {\sf ARCS-COV} and its competitors under settings (a) and (b) in Table \ref{tb:1_COV_low} and \ref{tb:1_COV_high}, respectively. Figures \ref{fig:distance-M-Ex1} and \ref{fig:distance-phi-Ex1} visualize the advantage of {\sf ARCS-M} and {\sf ARCS-COV} over their competitors in terms of the distributions of the corresponding imbalance metrics. Figure~\ref{fig:fpr-Ex1} reports the false positive rate and true positive rate for selecting the significant covariates in $J(\beta^*)$ with {\sf ARCS}. It indicates that {\sf ARCS} algorithms successfully solve both the issues of over-incorporation and under-incorporation of covariates as the sample size grows.

\begin{figure}[h!]
\centering
\begin{subfigure}[t]{0.35\textwidth}
    \centering
    \includegraphics[scale=0.55]{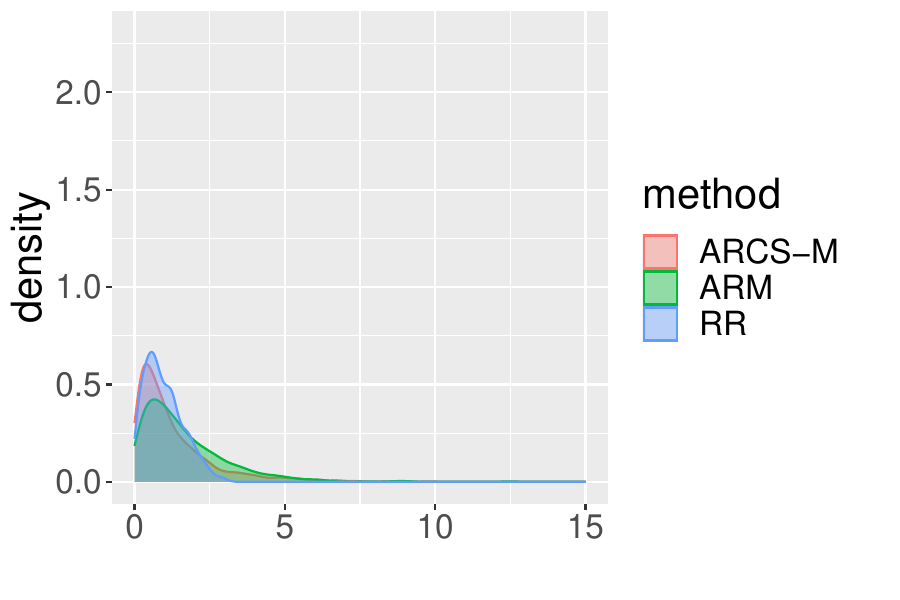}
        \vspace{-0.4in}
    \caption{$n=60$, $p=10$.}
        \vspace{-0.1in}
\end{subfigure}%
\begin{subfigure}[t]{0.43\textwidth}
    \centering
    \includegraphics[scale=0.55]{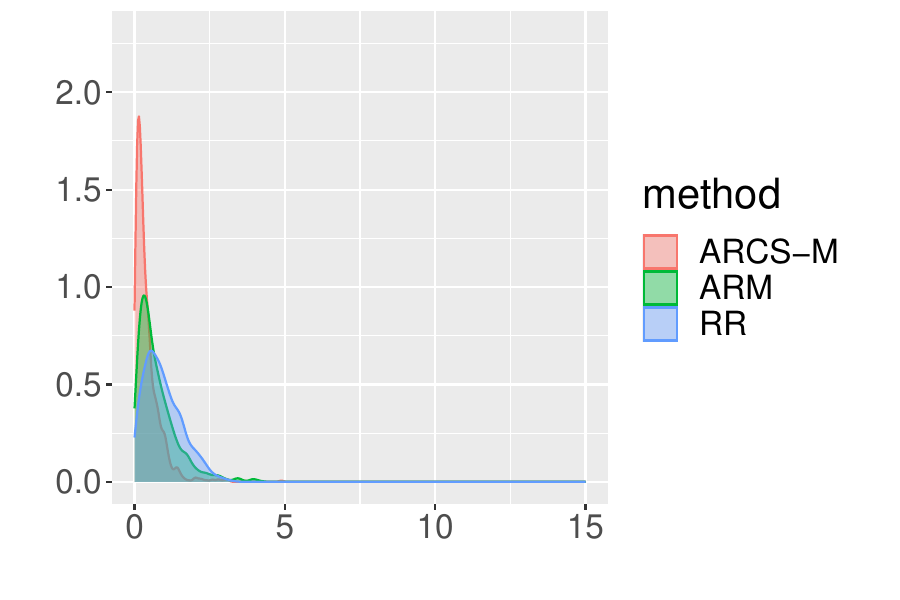}
      \vspace{-0.4in}
    \caption{$n=120$, $p=10$.}
      \vspace{-0.1in}
\end{subfigure}

\vspace{0.25cm}

\begin{subfigure}[t]{0.35\textwidth}
    \centering
    \includegraphics[scale=0.55]{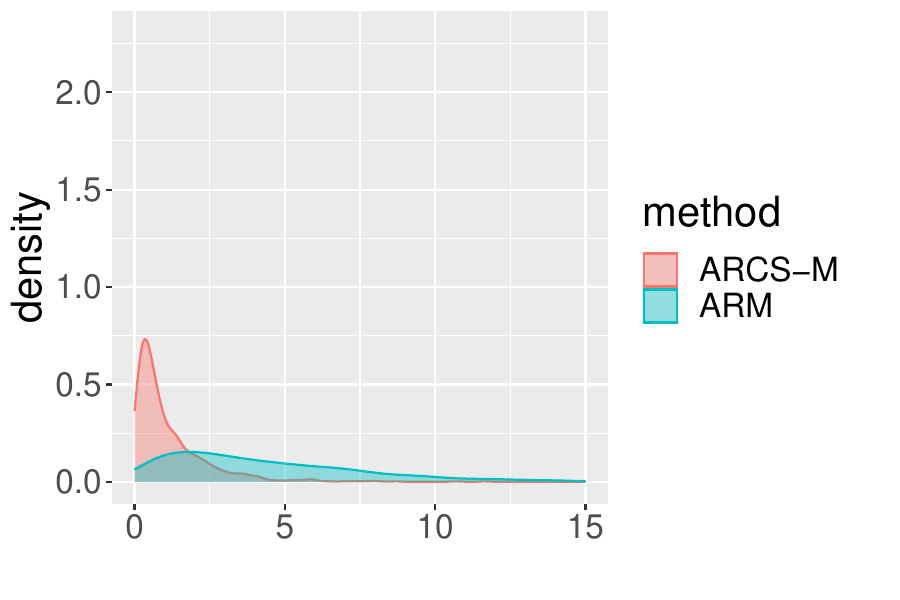}
      \vspace{-0.4in}
    \caption{$n=60$, $p=150$.}
      \vspace{-0.1in}
\end{subfigure}
\begin{subfigure}[t]{0.43\textwidth}
    \centering
    \includegraphics[scale=0.55]{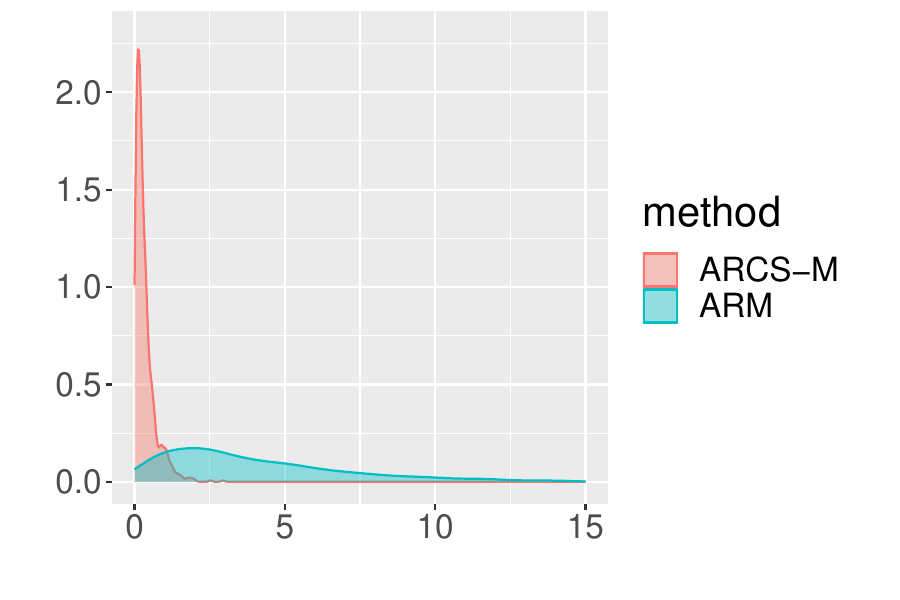}
      \vspace{-0.4in}
    \caption{$n=120$, $p=150$.}
      \vspace{-0.1in}
\end{subfigure}
\caption{The simulated distribution of $\text{Imb}_{i,J^*}^M$ in Example \ref{ex:1}.}
\label{fig:distance-M-Ex1}
\end{figure}

\begin{table}[h!] 
\caption{Simulation results for  Example \ref{ex:1} (a) with $p=10$: (1) $\text{Imb}_{n,J^*}^M$; (2) $\tau$: mean and  $\sqrt{n}\times$standard deviation ($\sqrt{n}$s.d.) and (3) computation time (in seconds). }
\label{tb:1_ARM_low}
\centering
\small
\resizebox{0.95\linewidth}{!}{\begin{tabular}{c|c|ccccccc}
\hline
& \multirow{2}{*}{$n$} & \multirow{2}{*}{{\sf CR}} & \multirow{2}{*}{{\sf RR}} & \multirow{2}{*}{{\sf ARM}} &  \multicolumn{2}{c}{{\sf ARCS-M}}  \\
\cline{6-7}
& & & & & $N=2$ & $N=10$ \\
\hline
\multirow{2}{*}{$\text{Imb}_{n,J^*}^M$ }  & 60 & 5.82 & 0.97 & 1.63 & 1.27 & 1.10 \\ 
 & 120 & 5.86 & 0.96 & 0.76 & 0.41 & 0.39 \\ 
 \hline
 \multirow{2}{*}{ $\tau$: mean~($\sqrt{n}$s.d.)} & 60 & 1.06~(9.34) & 1.02~(4.09) & 0.98~(5.20) & 0.98~(4.53) & 0.99~(4.60) \\  
&  120 & 0.98~(9.20) & 0.98~(4.19) & 0.99~(3.82) & 1.00~(3.21) & 1.00~(3.02) \\ 
\hline
\multirow{2}{*}{time (s) }  & 60 & 0.00 & 0.11 & 0.13 & 1.97 & 0.50 \\ 
  & 120 & 0.00 & 0.08 & 0.18 & 5.22 & 1.20 \\ 
\hline
\end{tabular}}
\end{table}

\begin{table}[h!]
\caption{Simulation results Example \ref{ex:1} (b) with for $p=150$: (1) $\text{Imb}_{n,J^*}^M$; (2) $\tau$: mean and  $\sqrt{n}\times$standard deviation ($\sqrt{n}$s.d.) and (3) computation time (in seconds).}
\label{tb:1_ARM_high}
\centering
\begin{tabular}{c|c|cccccc}
\hline
& \multirow{2}{*}{$n$} & \multirow{2}{*}{{\sf CR}} & \multirow{2}{*}{{\sf ARM}} &  \multicolumn{2}{c}{{\sf ARCS-M}}  \\
\cline{5-6}
& & & & $N=2$ & $N=10$ \\
\hline

\multirow{2}{*}{$\text{Imb}_{n,J^*}^M$ }  & 60 & 5.94 & 4.47 & 1.12 & 1.17 \\ 
 & 120 & 6.21 & 4.16 & 0.34 & 0.34 \\ 
 \hline
 \multirow{2}{*}{ $\tau$: mean~($\sqrt{n}$s.d.)} & 60 & 1.00~(9.47) & 1.04~(7.95) & 0.98~(4.42) & 0.98~(4.34) \\ 
&  120 & 1.01~(9.72) & 1.02~(7.69) & 0.99~(2.95) & 1.00~(2.94) \\ 
\hline
\multirow{2}{*}{time (s) }  & 60 & 0.00 & 1.62 & 2.39 & 0.55 \\ 
  & 120 & 0.00 & 3.23 & 7.63 & 1.77 \\ 
\hline
\end{tabular}
\end{table}

The superiority of {\sf ARCS-M} is evident in three key aspects, as shown in Tables \ref{tb:1_ARM_low} and \ref{tb:1_ARM_high}. Firstly, in terms of covariate imbalance measured by $\text{Imb}_{n, J^*}^M$, {\sf ARCS-M} consistently outperforms its competitors (in both low-dimensional and high-dimensional settings, with $n=60$ or $n=120$), with the only exception in $n=60,p=10$ where {\sf ARCS-M} is slightly worse than {\sf RR}. The advantage is more evident in the high-dimensional case where $p=150$, as {\sf ARCS-M} reduces the imbalance to only 25\% and 10\% of that of {\sf ARM} (and {\sf CR}) when $n=60$ and $n=120,$ respectively. Secondly, as shown in panel (a) of Figure \ref{fig:fpr-Ex1}, the variable selection step successfully identifies all three influential covariates ($J^*$) even with a small sample size of $30$ (after 1st batch), resulting in a true positive rate of 1 on average. On the other hand, with $s=3$ influential covariates and $p=10$, a false-positive rate of 0.25 indicates an average selection of 1.75 non-influential covariates. Thus, on average, {\sf ARCS-M} balances 4.75 covariates while its competitors balance all 10 covariates. When $p=150$, panel (b) of Figure \ref{fig:fpr-Ex1} reports a true positive rate of 1 after around 10 batches, and a 0 false positive rate consistently. This means {\sf ARCS-M} can balance on exactly the 3 truly important covariates rather than 150 covariates as done in other competing methods. By focusing on a smaller set of covariates covering $J^*$, thanks to the additional covariate selection step, {\sf ARCS-M} mitigates the issue of over-incorporation of covariates, resulting in superior performance in terms of balancing the underlying influential set $J^*$. Thirdly, {\sf ARCS-M} achieves smaller standard errors of the estimated treatment effect compared to its competitors in general, and the improvement is more pronounced with increasing sample size ($n$). While {\sf ARCS-M} with $N=2$ may be slower than {\sf ARM} due to the additional Lasso step, increasing the batch size to $N=10$ significantly reduces computational time. Therefore, for users with limited computational resources, choosing a slightly larger batch size can substantially reduce computation time without sacrificing efficiency, as demonstrated in both Tables \ref{tb:1_ARM_low} and \ref{tb:1_ARM_high}. 

\begin{figure}[h!]
\centering
\begin{subfigure}[t]{0.35\textwidth}
    \centering
    \includegraphics[scale=0.55]{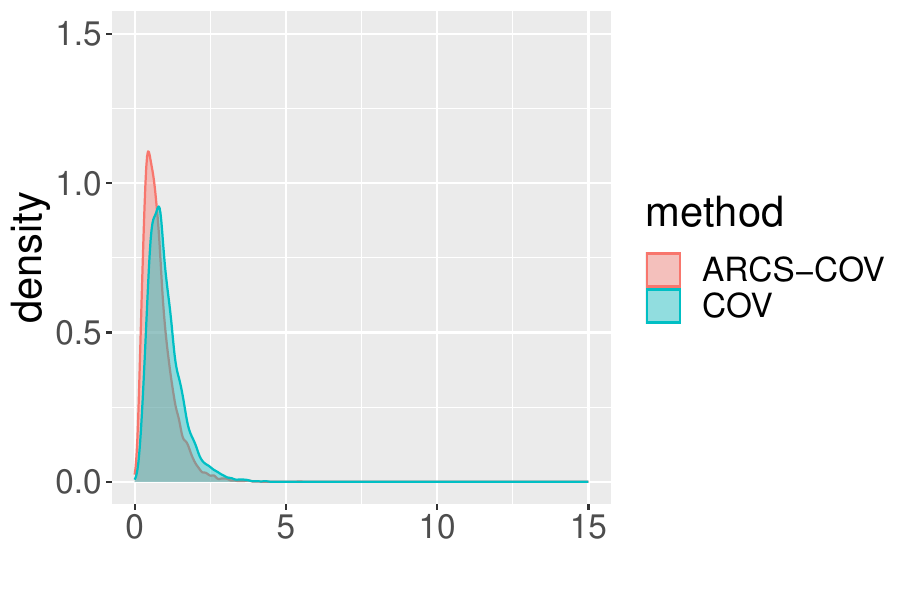}
    \vspace{-0.4in}
    \caption{$n=60$, $p=10$.}
    \vspace{-0.1in}
\end{subfigure}%
\begin{subfigure}[t]{0.43\textwidth}
    \centering
    \includegraphics[scale=0.55]{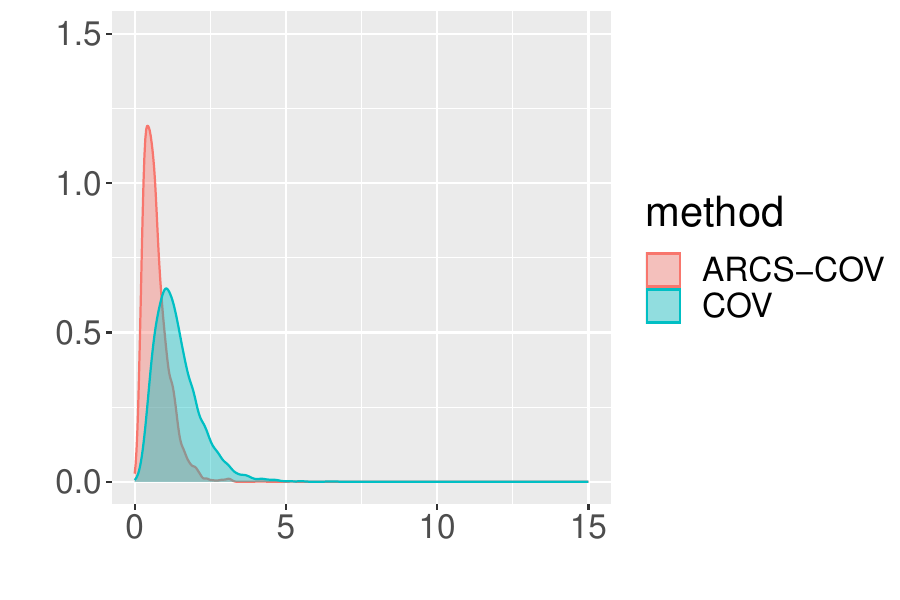}
    \vspace{-0.4in}
    \caption{$n=120$, $p=10$.}
     \vspace{-0.1in}
\end{subfigure}

\vspace{0.4cm}

\begin{subfigure}[t]{0.35\textwidth}
    \centering
    \includegraphics[scale=0.55]{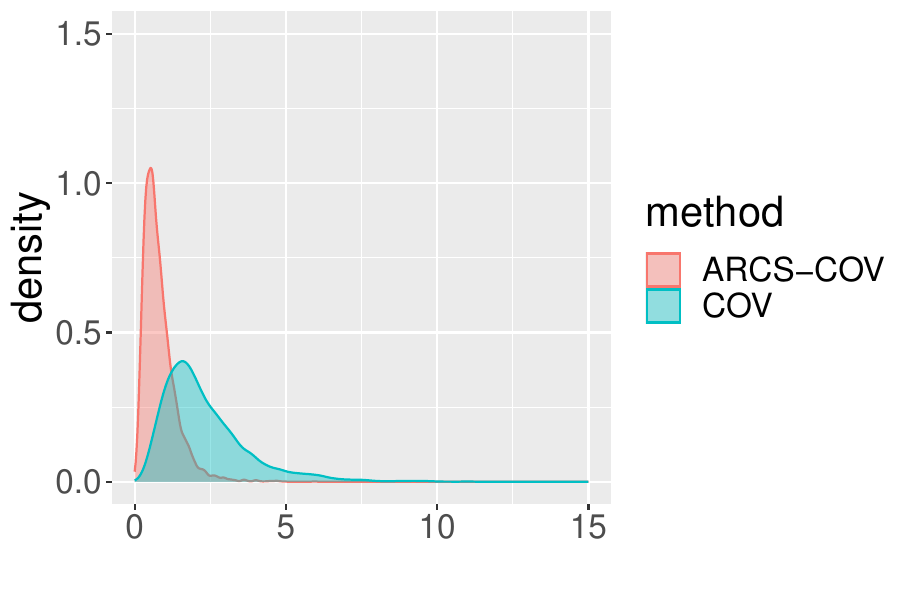}
        \vspace{-0.4in}
    \caption{$n=60$, $p=150$.}
        \vspace{-0.1in}
\end{subfigure}
\begin{subfigure}[t]{0.43\textwidth}
    \centering
    \includegraphics[scale=0.55]{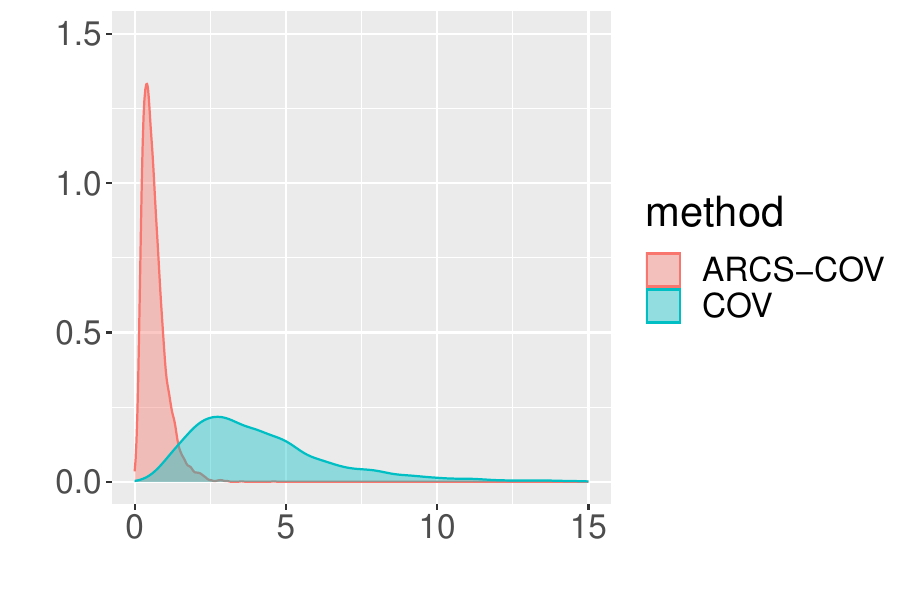}
    \vspace{-0.4in}
    \caption{$n=120$, $p=150$.}
    \vspace{-0.1in}
\end{subfigure}
\caption{The simulated distribution of $\text{Imb}_{i,J^*}^\phi/100$ in Example \ref{ex:1}.}
\label{fig:distance-phi-Ex1}
\end{figure}

\begin{table}[h!]
\caption{Simulation results for  Example \ref{ex:1} (a) with $p=10$ : (1) DNCM; (2) DNC; (3) $\text{Imb}_{n,J^*}^\phi$;  (4) $\tau$: mean and $\sqrt{n}\times$standard deviation in Special case \ref{ex:cov} and (5) computation time (in seconds).}
\label{tb:1_COV_low}
\centering
\resizebox{0.9\linewidth}{!}{\begin{tabular}{c|c|cccccc}
\hline
& \multirow{2}{*}{$n$} & \multirow{2}{*}{{\sf CR}} & \multirow{2}{*}{{\sf COV}} &  \multicolumn{2}{c}{{\sf ARCS-COV}}  \\
\cline{5-6}
& & & & $N=1$ & $N=10$ \\
\hline

\multirow{2}{*}{DNCM}  & 60 & 718.78 & 133.09 & 132.09 & 131.84 \\ 
 & 120 & 1453.19 & 147.20 & 108.92 & 106.56 \\ 
 \hline
 \multirow{2}{*}{DNC}  & 60 & 3023.38 & 1678.61 & 1365.60 & 1387.96 \\ 
 & 120 & 5999.52 & 2522.13 & 1286.70 & 1261.87 \\ 
 \hline
\multirow{2}{*}{$\text{Imb}_{n,J^*}^\phi$ }  & 60 & 308.76 & 101.92 & 79.20 & 80.44 \\ 
 & 120 & 620.76 & 141.22 & 72.02 & 71.70 \\ 
 \hline

\multirow{2}{*}{ $\tau$: mean~($\sqrt{n}$s.d.)} & 60 & 0.99~(9.32) & 1.00~(4.07) & 1.00~(4.09) & 1.00~(4.05) \\ 
&  120 & 0.99~(9.56) & 1.00~(3.23) & 1.00~(3.04) & 1.00~(3.02) \\ 
\hline
\multirow{2}{*}{time (s) }  & 60 & 0.00 & 0.00 & 3.53 & 0.46 \\ 
  & 120 & 0.00 & 0.01 & 7.83 & 0.81 \\ 
\hline
\end{tabular}}
\end{table}

\begin{table}[h!]
\caption{Simulation results for   Example \ref{ex:1}(b) with $p=150$: (1) DNCM; (2) DNC; (3) $\text{Imb}_{n,J^*}^\phi$;  (4) $\tau$: mean and $\sqrt{n}\times$standard deviation in Special case \ref{ex:cov} and (5) computation time (in seconds).}
\label{tb:1_COV_high}
\centering
\begin{tabular}{c|c|cccccc}
\hline
& \multirow{2}{*}{$n$} & \multirow{2}{*}{{\sf CR}} & \multirow{2}{*}{{\sf COV}} &  \multicolumn{2}{c}{{\sf ARCS-COV}}  \\
\cline{5-6}
& & & & $N=1$ & $N=10$ \\
\hline

\multirow{2}{*}{DNCM}  & 60 & 726.54 & 388.97 & 139.97 & 155.55 \\ 
 & 120 & 1430.37 & 600.13 & 103.25 & 104.17 \\ 
 \hline
 \multirow{2}{*}{DNC}  & 60 & 3043.94 & 3002.91 & 1412.86 & 1453.88 \\ 
 & 120 & 6064.75 & 5968.93 & 1153.76 & 1203.63 \\ 
 \hline
\multirow{2}{*}{$\text{Imb}_{n,J^*}^\phi$ }  & 60 & 310.87 & 231.08 & 82.08 & 87.90 \\ 
 & 120 & 622.20 & 422.87 & 65.14 & 67.20 \\ 
 \hline
\multirow{2}{*}{ $\tau$: mean~($\sqrt{n}$s.d.)} & 60 & 1.01~(9.37) & 0.98~(6.64) & 1.01~(4.16) & 1.00~(4.27) \\ 
&  120 & 1.01~(9.35) & 1.01~(5.77) & 1.00~(3.04) & 0.99~(2.98) \\ 
\hline
\multirow{2}{*}{time (s) }  & 60 & 0.00 & 0.04 & 5.98 & 0.47 \\ 
  & 120 & 0.00 & 0.09 & 13.47 & 1.69 \\ 
\hline
\end{tabular}
\end{table}

\begin{figure}[h!]
 \centering
\begin{subfigure}[t]{0.37\textwidth}
    \centering
    \includegraphics[scale=0.5]{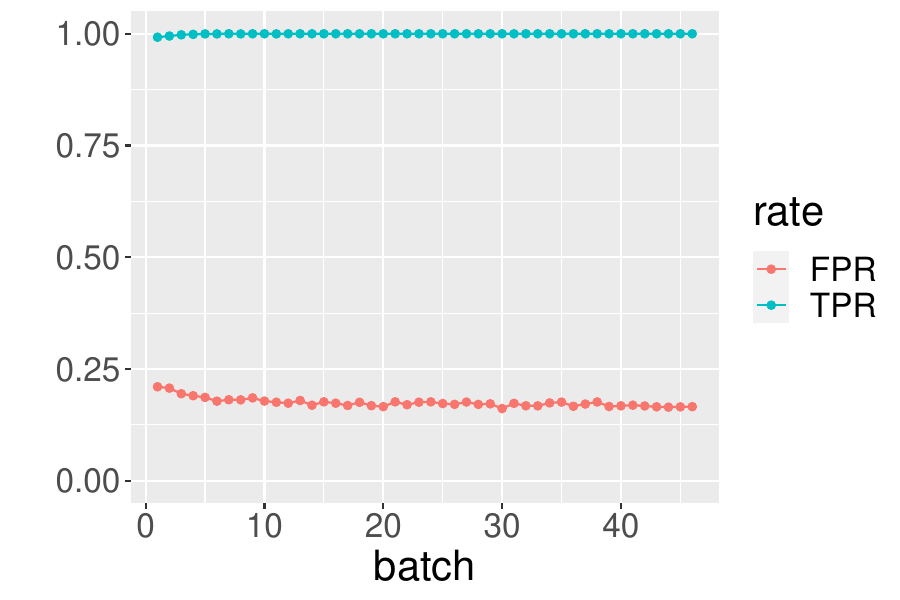}
     \vspace{-0.2in}
    \caption{{\sf ARCS-M}, $n=120$, $p=10$.}
      \vspace{-0.1in}
\end{subfigure}%
\begin{subfigure}[t]{0.4\textwidth}
    \centering
    \includegraphics[scale=0.5]{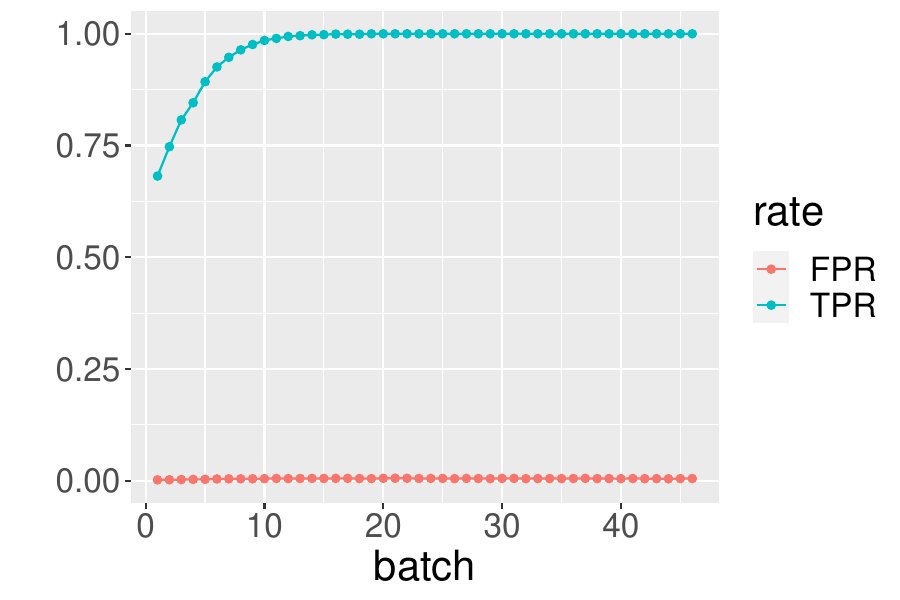}
      \vspace{-0.2in}
    \caption{{\sf ARCS-M}, $n=120$, $p=150$.}
      \vspace{-0.1in}
\end{subfigure}

\vspace{0.5cm}

\begin{subfigure}[t]{0.37\textwidth}
    \centering
    \includegraphics[scale=0.5]{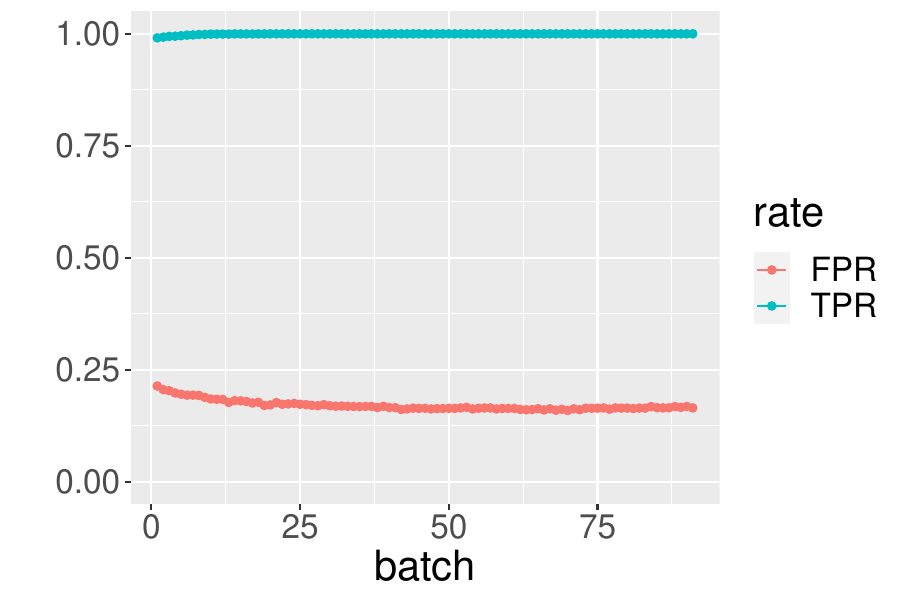}
      \vspace{-0.2in}
    \caption{{\sf ARCS-COV}, $n=120$, $p=10$.}   
\end{subfigure}
\begin{subfigure}[t]{0.4\textwidth}
    \centering
    \includegraphics[scale=0.5]{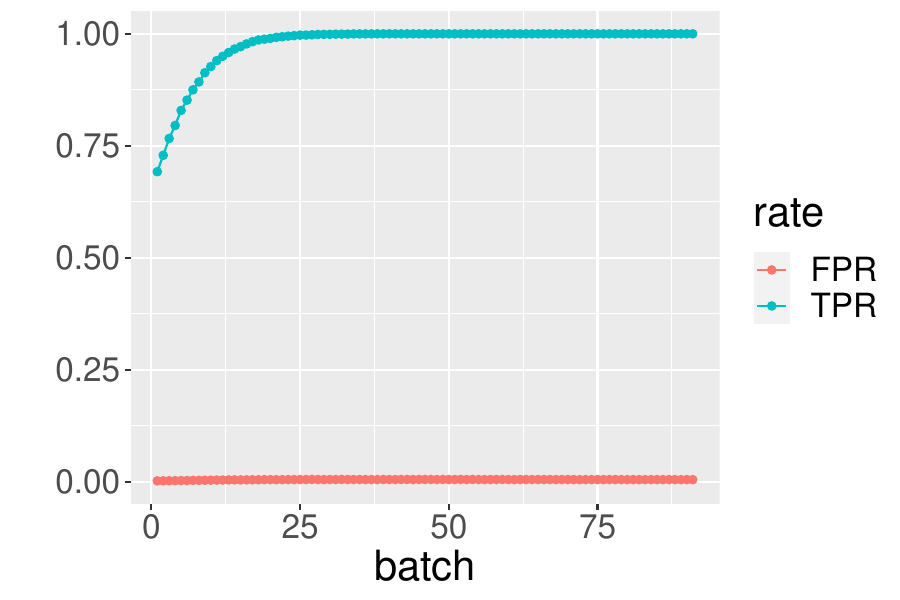}
      \vspace{-0.2in}
    \caption{{\sf ARCS-COV}, $n=120$, $p=150$.}
 
\end{subfigure}
\caption{True positive rate and false positive rate for covariates selection in Example \ref{ex:1}.}
\label{fig:fpr-Ex1}
\end{figure}

Similar messages for {\sf ARCS-COV} have been conveyed in Tables \ref{tb:1_COV_low} and \ref{tb:1_COV_high}. Additionally, we present results on DNCM and DNC (defined at the beginning of this section) following the methodology outlined in \cite{ma2022new}. Although these quantities are not the primary focus of {\sf ARCS-COV}, the numerical results indicate that {\sf ARCS-COV} consistently outperforms {\sf COV} in all scenarios. Specifically, for cases with $n=120, p=10$ and $n=120, p=150$, the DNC achieved by {\sf ARCS-COV} is only approximately 50\% and 20\% of that attained by {\sf COV}, respectively. Similarly, for cases with $n=60, p=150$ and $n=120, p=150$ in Table \ref{tb:1_COV_high}, the DNCM achieved by {\sf ARCS-COV} is only about 35\% and 18\% of that obtained by {\sf COV}, indicating significant reductions in those metrics.

When comparing the results between low-dimensional settings with $p=10$ (Tables \ref{tb:1_ARM_low} and \ref{tb:1_COV_low}) and high-dimensional settings with $p=150$ (Tables \ref{tb:1_ARM_high} and \ref{tb:1_COV_high}), we observe that the performance of {\sf ARCS-M} and {\sf ARCS-COV} remains largely unchanged, while the performance of the competitors deteriorates rapidly. Figure \ref{fig:distance-M-Ex1} and \ref{fig:distance-phi-Ex1} deliver similar messages. There are two potential explanations for this observation: 1) The estimation of the precision matrix is crucial for the implementation of {\sf ARM}. When the number of covariates ($p$) exceeds the sample size ($n$), incorporating all covariates can lead to either an invalid inverse of a low-rank matrix or inaccurate results when using the generalized inverse of $\widehat{\Sigma}$. {\sf ARCS-M} is not affected since the number of selected covariates is smaller than $n$ in general; 2) Both {\sf ARM} and {\sf COV} tend to incorporate more unimportant covariates, which reduces the emphasis on balancing the important covariates. Specifically, {\sf COV} balances all $p=150$ covariates, while {\sf ARCS-COV} balances only the 3 true covariates. As a result, the issue of over-incorporation of non-influential covariates worsens as $p$ increases from 10 to 150.

\begin{example}[A combination of continuous and discrete covariates]\label{ex:discrete}

In accordance with the basic setups outlined in Example \ref{ex:1}, we introduce a modification by replacing the last four continuous covariates with discrete covariates. Specifically, we start by generating i.i.d. $Z_{i1}$, $Z_{i2} \sim \text{Bernoulli}(0.5)$ for $1\le i \le n$. Subsequently, we construct a set of discrete covariates by considering the four-way combinations of indicator functions involving $Z_{i1}$ and $Z_{i2}$, as described in the outcome model (\ref{sim2}). The remaining covariates and random noises are generated following Example \ref{ex:1}, while maintaining the dimension $p$ unchanged. The updated outcome model can be represented as
 \begin{equation}\label{sim2}
      Y_i = \mu(1) T_i + \mu(0)(1-T_i) + X_i^\top\beta^* + \varepsilon_i ,
  \end{equation} where the first $p-4$ coordinates of $X_i$ remain unchanged, $x_{i,p-3}=\1(Z_{i1} = 1, Z_{i2} = 1)$, $x_{i,p-2}=\1(Z_{i1} = 1, Z_{i2} = 0)$, $x_{i,p-1}=\1(Z_{i1} = 0, Z_{i2} = 1)$ and $x_{ip}=\1(Z_{i1} = 0, Z_{i2} = 0)$.
Moreover, we choose $\beta^* = (3, 1.5, 0, 0, 2, \mathbf{0}_{p-9}^\top, 1, 0, 0, 0)^\top$, thus beyond the three influential continuous covariates, we have one more influential discrete covariate that contributes to the response $Y$. Since we have demonstrated the effect of varying sample size $n$ in Example \ref{ex:1}, we will examine only {\sf ARCS-COV}, keeping $n=120$ and varying $p=\{10, 150\}$, with $s=4$.
\end{example}

We present the results in Table \ref{tb:2_COV}. It is evident that when $p=150$, {\sf ARCS-COV} achieves a remarkable improvement over {\sf COV}, reducing all three metrics to at most 30\% of their corresponding values generated by {\sf COV}. Notably, {\sf CR} remains the poorest performing method across all approaches. On the other hand, when $p=10$, while the performance of {\sf ARCS-COV} in terms of the DNCM is comparable to that of {\sf COV}, {\sf ARCS-COV} still outperforms its competitors in terms of the other two metrics. Therefore, Example \ref{ex:discrete} demonstrates that {\sf ARCS} is applicable not only to the selection of continuous covariates but also to a mixture of continuous and discrete covariates. Furthermore, the average TPR and FPR of the last batch is $99.06\%$ and $22.94\%$ respectively when $n=120$, $p=10$; and the corresponding values are $91.71\%$ and $0.68\%$ when $n=120$, $p=150$. The results clearly indicate that, in both cases, we incorporate an average of only 1 non-influential covariate into our model ($22.94\% \times 6 = 1.38$ and $0.68\% \times 146 = 0.99$). This effectively addresses the problem of covariate over-incorporation. Moreover, the average true positive rates exceed 99\% and 91\%, demonstrating that {\sf ARCS} reliably captures all the influential covariates.

\begin{table}[h!]
\caption{Simulation results for Example \ref{ex:discrete}: (1) DNCM; (2) DNC; (3) $\text{Imb}_{n,J^*}^\phi$ in Special case \ref{ex:cov};  (4) $\tau$: mean and $\sqrt{n}\times$standard deviation and (5) computation time (in seconds).}
\label{tb:2_COV}
\centering
\begin{tabular}{c|c|cccccc}
\hline
& \multirow{2}{*}{$p$} & \multirow{2}{*}{{\sf CR}} & \multirow{2}{*}{{\sf COV}} &  \multicolumn{2}{c}{{\sf ARCS-COV}}  \\
\cline{5-6}
& & & & $N=1$ & $N=10$ \\
\hline
\multirow{2}{*}{DNCM} & 10 & 1572.69 & 111.10 & 113.21 & 117.20 \\ 
 & 150 & 1510.91 & 668.13 & 165.04 & 164.22 \\ 
 \hline
 \multirow{2}{*}{DNC} & 10 & 6909.01 & 2331.25 & 1736.38 & 1711.03 \\ 
 & 150 & 6963.59 & 6586.54 & 1854.04 & 1855.22 \\ 
 \hline
\multirow{2}{*}{$\text{Imb}_{n,J^*}^\phi$ } & 10 & 640.62 & 111.24 & 82.90 & 83.49 \\ 
 & 150 & 632.19 & 408.33 & 97.75 & 98.59 \\ 
 \hline
\multirow{2}{*}{ $\tau$: mean~($\sqrt{n}$s.d.)} & 10 & 0.99~(9.77) & 0.99~(2.91) & 1.01~(2.92) & 1.00~(2.95) \\ 
&  150 & 1.01~(9.30) & 1.00~(5.81) & 1.00~(3.10) & 1.00~(3.06) \\ 
\hline
\multirow{2}{*}{time (s) } & 10 & 0.00 & 0.00 & 7.11 & 0.78 \\ 
  & 150 & 0.00 & 0.11 & 9.91 & 1.09 \\ 
\hline
\end{tabular}
\end{table}

Since Example \ref{ex:1} has been thorough in exploring various settings, including small and large sample sizes, as well as low and high dimensions, in Examples \ref{ex:additive} and \ref{ex:phi_outcome}, we will focus on a single setting in terms of $n=300$ and $p=150$ to deliver a more focused message.\footnote{We implement the covariate selection step in the sparse additive model using the R package \texttt{sparseGAM}.} 
\begin{example}\label{ex:additive}
We consider a nonlinear outcome model, utilizing $f_1(x)$, $f_2(x)$, $f_3(x)$ and $f_4(x)$ which represent sine, polynomial and exponential functions. Specifically, 

\[
Y_i = \mu(1) T_i + \mu(0) (1 - T_i) + f_1(x_{i1}) + f_2(x_{i2}) + f_3(x_{i3}) + f_4(x_{i4}) + \varepsilon_i,
\]
with
$
f_1(x) = -2\sin(2x),\quad f_2(x) = x^2 - \frac{1}{3},\quad f_3(x) = x - \frac{1}{2},\quad f_4(x) = e^{-x} + e^{-1} -1\,.
$
The covariates and random noises are generated similarly as in Example \ref{ex:1}.
\end{example}

\begin{table}[h!]
\caption{Simulation results for Example \ref{ex:additive}: (1) $\text{Imb}_{n,J^*}^M$; (2) DNCM; (3) DNC; (4) $\text{Imb}_{n,J^*}^\phi$ in Special case \ref{ex:cov} and (5) $\tau$: mean and $\sqrt{n}\times$ standard deviation.}
\label{tb:add_ARM}
\centering
\resizebox{1\linewidth}{!}{\begin{tabular}{c|ccccc}
\hline
& {\sf CR} & {\sf ARM} & {\sf ARCS-M-add} & {\sf COV} & {\sf ARCS-COV-add} \\
\hline
$\text{Imb}_{n,J^*}^M$ &  8.00 &  5.16 & 1.23 & - & - \\ 
\hline
DNCM & 4686.11 & -& - &  1177.82 & 567.72 \\ 
DNC & 26754.32 & - & -  & 23811.09 & 11273.14 \\ 
$\text{Imb}_{n,J^*}^\phi$ & 2217.85 &  -& -& 1255.95 & 538.76 \\ 
\hline
$\tau$: mean~($\sqrt{n}$s.d.) & 1.00~(6.20) & 1.00~(5.66) &  1.00~(5.15) & 1.01~(5.28) & 1.01~(4.66) \\ 
\hline
\end{tabular}}
\end{table}

From Table \ref{tb:add_ARM}, {\sf ARCS-M-add} reduces the Mahalanobis distance to 16\% and 24\% of the values achieved by {\sf CR} and {\sf ARM}, respectively. Further, {\sf ARCS-COV-add} method reduces all the three measures DNCM, DNC, and $\text{Imb}_{n,J^*}^\phi$  to at most 50\% of the values achieved by {\sf COV}.  Thus overall, {\sf ARCS-M-add} and {\sf ARCS-COV-add} are effective in balancing the important covariates when the outcome model is nonlinear.
\begin{example}\label{ex:phi_outcome}
We choose the following outcome model such that it matches the choice of $\phi$ in {\sf COV}.The covariates and random noises are generated similarly as in Example \ref{ex:1}.

\[
Y_i = \mu(1) T_i + \mu(0) (1 - T_i) + 3x_{i1} + 3x_{i2} + 3x_{i1}^2 + 3x_{i2}^2 + 3x_{i1}x_{i2} + \varepsilon_i.
\]

\end{example}

Table \ref{tb:add_COV} presents a three-fold message. First, when the underlying outcome model matches the choice of $\phi$ that defines the imbalance measure in the CAR procedure, both {\sf ARCS-COV} and {\sf ARCS-COV-add} significantly improve the inference result by reducing the standard error of the estimated treatment effect to approximately half. {\sf COV} is expected to perform well in this setting, as stated in \cite{ma2022new}, but it appears to underperform because the outcome model (defined with only $x_{i1}$ and $x_{i2}$) does not match the $\phi$ that is defined on all $p=150$ covariates without selection. Second, in terms of DNCM, {\sf ARCS-COV} mildly improves the results  compared to {\sf COV}, while {\sf ARCS-COV-add} directly reduces the DNCM measure to less than 27\% of the value obtained with {\sf COV}. This illustrates the advantage of utilizing the sparse additive models. Third, in terms of the DNC and $\text{Imb}_{n,J^*}^\phi$ metrics,  both {\sf ARCS-COV} and {\sf ARCS-COV-add} outperforms {\sf COV} and {\sf CR} by reducing the metrics to less than 28\% of the values obtained with their competitors, with {\sf ARCS-COV-add} performs comparatively better. This suggests that the key improvement in reducing DNC and $\text{Imb}_{n,J^*}^\phi$ is primarily due to the covariate selection step.

\begin{table}[tb]
\caption{Simulation results over 5000 repetitions in Example \ref{ex:phi_outcome}, comparing the performances of {\sf CR}, {\sf COV}, {\sf ARCS-COV} and {\sf ARCS-COV-add} under a nonlinear outcome model matching $\phi$ in {\sf COV}.}
\label{tb:add_COV}
\centering
\begin{tabular}{c|cccccc}
\hline
& {\sf CR} & {\sf COV} &  {\sf ARCS-COV} & {\sf ARCS-COV-add} \\
\hline
DNCM & 2508.21 & 602.80 & 565.41 & 160.24 \\ 
 \hline
DNC & 8625.14 & 8290.49 & 1926.34 & 1740.01 \\ 
 \hline
$\text{Imb}_{n,J^*}^\phi$ & 1182.76 & 617.95 & 178.73 & 112.53 \\ 
\hline
$\tau$: mean~($\sqrt{n}$s.d.) & 0.95~(7.79) & 0.98~(7.41) & 0.99~(3.43) & 1.03~(3.63) \\ 
\hline
\end{tabular}
\end{table}

\section{An Example based on Clinical Trial Study}
\label{sec:real data}
We present an extensive analysis of an illustrative clinical trial study to highlight the advantages of {\sf ARCS} over other existing methods without a covariate selection step, in terms of covariate balancing and treatment effect estimation.   The concerned randomized clinical trial was originally conducted to compare the effectiveness of three treatments for chronic depression: nefazodone, the cognitive behavioral-analysis system of psychotherapy, and their combination \citep{keller2000comparison}. As we aim to address the scenario involving two treatments in this paper, we will focus only on the data related to the nefazodone treatment and the combination treatment following \cite{ma2022new}. In this study, the outcome variable $Y$ of interest is the 24-item Hamilton Rating Scale for Depression post-treatment, specifically the last observed score (referred to as FinalHAMD). 
We consider all 57 informative covariates in the dataset, including both discrete and continuous ones, but excluding any meaningless covariates such as the indices. Thus, we work with a dataset of size $n=376$ and  $p=57$. We apply 5-fold cross-validated Lasso, and select $s=2$ covariates, the indicator of white people and  the baseline 24-item Hamilton Rating Scale for Depression, namely RACE and HAMD24. We consider the following two outcome models, one is linear and the other is quadratic.
\begin{align}\label{eq:real_linear}
\text{{\sf Linear: }} \text{FinalHAMD}_i = \mu(1)T_i + \mu(0)(1-T_i) + \beta_1^* \text{RACE}_i + \beta_2^* \text{HAMD24}_i + \varepsilon_i.
\end{align}
\vspace{-2cm}
\begin{align}\label{eq:real_quadratic}
\text{{\sf Quadratic: }}\text{FinalHAMD}_i &= \mu(1)T_i + \mu(0)(1-T_i) + \beta_1^* \text{RACE}_i + \beta_2^* \text{HAMD24}_i \nonumber\\
&+ \beta_3^* \text{HAMD24}_i^2 + \beta_4^* \text{RACE}_i \times \text{HAMD24}_i + \varepsilon_i.
\end{align}

In the design stage, we assess the performance of  {\sf CR}, {\sf ARM}, and {\sf ARCS-M} in terms of the Mahalanobis distance for the RACE and HAMD24 covariates. Similarly, the efficacy of {\sf COV} and {\sf ARCS-COV} is evaluated based on DNCM and DNC metrics for these two covariates. DNCM  represents the sum of squared differences in RACE and HAMD24 between the treatment groups, which will have a small value if the means of the covariates are close in the two treatment arms.  DNC further accounts for the second moment of HAMD24 and the interaction between RACE and HAMD24, providing a more comprehensive measure of covariate imbalance that captures higher-order relationships.  This can be particularly useful when the association between the outcome and the covariates extends beyond simple linear relationships.

We first compute the least-square estimators $(\hat{\mu}(1), \hat{\mu}(0), \hat{\beta})$ based on the original dataset and further get $\hat{\mu}(1) - \hat{\mu}(0)$. Then in each replication, we generate a pseudo dataset based on these estimated parameters and $\varepsilon_i \overset{\text{i.i.d.}}{\sim}\mathcal{N}(0, 1)$. Numerical results are summarized in Table \ref{tb:real_arm} and Figure \ref{fig:fpr-real-data} based on 5,000 replications.  


\begin{table}[h!]
\caption{Numerical results for the clinical trial example: comparison across different methods based on both linear and quadratic outcome models.}
\label{tb:real_arm}
\centering
\resizebox{1\linewidth}{!}{\begin{tabular}{c|c|ccccc}
\hline
 & & {\sf CR} & {\sf ARM} & {\sf ARCS-M} & {\sf COV} & {\sf ARCS-COV} \\
\hline
\multirow{3}{*}{{\sf Linear model}} & $\text{Imb}_{n,J^*}^M$ & 3.93 & 0.88 & 0.08 & - & - \\ 
\cline{2-7}
 & DNCM$(\times 10^3)$ & 38.34 & - & - & 3.02 & 0.94 \\ 
 & DNC$(\times 10^6)$ & 127.82 & - & - & 10.02 & 3.13 \\ 
 & $\text{Imb}_{n,J^*}^\phi(\times 10^6)$ & 64.86 & - & - & 6.84 & 1.42 \\ 
\cline{2-7}
 & $\tau$: mean~($\sqrt{n}$s.d.) & -4.33~(5.02) & -4.34~(2.91) & -4.34~(2.06) & -4.34~(2.44) & -4.34~(2.11) \\ 
\hline
\multirow{3}{*}{{\sf Quadratic model}} & $\text{Imb}_{n,J^*}^M(\times 10^6)$ & 4.05 & 0.91 & 0.15 & - & - \\ 
\cline{2-7}
 & DNCM$(\times 10^3)$ & 38.8 & - & - & 3.18 & 1.47 \\ 
 & DNC$(\times 10^6)$ & 129.79 & - & - & 10.47 & 4.94 \\ 
 & $\text{Imb}_{n,J^*}^\phi(\times 10^6)$ & 63.35 & - & - & 6.81 & 3.43 \\ 
\cline{2-7}
 & $\tau$: mean~($\sqrt{n}$s.d.) & -4.48~(5.61) & -4.47~(3.62) & -4.48~(3.17) & -4.47~(2.87) & -4.48~(2.30) \\ 
\hline
\end{tabular}}
\end{table}

\begin{figure}[h!]
 \centering
\begin{subfigure}[t]{0.37\textwidth}
    \centering
    \includegraphics[scale=0.5]{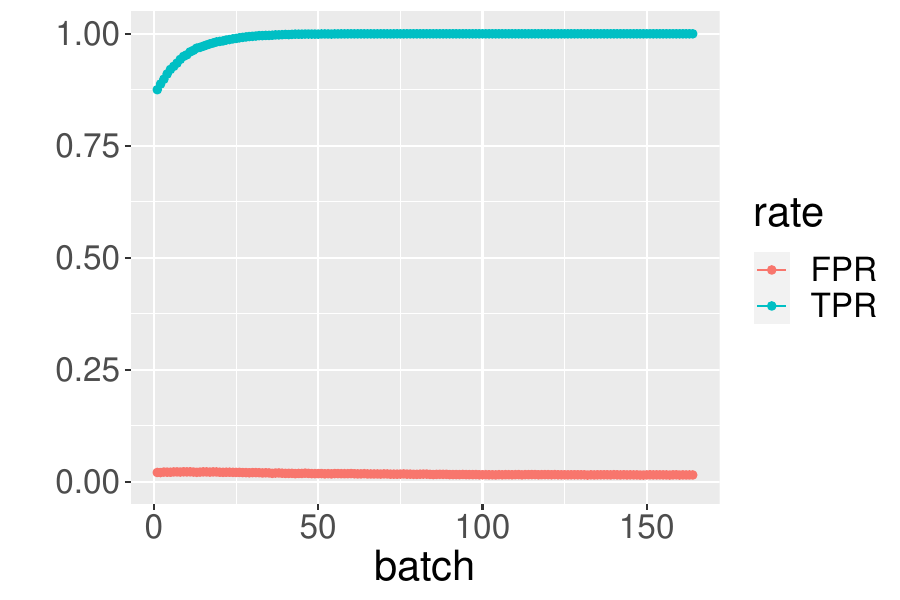}
     \vspace{-0.2in}
    \caption{{\sf ARCS-M}, {\sf Linear model}.}
      \vspace{-0.1in}
\end{subfigure}%
\begin{subfigure}[t]{0.4\textwidth}
    \centering
    \includegraphics[scale=0.5]{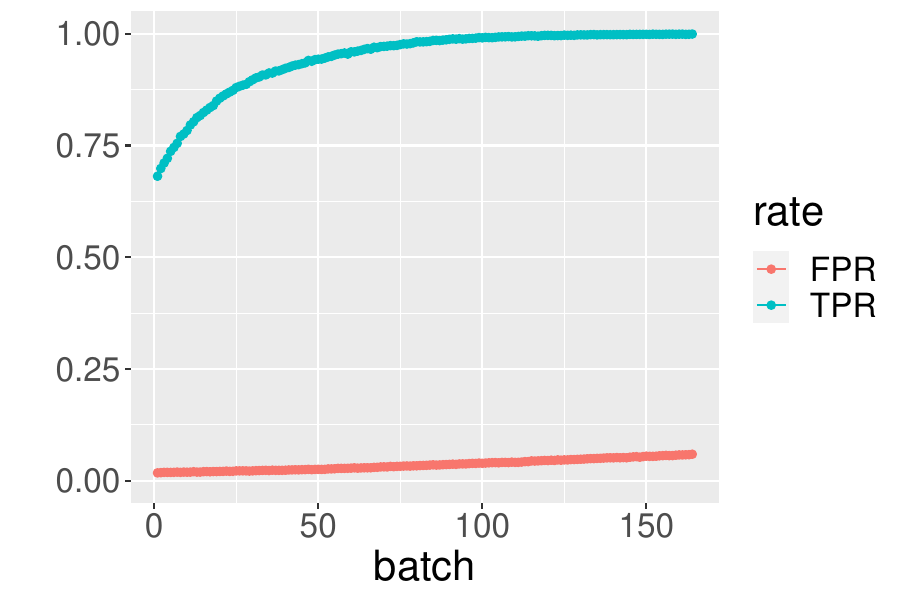}
      \vspace{-0.2in}
    \caption{{\sf ARCS-M}, {\sf Quadratic model}.}
      \vspace{-0.1in}
\end{subfigure}

\vspace{0.5cm}

\begin{subfigure}[t]{0.37\textwidth}
    \centering
    \includegraphics[scale=0.5]{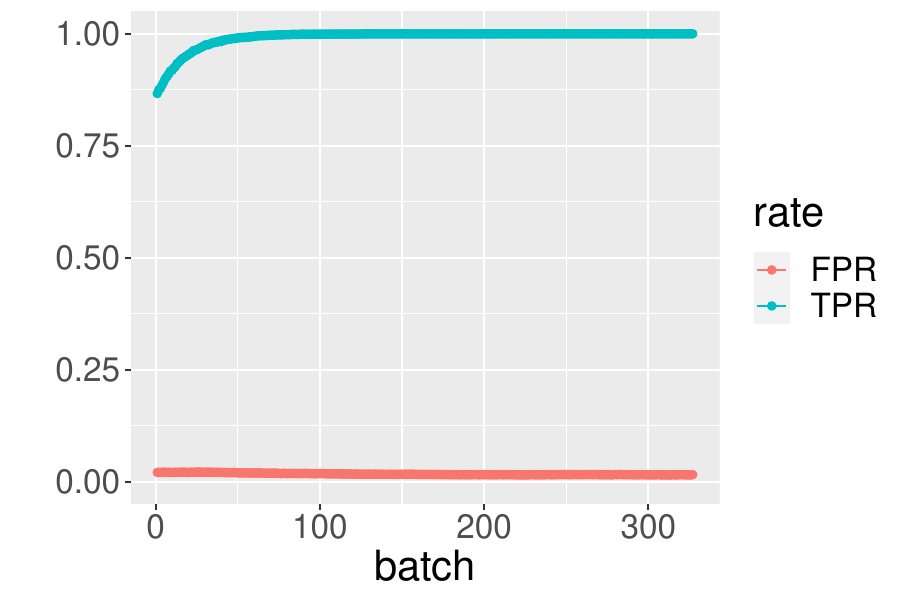}
      \vspace{-0.2in}
    \caption{{\sf ARCS-COV}, {\sf Linear model}.}   
\end{subfigure}
\begin{subfigure}[t]{0.4\textwidth}
    \centering
    \includegraphics[scale=0.5]{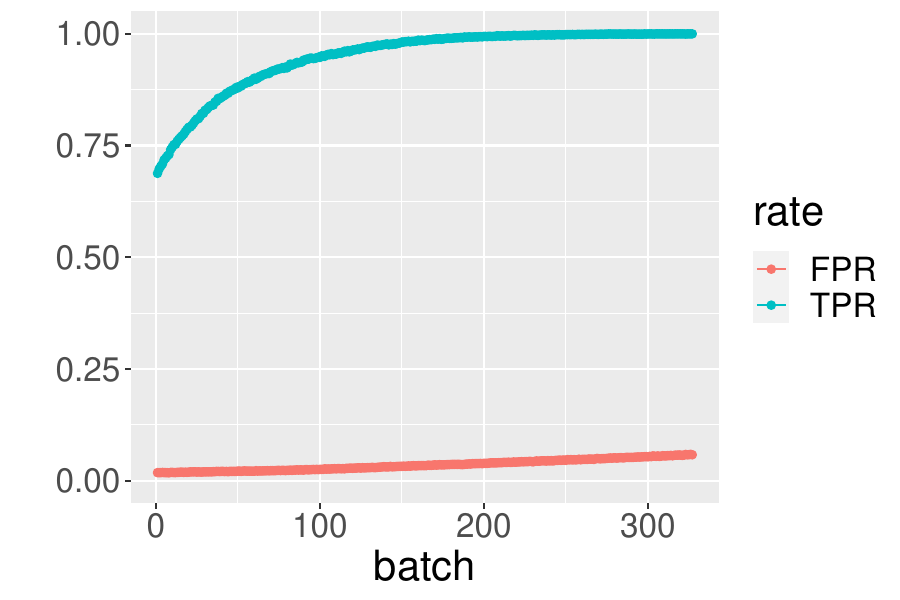}
      \vspace{-0.2in}
    \caption{{\sf ARCS-COV}, {\sf Quadratic model}.}
 
\end{subfigure}
\caption{True positive rate and false positive rate for covariates selection in the clinical trial example.}
\label{fig:fpr-real-data}
\end{figure}

Figure \ref{fig:fpr-real-data} demonstrates the advanced performance of the covariate selection step used in {\sf ARCS}.  For both the linear and the quadratic models, Figure \ref{fig:fpr-real-data} indicates that  {\sf ARCS-M} and {\sf ARCS-COV} control the false positive rate under 10\% or less, and their true positive rates reach 1 once the accumulated sample size exceeds $n/2$ or even earlier. Consequently, it ensures the significant covariates are selected and used for balancing in {\sf ARCS-M} or {\sf ARCS-COV}. 

Thanks to the reduced dimension of covariates, both {\sf ARCS-M} and {\sf ARCS-COV} can improve the balance of the significant covariates and further enhance the efficiency of the treatment effect estimation. Examining first the results from {\sf `Linear model'} in Table \ref{tb:real_arm}, we observe that both {\sf ARCS-M} and {\sf ARCS-COV} significantly reduce the covariate imbalance and the standard error of $\hat{\tau}$, compared to their counterparts {\sf ARM} and {\sf COV}, not to mention {\sf CR} which has the poorest performance consistently. This demonstrates the effect of the extra covariate selection step in {\sf ARCS}. The similar standard errors of $\hat{\tau}$ by {\sf ARCS-M} and {\sf ARCS-COV} indicate that when a linear model is used, balancing the covariate means might be sufficient. When we further look at the  {\sf `Quadratic model'} results, where the outcome model matches the selection of $\phi$ in {\sf COV}, {\sf ARCS-COV} achieves a scaled standard error of $\hat{\tau}$ of 2.30, while {\sf ARCS-M} is 3.17. This shows that when the model becomes more sophisticated, balancing higher-order functions of covariates can further improve the estimation of the treatment effect, validating the development of {\sf ARCS-COV}.

\section{Conclusion}
\label{sec:conclusion}

In this article, {\sf ARCS} procedures are proposed to achieve the simultaneous goals of covariates selection and balancing. Importantly, we extend the finite-dimensional covariates framework to compare the balance properties of the CAR  without covariate selection and {\sf ARCS}. Under a high-dimensional covariates setting, we show that the CAR without covariates selection suffers from a slow convergence rate of its imbalance measure, whereas  {\sf ARCS} achieves a much faster convergence rate for the imbalance over the set of true influential covariates.  The benefit of {\sf ARCS} is further demonstrated by showing improved efficiency in estimating the treatment effect from a linear outcome model.  We conduct extensive numerical studies to demonstrate the appealing properties of  {\sf ARCS} under various settings.

 While our primary focus has been on demonstrating the impact of selecting and balancing covariates on treatment effect estimation, the results presented also provide significant insights into the development of subsequent inference methods for controlling Type I errors. Assuming the linear outcome model (\ref{eq:outcome}), inference for the treatment effect can be conducted based on Theorem~\ref{thm:treatment_phi} and the estimates derived from Lasso regression. However, the model-based estimand depends on the performance of Lasso, which may raise regulatory concerns. As a result, the associated inference might only be approved for use as a secondary analysis.  Consequently, there is a pressing need to expand the presented framework to develop model-robust inference methods following the application of {\sf ARCS} that can potentially be approved as the primary analysis.

As balancing covariate with CAR has become widely accepted in clinical practice, {\sf ARCS} has the potential to further enhance design performance by incorporating a variable selection process. Recent examples of approved trials utilizing CAR techniques include the Exblifep study~\citep{kaye2022effect}, the ZELSUVMI study~\citep{browning2022efficacy}, and the STAMPEDE study~\citep{james2017abiraterone}. However, the effectiveness of CAR can be challenged when many covariates are involved.  For instance, in the STAMPEDE study, 1,917 patients were enrolled and stratified across eight covariates, leading to fewer than two patients per stratum. If additional covariates were also considered, the efficiency of the trial design could be further compromised. In such scenarios, employing {\sf ARCS} to select the most important covariates can improve covariate balance and enhance the accuracy of treatment effect estimation.

To understand the balance properties of the CAR procedure, it is desirable to examine the finite-sample performance of the imbalance measure in more detail. In typical clinical trials, it is assumed that baseline covariates are predefined prior to the experiment, with the dimension considered finite as patient enrollment increases  \citep{hu2012asymptotic, qin2022adaptive, ma2022new}. However, integrating electronic health records and other technological advancements can result in a substantial increase in the dimension of covariates available for the design of the trial. When the number of covariates is large, the features used for achieving balance may have an even higher dimension, potentially leading to worse finite-sample performance on covariate balancing.  Although Proposition \ref{prop:highd}  partially addresses this issue by introducing a simplified assumption, more comprehensive and relaxed assumptions are necessary to adequately delineate the structure of the features utilized in the design. Subsequently,  detailed evaluations of the finite sample properties should be undertaken to understand better the impacts and efficacy of the balance achieved through CAR in high-dimensional settings.

Although Proposition \ref{prop:highd}  partially addresses this issue by introducing a simplified assumption, more comprehensive and relaxed assumptions are necessary to adequately delineate the structure of the features utilized in the design. Subsequently,  detailed evaluations of the finite sample properties should be undertaken to understand better the impacts and efficacy of the balance achieved through CAR in high-dimensional settings. 

The results presented in this paper can be extended in several directions. First, the fast diverging dimension of the functions of covariates used in CAR demonstrates the need to reduce their dimension in order to improve the performance of covariate balancing.  It is thus advantageous to directly select the basis of the functions used in the design with {\sf ARCS}. However, correlations among the basis of these functionals can introduce complexities in the selection process. To mitigate this challenge, one might consider employing the group lasso technique on blocks of functional elements, particularly when supplementary information facilitates the segmentation of these functionals into independent blocks. It is also of interest to understand the impact of the selection of the functionals with {\sf ARCS} on the estimation of treatment effect.  
Second, it is often critical to balance covariates for multi-arm trials.  It would be a worthwhile attempt to extend our framework to settings with multi-treatment arms based on \cite{hu2023multi} and further investigate its theoretical properties. Third, addressing ethical concerns by assigning more patients to the more effective treatment arm is a vital task in clinical trials. Typically, CARA procedures are employed to allocate patients to the beneficial treatment arm based on their baseline covariates, treatment assignments, and responses. While ARCS primarily focuses on balancing important covariates, it is desirable to integrate both covariate balancing
and ethical considerations into the design of the trial. To achieve this dual objective, we
could modify the imbalance measure to allow for unequal allocation, and combine {\sf ARCS}
with existing CARA methods such as those outlined by \cite{rosenberger2001covariate} and \cite{zhang2007asymptotic}.  Developing methodologies in the aforementioned directions would be pressing and important challenges for future research.

\section{Acknowledgements}

Yang Liu's research is supported by the National Natural Science Foundation of China grant 12301324. Lucy Xia's research is supported by the Hong Kong Research Grants Council ECS grant 26305120.

\vspace{-1cm}

\begin{appendices}

\section{Supplement for {\sf ARCS} procedure }

\subsection{Implementation of of {\sf ARCS-M}}
\label{append:ARCS}

Since $\text{Imb}^{\phi}_{i,J^*}$ is proportional to the Mahalanobis distance and $\Sigma_{J^*}$ is unknown in practice, we alternatively implement step (3) of {\sf ARCS} following the procedure in \citep{qin2022adaptive}, described as following step (3'). 
\vspace{-0.3cm}
\begin{itemize}
\item[(3')] Suppose we have assigned $(2i-2)$ units. For $a=0,1$, the imbalance measure via Mahalanobis distance is 
\vspace{-0.3cm}
$$
\text{Imb}^{M}_{2i,\widehat{J}^{(b-1)}} = (2i/2) (\bar{X}_{2i,\widehat{J}^{(b-1)}}(1) - \bar{X}_{2i,\widehat{J}^{(b-1)}}(0))^\top \widehat{\Sigma}_{\widehat{J}^{(b-1)}}^{-1} (\bar{X}_{2i,\widehat{J}^{(b-1)}}(1) - \bar{X}_{2i,\widehat{J}^{(b-1)}}(0))\,,
$$
where $\widehat{\Sigma}_{\widehat{J}^{(b-1)}}$ is the sample covariance matrix calculated by the sample set $\{X_{k,\widehat{J}^{(b-1)}}\}_{k=1}^{2i}$. Let $\text{Imb}^{M}_{2i,\widehat{J}^{(b-1)}}(a)$ be the imbalance measure with $T_{2i-1} = a$, $T_{2i} = 1-a$ for $a=0,1$. Let us assign the $(2i-1)$-th unit with probability
\vspace{-0.1cm}
$$
\p \left(T_{2i-1} = 1 \mid  X_1 , \ldots,X_{2i}, T_1,  \ldots,, T_{2i-2}\right) = \left\{\begin{array}{ll}
\rho, & \text{Imb}^{M}_{2i,\widehat{J}^{(b-1)}}(1) < \text{Imb}^{M}_{2i,\widehat{J}^{(b-1)}}(0), \\
1-\rho, & \text{Imb}^{M}_{2i,\widehat{J}^{(b-1)}}(1) > \text{Imb}^{M}_{2i,\widehat{J}^{(b-1)}}(0), \\
0.5, & \text{Imb}^{M}_{2i,\widehat{J}^{(b-1)}}(1) = \text{Imb}^{M}_{2i,\widehat{J}^{(b-1)}}(0),
\end{array}\right.
$$
where $\rho\in (0.5,1)$, and define $T_{2i} = 1- T_{2i-1}$.
\end{itemize}

\newpage

\subsection{Summary of the {\sf ARCS} Procedures}

\begin{algorithm}[h!] 
\caption{{\sf ARCS} Procedure}
\label{alg:ARCS}
\begin{algorithmic}[1]
\REQUIRE ~~\\ 
$X_1,\dots,X_n$: Covariates; \\
$N_0$: Initial batch size; $N$: Batch size; \\
 $\rho$: The probability of the biased coin.
\ENSURE ~~\\ 
$T_1,\dots,T_n$: Assignments.
\FOR{$i=1$ to $N_0$}
\STATE Let $T_i=1$ with probability 0.5.
\ENDFOR
\STATE For $a=0,1$, obtain the Lasso estimator
$$
(\hat{\mu}^{(0)}(a), (\hat{\beta}^{(0)}(a))^\top) = \mathop{\arg\min}\limits_{(\mu, \beta)} \left\{\frac{1}{n_a^{(0)}} \sum_{1\leq i\leq N_0: T_i=a} (Y_i - \mu - X_i^\top\beta)^2 + \lambda_a \sum_{j=1}^p |\beta_j| \right\}\,.
$$
\STATE Set $\widehat{J}^{(0)} = J(\hat{\beta}^{(0)}(0)) \cap J(\hat{\beta}^{(0)}(1))$.
\FOR{$b=1$ to $(n-N_0)/N$}
\FOR{$i=N_0+(b-1)N+1$ to $N_0+bN$}
\STATE Calculate $\text{Imb}_{i,\widehat{J}^{(b-1)}}(1)$,  $\text{Imb}_{i,\widehat{J}^{(b-1)}}(0)$.
\IF{$\text{Imb}_{i,\widehat{J}^{(b-1)}}(1) < \text{Imb}_{i,\widehat{J}^{(b-1)}}(0)$}
\STATE Set $T_i=1$ with probability $\rho$.
\ELSIF{$\text{Imb}_{i,\widehat{J}^{(b-1)}}(1) > \text{Imb}_{i,\widehat{J}^{(b-1)}}(0)$}
\STATE Set $T_i=1$ with probability $1-\rho$.
\ELSE
\STATE Set $T_i=1$ with probability 0.5.
\ENDIF
\ENDFOR
\STATE For $a=0,1$, obtain the Lasso estimator
$$
(\hat{\mu}^{(b)}(a), (\hat{\beta}^{(b)}(a))^\top) = \mathop{\arg\min}\limits_{(\mu, \beta)} \left\{\frac{1}{n_a^{(b)}} \sum_{1\leq i\leq N_0+bN: T_i=a} (Y_i - \mu - X_i^\top\beta)^2 + \lambda_a \sum_{j=1}^p |\beta_j| \right\}\,.
$$
\STATE Set $\widehat{J}^{(b)} = J(\hat{\beta}^{(b)}(0)) \cap J(\hat{\beta}^{(b)}(1))$.
\ENDFOR
\RETURN $T_1,\dots,T_n$.
\end{algorithmic}
\end{algorithm}

\section{Proof of the Main Results}
\setcounter{thm}{0}

\subsection{Proof of Proposition \ref{prop:consistency_app}: the selection consistency of Lasso}

In this section, we prove the covariate selection consistency of Lasso in Proposition \ref{prop:consistency_app}. The proof contains two major steps. First, we prove that the Restricted Eigenvalue (RE) condition holds for $\mathbb{X}(a)$ with high probability under Assumption \ref{assum}\ref{assum:subgaussian}\ref{assum:minimal_signal} in Lemma \ref{lem:RE}. Second, we derive the covariate selection consistency based on the RE condition. The second part is based on  \cite{bickel2009simultaneous}.

Let $e_j \in \mathbb{R}^p$ be the vector with the $j$-th element equal to 1 and all other elements equal to 0. For simplicity of notation, $c'$, $C$,  $C'$  are generic constants that vary from line to line. Lemma \ref{lem:RE} provides results for a general $c_a$. Later we will apply the result to $c_a=3$ in our context.

\begin{lemma}\label{lem:RE}
Suppose Assumption \ref{assum} \ref{assum:subgaussian} and \ref{assum:minimal_signal} hold with $c_a=3$ replaced by a generic $c_a$, for $a \in  \lbrace 0,1 \rbrace$. Further, let $\kappa(s,3c_a)$ be the value associated with $p^{1/2}A_a$ under $\text{RE}(s, 3c_a)$, $\kappa(s,c_a)$ be the value associated with $p^{1/2}A_a$ under $\text{RE}(s, c_a)$, and set
$$
m = \min\left(s + s\max_{1\leq j \leq p}\lVert A_a e_j \rVert_2^2 \frac{16(3c_a)^2(3c_a+1)}{r_a^2\kappa^2(s,3c_a)}, p\right).
$$
Suppose 
$$
n_a \geq \frac{2000m\iota^4}{r_a^2} \log \left\lbrace  (mr_a)^{-1}60 pe\right\rbrace.
$$
Then with probability at least $1-2\exp\lbrace-r_a^2n_a/(2000\iota^4)\rbrace$, $\mathbb{X}(a)$ satisfies $\text{RE}(s, c_a)$ condition. Additionally, if we denote $\kappa_{\mathbb{X}(a)}(s,c_a)$ as the value of $\kappa(\cdot,\cdot)$ associated with $\mathbb{X}(a)$ under $\text{RE}(s, c_a)$, then
$$
\kappa_{\mathbb{X}(a)}(s,c_a) \geq (1-r_a)\kappa(s,c_a) > 0.
$$
\end{lemma}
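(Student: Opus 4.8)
\subsection*{Proof sketch (proposal)}

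The plan is to transfer the Restricted Eigenvalue property from the deterministic matrix $p^{1/2}A_a$ to the random matrix $\mathbb{X}(a)=\Psi_aA_a$, adapting the argument for anisotropic random designs in \cite{rudelson2012reconstruction}. Fix $a\in\{0,1\}$ and, for readability, suppress the subscript $a$, writing $c$, $r$, $n$, $\Psi$, $A$ and $\kappa(\cdot,\cdot)$ for the RE constants of $p^{1/2}A_a$. Let $\mathcal{C}(c)=\{\delta\neq 0:\lVert\delta_{J^{\sf c}}\rVert_1\le c\lVert\delta_J\rVert_1\text{ for some }J\text{ with }|J|\le s\}$ be the restricted cone. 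Since $p^{1/2}A$ satisfies $\text{RE}(s,c)$, every $\delta\in\mathcal{C}(c)$ obeys $\lVert A\delta\rVert_2\ge\kappa(s,c)\lVert\delta_J\rVert_2>0$, so $A\delta\neq0$ and it suffices to show that, on an event of the stated probability,
\[
\sup_{\delta\in\mathcal{C}(c)}\left|\,\frac{\lVert\Psi A\delta\rVert_2}{\sqrt{n}\,\lVert A\delta\rVert_2}-1\,\right|\le r,
\]
since this yields $n^{-1/2}\lVert\mathbb{X}(a)\delta\rVert_2\ge(1-r)\lVert A\delta\rVert_2\ge(1-r)\kappa(s,c)\lVert\delta_J\rVert_2$ for all $\delta\in\mathcal{C}(c)$, i.e. $\kappa_{\mathbb{X}(a)}(s,c)\ge(1-r)\kappa(s,c)>0$.

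The heart of the argument is a uniform concentration statement for the quadratic form $v\mapsto n^{-1}\lVert\Psi v\rVert_2^2$ over the normalized image set $V=\{A\delta/\lVert A\delta\rVert_2:\delta\in\mathcal{C}(c)\}$. For a fixed $v$ with $\lVert v\rVert_2=1$, isotropy gives $\E\,n^{-1}\lVert\Psi v\rVert_2^2=1$, and because the rows of $\Psi$ are $\psi_2$ with constant $\iota$, $\lVert\Psi v\rVert_2^2$ is a sum of $n$ independent sub-exponential terms, so a Bernstein-type inequality controls the deviation at level $r$ with failure probability of order $\exp\{-c'r^2n/\iota^4\}$. To upgrade this to a bound uniform over $V$, I would use the sparsification/peeling device of Rudelson--Zhou: after normalizing $\lVert\delta_J\rVert_2=1$, the cone constraint gives $\lVert\delta\rVert_1\le(1+c)\sqrt{s}$, and sorting the coordinates of $\delta$ off $J$ into successive blocks of size of order $m$ expresses $\delta$ as a controlled combination of $O(m)$-sparse vectors of bounded Euclidean norm, the $c$-cone being enlarged to the $3c$-cone after truncation; the quantity $m$ in the statement is precisely the effective sparsity this decomposition requires, the factors $\max_j\lVert Ae_j\rVert_2^2$ and $\kappa(s,3c)$ entering because the masses of the discarded blocks are controlled by the column norms of $A$ while the head of the decomposition lies in the enlarged cone on which $\text{RE}(s,3c)$ applies.

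Once the problem is reduced to $m$-sparse unit vectors, the uniform bound follows from a union bound over the $\binom{p}{m}\le(ep/m)^m$ possible supports, combined on each fixed support with a $\tfrac14$-net (cardinality at most $9^m$) on the $m$-dimensional sphere and the standard net-approximation step controlling $\sup_{\lVert u\rVert_2=1,\ \supp(u)\subseteq S}|n^{-1}\lVert\Psi u\rVert_2^2-1|$ by its values on the net. Feeding the pointwise Bernstein estimate into this union bound succeeds as soon as $n$ exceeds a constant multiple of $(m\iota^4/r^2)\log(pe/(mr))$, and tracking the absolute constants produces exactly the hypothesis $n_a\ge\frac{2000m\iota^4}{r_a^2}\log\{(mr_a)^{-1}60pe\}$ together with the failure probability $2\exp\{-r_a^2n_a/(2000\iota^4)\}$.

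The step I expect to be the main obstacle is this uniform control over the cone: $\mathcal{C}(c)$ is unbounded and not contained in any low-dimensional subspace, so a naive net over the whole cone is useless, and the entire difficulty is to exploit the approximate sparsity of cone elements while keeping careful track of how $\kappa(s,3c)$, the column norms $\lVert Ae_j\rVert_2$, and the block truncation combine into the effective dimension $m$ --- and to verify that the truncation never moves a vector into a region where $\lVert A\delta\rVert_2$ could vanish, which would break the normalization. Everything else (the isotropy identity, the $\psi_2$ Bernstein inequality, and the covering-number arithmetic) is routine.
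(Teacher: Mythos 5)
Your proposal is essentially a reconstruction of the external theorem that the paper invokes: the paper's entire ``proof'' of this lemma is the observation that it is a direct consequence of Theorem~6 in \cite{rudelson2012reconstruction}, and no argument is given. What you sketch --- reduce to the cone, show uniform concentration of $n^{-1}\lVert\Psi v\rVert_2^2$ over the normalized image $\{A\delta/\lVert A\delta\rVert_2:\delta\in\mathcal{C}(c)\}$ via a sparsification/peeling step that enlarges the $c$-cone to the $3c$-cone and produces the effective sparsity $m$ (with $\max_j\lVert Ae_j\rVert_2$ and $\kappa(s,3c)$ entering exactly where you say), then a union bound over $\binom{p}{m}$ supports with nets and a sub-exponential Bernstein bound --- is precisely the Rudelson--Zhou argument behind their Theorem~6, and your normalization step correctly matches the paper's definition of $\kappa$ (with $q=p$ for $p^{1/2}A_a$ and $q=n_a$ for $\mathbb{X}(a)$), so the conclusion $\kappa_{\mathbb{X}(a)}(s,c_a)\ge(1-r_a)\kappa(s,c_a)$ follows as you state. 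So the mathematics agrees with the paper's source; the difference is only in packaging: the paper buys brevity by citing the reduction principle wholesale, while your route buys self-containedness but leaves the genuinely hard ingredient --- the uniform control over the unbounded cone via the peeling/convexity argument, and the claim that ``tracking the absolute constants produces exactly'' the $2000$/$60$ constants and the stated failure probability --- at the level of assertion rather than proof. If you wanted a complete standalone proof you would need to write out that reduction principle (or cite it, as the paper does); as a blind sketch, though, you have correctly identified both the strategy and the one step where all the difficulty lives.
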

 Lemma~\ref{lem:RE} is a direct consequence of  Theorem 6 in \cite{rudelson2012reconstruction}, we thus omit the proof. The following lemma provides a key step used in the proof of Proposition~\ref{prop:consistency_app}.

\begin{lemma}\label{lem:lambda}
Under Assumption \ref{assum} \ref{assum:subgaussian} and \ref{assum:noise}, for group $a\in\lbrace 0,1 \rbrace$, let

$$
\lambda_a = 2c \sigma_\varepsilon \max\left( \sqrt{\frac{(1+\alpha)\log n_a + \log (2p)}{\xi_a n_a}}, \frac{(1+\alpha)\log n_a + \log (2p)}{\xi_a n_a} \right)\,,
$$

then with probability at least $1-n_a^{-(1+\alpha)}$, for all $\beta\in\R^p$,
\begin{align}
&\quad \frac{1}{n_a} \lVert \mathbb{X}(a) (\hat{\beta}(a) - \beta^*)\rVert_2^2 + \lambda_a \sum_{j=1}^p |\hat{\beta}_j(a) - \beta_j| \nonumber\\
&\leq \frac{1}{n_a} \lVert \mathbb{X}(a) (\beta - \beta^*)\rVert_2^2 + 4 \lambda_a \sum_{j\in J(\beta)} |\hat{\beta}_j(a) - \beta_j| \label{eq:lem11}\\
&\leq \frac{1}{n_a} \lVert \mathbb{X}(a) (\beta - \beta^*)\rVert_2^2 + 4 \lambda_a \sqrt{|J(\beta)|} \sqrt{\sum_{j\in J(\beta)} |\hat{\beta}_j(a) - \beta_j|^2}. \label{eq:lem12}
\end{align}

\end{lemma}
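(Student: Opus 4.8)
The plan is to follow the classical basic-inequality argument for the Lasso, applied to the subsample of units assigned to treatment arm $a$. Write $n=n_a$, $\hat\beta=\hat\beta(a)$, $\mathbb{X}=\mathbb{X}(a)$, $\lambda=\lambda_a$, and let $W$ be the $n$-vector of noise terms $\varepsilon_i$ for $i$ with $T_i=a$. The first step is to set up the right \emph{event of good noise control}. By optimality of the Lasso solution, for any competitor $\beta$ we have the basic inequality
\[
\frac{1}{n}\lVert \mathbb{X}(\hat\beta-\beta^*) \rVert_2^2
+ \lambda\sum_{j=1}^p |\hat\beta_j|
\le \frac{1}{n}\lVert \mathbb{X}(\beta-\beta^*) \rVert_2^2
+ \frac{2}{n} W^\top \mathbb{X}(\hat\beta-\beta)
+ \lambda\sum_{j=1}^p |\beta_j|,
\]
and the cross term is bounded by $\frac{2}{n}\lVert \mathbb{X}^\top W\rVert_\infty \lVert \hat\beta-\beta\rVert_1$. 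So the key is to show that on an event of probability at least $1-n^{-(1+\alpha)}$ one has $\frac{2}{n}\lVert \mathbb{X}^\top W\rVert_\infty \le \lambda$, i.e.\ $\max_j \frac1n|\langle \mathbb{X}_{\cdot j}, W\rangle| \le \lambda/2$ with the stated choice of $\lambda$.

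The heart of the work is this tail bound. Conditionally on the assignments and covariates, $\langle \mathbb{X}_{\cdot j}, W\rangle$ is a sum of independent mean-zero sub-Gaussian variables (Assumption~\ref{assum}\ref{assum:noise}: $\lVert\varepsilon\rVert_{\psi_2}\le L\sigma_\varepsilon$), so it is sub-Gaussian with parameter proportional to $\sigma_\varepsilon \lVert \mathbb{X}_{\cdot j}\rVert_2$; a Bernstein/sub-Gaussian tail then gives, for each $j$,
\[
\p\!\left(\tfrac1n|\langle \mathbb{X}_{\cdot j},W\rangle| > t\right)
\le 2\exp\!\left(-\,c\,\frac{n^2 t^2}{\sigma_\varepsilon^2 \lVert \mathbb{X}_{\cdot j}\rVert_2^2}\right).
\]
I would then control $\lVert \mathbb{X}_{\cdot j}\rVert_2^2 = n\,\widehat{\Sigma}_{jj}$ using the representation $\mathbb{X}(a)=\Psi_a A_a$ from Assumption~\ref{assum}\ref{assum:subgaussian}: the column $\mathbb{X}_{\cdot j}=\Psi_a A_a e_j$ has rows that are sub-Gaussian with norm $\lesssim \iota\lVert A_a e_j\rVert_2$, and $\lVert A_a e_j\rVert_2 \le \lVert A_a\rVert_{1,2}$ is bounded by a constant; so $\frac1n\lVert \mathbb{X}_{\cdot j}\rVert_2^2$ concentrates around a constant with sub-exponential fluctuations, which is exactly the source of the $\max(\sqrt{\cdot},\cdot)$ (sub-Gaussian vs.\ sub-exponential regime) form of $\lambda_a$ and of the constants $\xi_a$. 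Taking $t=\lambda/2$, substituting the bound on $\lVert \mathbb{X}_{\cdot j}\rVert_2^2$, and a union bound over $j=1,\dots,p$ (the $\log(2p)$ term) together with absorbing the $n^{-(1+\alpha)}\!=\!\exp(-(1+\alpha)\log n)$ slack (the $(1+\alpha)\log n$ term) yields $\p(\frac2n\lVert \mathbb{X}^\top W\rVert_\infty \le \lambda) \ge 1-n^{-(1+\alpha)}$; this part I would state as a lemma or cite from the Lasso literature, since it is standard and the bookkeeping of constants is routine.

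Once we are on this event, the rest is purely deterministic algebra, following \cite{bickel2009simultaneous}. From the basic inequality with $\frac2n\lVert\mathbb{X}^\top W\rVert_\infty\le\lambda$ we get
\[
\frac1n\lVert \mathbb{X}(\hat\beta-\beta^*)\rVert_2^2
+ \lambda\lVert \hat\beta-\beta\rVert_1
\le \frac1n\lVert \mathbb{X}(\beta-\beta^*)\rVert_2^2
+ 2\lambda\lVert \hat\beta-\beta\rVert_1 + \lambda\sum_j(|\beta_j|-|\hat\beta_j|),
\]
and splitting the $\ell_1$ sums over $J(\beta)$ and $J(\beta)^{\mathsf c}$ and using $|\beta_j|-|\hat\beta_j|\le|\hat\beta_j-\beta_j|$ on $J(\beta)$ while $|\beta_j|-|\hat\beta_j| = -|\hat\beta_j-\beta_j|$ on $J(\beta)^{\mathsf c}$ (where $\beta_j=0$), the off-support terms cancel favorably and one arrives at \eqref{eq:lem11}. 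Then \eqref{eq:lem12} is just Cauchy–Schwarz applied to $\sum_{j\in J(\beta)}|\hat\beta_j(a)-\beta_j| \le \sqrt{|J(\beta)|}\,\big(\sum_{j\in J(\beta)}|\hat\beta_j(a)-\beta_j|^2\big)^{1/2}$. The main obstacle is genuinely the tail bound in the middle step: one must handle the randomness of $\lVert \mathbb{X}_{\cdot j}\rVert_2$ (not just a fixed design) and correctly track how Assumption~\ref{assum}\ref{assum:subgaussian}\ref{assum:noise} produce the precise $\max(\sqrt{\cdot},\cdot)$ form of $\lambda_a$; everything after that is standard Lasso manipulation.
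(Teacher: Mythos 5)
Your proposal has the same skeleton as the paper's proof — Lasso basic inequality, a high-probability event controlling $\max_j \frac{2}{n_a}\bigl|\sum_{i:T_i=a}\varepsilon_i x_{ij}\bigr|$ by $\lambda_a$, then the deterministic support-splitting algebra of \cite{bickel2009simultaneous} followed by Cauchy--Schwarz — and the deterministic half is exactly what the paper does. Where you diverge is the key stochastic step. The paper does not condition on the design: it notes that Assumptions \ref{assum}\ref{assum:subgaussian}--\ref{assum:noise} make each product $\varepsilon_i x_{ij}$ sub-exponential (via $\lVert \varepsilon_i x_{ij}\rVert_{\psi_1}\le C'L\sigma_\varepsilon$, Lemma 2.7.7 of Vershynin), applies Bernstein's inequality directly to $\sum_{i:T_i=a}\varepsilon_i x_{ij}$, and takes a union bound over $j$; the $\min(t^2,t)$ exponent in Bernstein is precisely what produces the $\max\bigl(\sqrt{\cdot},\cdot\bigr)$ form of $\lambda_a$ and, with the stated constants $c,\xi_a$, gives failure probability exactly $n_a^{-(1+\alpha)}$. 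Your two-step route — condition on the design, use a sub-Gaussian tail in terms of $\lVert \mathbb{X}_{\cdot j}\rVert_2$, then separately concentrate the column norms via $\mathbb{X}(a)=\Psi_a A_a$ — is workable, but note two consequences: (i) the column-norm event fails with probability of order $p\exp(-cn_a)$, so you would only obtain the lemma's stated bound $1-n_a^{-(1+\alpha)}$ after either adding this term or imposing $n_a\gtrsim\log p$ (which the lemma itself does not assume, though Proposition~\ref{prop:consistency} later does, so downstream use is unaffected); and (ii) your attribution of the $\max\bigl(\sqrt{\cdot},\cdot\bigr)$ shape of $\lambda_a$ to sub-exponential fluctuations of $\lVert \mathbb{X}_{\cdot j}\rVert_2^2$ is not how it arises in the paper — it comes from the sub-exponential (Bernstein) tail of the products $\varepsilon_i x_{ij}$ themselves, and in your conditional argument the sub-Gaussian regime alone would suffice for the given $\lambda_a$. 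In short: correct strategy, same endgame, but the paper's one-shot Bernstein bound is cleaner, matches the stated probability exactly, and explains the form of $\lambda_a$ directly, whereas your version needs an extra design event and slightly different bookkeeping.
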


\begin{proof}[Proof of Lemma \ref{lem:lambda}]
Without loss of generality, suppose $Y$ is centralized. By the definition of Lasso, for all $\beta \in \R^p$ and group $a=0,1$, we have
\begin{align*}
\frac{1}{n_a} \lVert Y-\mathbb{X}(a)\hat{\beta}(a)\rVert_2^2 + 2\lambda_a\sum_{j=1}^p |\hat{\beta}_j(a)| \leq \frac{1}{n_a} \lVert Y-\mathbb{X}(a)\beta\rVert_2^2 + 2\lambda_a\sum_{j=1}^p |\beta_j|\,.
\end{align*}
Since $Y = \mathbb{X}\beta^* +\varepsilon$,  it follows that
\begin{align}\label{eq:lem1}
&\quad \frac{1}{n_a} \lVert \mathbb{X}(a)(\beta^* - \hat{\beta}(a))\rVert_2^2 + 2\lambda_a\sum_{j=1}^p |\hat{\beta}_j(a)| \nonumber \\
&\leq \frac{1}{n_a} \lVert \mathbb{X}(a)(\beta^* - \beta)\rVert_2^2 + 2\lambda_a\sum_{j=1}^p |\beta_j| + \frac{2}{n_a} \varepsilon^\top \mathbb{X}(a) (\hat{\beta}(a) - \beta) \,.
\end{align}
Define the event
$$
\mathcal{A}_a = \bigcap_{j=1}^p \left\{\frac{2}{n_a} \left| \sum_{i:T_i=a} \varepsilon_i x_{ij} \right| \leq \lambda_a \right\}\,,\quad a \in \lbrace 0,1 \rbrace,
$$

By Assumption \ref{assum} \ref{assum:subgaussian} and \ref{assum:noise}, we obtain $\lVert x_{ij}\rVert_{\psi_2} \leq C'$ for some constant $C'$. It follows from Lemma 2.7.7. of \cite{vershynin2018high} that $\varepsilon_i x_{ij}$ is sub-exponential with 
$$
\lVert \varepsilon_i x_{ij} \rVert_{\psi_1} \coloneq \inf\{t>0: \E[\exp(|\varepsilon_i x_{ij}|/t) ]\leq 2\} \leq C'L\sigma_\varepsilon.
$$
Therefore, by Bernstein's inequality, i.e., Corollary 2.8.3 in \cite{vershynin2018high}, it follows that
 \begin{align}
 \p(\mathcal{A}_a^c) & \leq \sum_{j=1}^p \p\left(\frac{2}{n_a} \left| \sum_{i:T_i=a} \varepsilon_i x_{ij} \right| > \lambda_a \right) \nonumber \\ 
 &\leq 2 p\exp\left( -\xi_a n_a\min\left( \frac{\lambda_a^2}{4L^2(C')^2\sigma^2_\varepsilon}, \sqrt{\frac{\lambda_a^2}{4L^2(C')^2\sigma^2_\varepsilon}} \right) \right) \nonumber \\
 & = n_a^{-(1+\alpha)}. \nonumber
 \end{align}

Then, on the event $\mathcal{A}_a$, equation (\ref{eq:lem1}) can be rewritten as 
\begin{align*}
&\quad \frac{1}{n_a} \lVert \mathbb{X}(a)(\hat{\beta}(a) - \beta^*)\rVert_2^2 + \lambda_a \sum_{j=1}^p |\hat{\beta}_j(a) - \beta_j| \\
&\leq \frac{1}{n_a} \lVert \mathbb{X}(a)(\beta - \beta^*)\rVert_2^2 + 2 \lambda_a \sum_{j=1}^p (|\hat{\beta}_j(a) - \beta_j| + |\beta_j| - |\hat{\beta}_j(a)|) \\
&\leq \frac{1}{n_a} \lVert \mathbb{X}(a)(\beta - \beta^*)\rVert_2^2 + 4 \lambda_a \sum_{j\in J(\beta)} |\hat{\beta}_j(a) - \beta_j|,
\end{align*}
yielding the conclusion of (\ref{eq:lem12}). This completes the proof of this lemma. 
\end{proof}
 
Next we prove Proposition~1 based on Lemma \ref{lem:RE} with $c_a=3$ and Lemma \ref{lem:lambda}.

\begin{proposition}[Covariate selection consistency]\label{prop:consistency_app}
Under Assumption \ref{assum} \ref{assum:subgaussian}-\ref{assum:minimal_signal} with the defined constants, for $a\in\lbrace0,1 \rbrace$, let the penalty parameter satisfy
$$
\lambda_a = 2c \sigma_\varepsilon \max\left( \sqrt{\frac{(1+\alpha)\log n_a + \log (2p)}{\xi_a n_a}}, \frac{(1+\alpha)\log n_a + \log (2p)}{\xi_a n_a} \right)\,,
$$
Then, when $n_a>C\log(p)$, with probability at least $1-n_0^{-1-\alpha} - n_1^{-1-\alpha} - 2\exp\lbrace-C_0 n_0\rbrace - 2\exp\lbrace-C_1 n_1  \rbrace$, 
\begin{equation}\label{eq:consistency_app}
\widehat{J} \coloneq \bigcap_{a\in \lbrace 0, 1 \rbrace} \left\{ j: |\hat{\beta}_j(a)| \geq l_\beta \right\} = J^* \, ,
\end{equation}
where $C$, $C_0$ and $C_1$ are some positive constants depending on $\iota$, $r_0$ and $r_1$ in the assumptions.
\end{proposition}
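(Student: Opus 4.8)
The plan is to establish the identity one treatment group at a time. Writing $\widehat{S}_a := \{j : |\hat\beta_j(a)| \geq l_\beta\}$, so that $\widehat{J} = \widehat{S}_0 \cap \widehat{S}_1$, it suffices to show, for each $a \in \{0,1\}$ separately, that $\widehat{S}_a = J^*$ holds on an event of probability at least $1 - n_a^{-(1+\alpha)} - 2\exp\{-C_a n_a\}$; the claim then follows on the intersection of these two events by a union bound. Fixing $a$, two facts already in hand drive the argument. First, Lemma~\ref{lem:RE} (applied with the generic constant $c_a = 3$), whose sample-size hypothesis is exactly the stated $n_a > C\log p$ under Assumption~\ref{assum}\ref{assum:subgaussian} (since the quantity $m$ there stays bounded via the bounds on $\lVert A_a\rVert_{1,2}$ and on $\kappa(s,\cdot)$), yields that with probability at least $1 - 2\exp\{-C_a n_a\}$, where $C_a = r_a^2/(2000\iota^4)$, the design $\mathbb{X}(a)$ satisfies $\text{RE}(s,3)$ with $\kappa_{\mathbb{X}(a)}(s,3) \geq (1-r_a)\kappa(s,3) > 0$. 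Second, Lemma~\ref{lem:lambda}, which on the event $\mathcal{A}_a$ of probability at least $1 - n_a^{-(1+\alpha)}$ supplies the basic Lasso inequality for the prescribed $\lambda_a$.

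On $\mathcal{A}_a$ I would specialize Lemma~\ref{lem:lambda} to $\beta = \beta^*$. With $\delta := \hat\beta(a) - \beta^*$, its first inequality reads
$$
\frac{1}{n_a}\lVert \mathbb{X}(a)\delta \rVert_2^2 + \lambda_a \lVert \delta \rVert_1 \leq 4\lambda_a \lVert \delta_{J^*} \rVert_1 ,
$$
from which I read off the cone relation $\lVert \delta_{(J^*)^{\sf c}} \rVert_1 \leq 3\lVert \delta_{J^*} \rVert_1$---precisely the cone on which $\text{RE}(s,3)$ is active. Then $\lVert \delta_{J^*} \rVert_1 \leq s^{1/2}\lVert \delta_{J^*} \rVert_2$ combined with the $\text{RE}(s,3)$ bound $\lVert \mathbb{X}(a)\delta \rVert_2 \geq n_a^{1/2}\kappa_{\mathbb{X}(a)}(s,3)\lVert \delta_{J^*} \rVert_2$ turns the display into $\lVert \delta_{J^*} \rVert_2 \lesssim \lambda_a s^{1/2}/\kappa_{\mathbb{X}(a)}^2(s,3)$; the small-coordinate truncation argument of \cite{bickel2009simultaneous} then upgrades this to the full-vector bound $\lVert \hat\beta(a) - \beta^* \rVert_\infty \leq \lVert \hat\beta(a) - \beta^* \rVert_2 \leq C'\lambda_a s^{1/2}/\{(1-r_a)^2\kappa^2(s,3)\}$. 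Substituting the prescribed $\lambda_a$, the right-hand side is of the order of the lower bound imposed on $l_\beta$ in Assumption~\ref{assum}\ref{assum:minimal_signal}, so that the estimation error is strictly dominated by the signal level $l_\beta$; combined with $\min_{j\in J^*}|\beta_j^*| \geq l_\beta$, hard-thresholding at level $l_\beta$ retains every coordinate in $J^*$---there $|\hat\beta_j(a)| \geq |\beta_j^*| - \lVert \delta \rVert_\infty$ stays above the threshold---and discards every coordinate outside $J^*$---there $|\hat\beta_j(a)| = |\delta_j| \leq \lVert \delta \rVert_\infty$ falls below it. Hence $\widehat{S}_a = J^*$.

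Intersecting the two per-group identities gives $\widehat{J} = J^*$, while a union bound over the four failure events---$\mathcal{A}_0^{\sf c}$, $\mathcal{A}_1^{\sf c}$, and the two $\text{RE}$-failure events---produces the stated success probability $1 - n_0^{-1-\alpha} - n_1^{-1-\alpha} - 2\exp\{-C_0 n_0\} - 2\exp\{-C_1 n_1\}$, with $C, C_0, C_1$ depending only on $\iota, r_0, r_1$ via Lemma~\ref{lem:RE}. I expect the delicate step to be the middle one: passing from $\text{RE}(s,3)$ together with the Lasso basic inequality to a coordinatewise ($\ell_\infty$) estimation-error bound with the sharp $s^{1/2}$ dependence---the Bickel--Ritov--Tsybakov part---and then threading all the multiplicative constants (those of Lemma~\ref{lem:RE}, of Lemma~\ref{lem:lambda}, and of the definition of $l_\beta$) carefully enough that thresholding at level $l_\beta$ separates $J^*$ from its complement exactly. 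Everything else is routine probabilistic bookkeeping.
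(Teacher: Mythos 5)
Your proposal is correct and follows essentially the same route as the paper's proof: specialize Lemma~\ref{lem:lambda} to $\beta=\beta^*$, extract the cone condition, invoke Lemma~\ref{lem:RE} with $c_a=3$ to control the estimation error, compare it with the signal floor $l_\beta$ from Assumption~\ref{assum}\ref{assum:minimal_signal} to separate $J^*$ from its complement by thresholding, and combine the two groups with a union bound over $\mathcal{A}_0^{\sf c}$, $\mathcal{A}_1^{\sf c}$ and the two RE-failure events. The only cosmetic difference is that the paper phrases the error control as an $\ell_1$-bound $\lVert\hat\beta(a)-\beta^*\rVert_1\le l_\beta$ rather than your $\ell_2/\ell_\infty$ bound, and the constant bookkeeping in the thresholding step is handled with the same level of looseness in both arguments.
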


\begin{proof}[Proof of Proposition \ref{prop:consistency_app}]
Let $\delta(a) = \hat{\beta}(a) - \beta^*$, and  set $\beta = \beta^*$ for (\ref{eq:lem11}) and (\ref{eq:lem12}). It follows from (\ref{eq:lem11}) that  $\lVert \delta_{(J^*)^c}(a)\rVert_1 \leq 3 \lVert \delta_{J^*}(a)\rVert_1$, and thus $\lVert \delta(a)\rVert_1 \leq 4\lVert \delta_{J^*}(a)\rVert_1 \leq 4\sqrt{s}\lVert \delta_{J^*}(a)\rVert_2$. On the event $\mathcal{A}_a$ in the proof of Lemma \ref{lem:lambda}, we have
$$
\frac{1}{n_a} \lVert \mathbb{X}(a) \delta(a)\rVert_2^2 \leq 4\lambda_a\sqrt{s}\lVert \delta_{J^*}(a)\rVert_2\,.
$$
Taking $c_a=3$ for  Lemma \ref{lem:RE},  it follows that  $\lVert \delta_{J^*}(a) \rVert_2 \leq 4\lambda_a\sqrt{s}/((1-r_a)\kappa(s,3))^2$,  and it further yields 
\begin{equation}\label{eq:l1_error}
\lVert \hat{\beta}(a) - \beta^*\rVert_1 \leq l_\beta\,.
\end{equation}
In addition, by Assumption~\ref{assum}~\ref{assum:minimal_signal} and the fact that 
$$
\lVert \hat{\beta}(a) - \beta^*\rVert_1 = \sum_{j\notin J^*} |\hat{\beta}(a)| + \sum_{j\in J^*} |\hat{\beta}(a) - \beta^*|,
$$ 
it follows that $\max_{j\notin J^*} |\hat{\beta}_j(a)| \leq l_\beta$, and $\min_{j\in J^*} |\hat{\beta}_j(a)| \geq l_\beta$. Thus, on the intersection of $\mathcal{A}_0$, $\mathcal{A}_1$, and the event that $\mathbb{X}(0)$ and $\mathbb{X}(1)$ satisfy $\text{RE}(s,3)$, we have (\ref{eq:consistency_app}). By Lemma \ref{lem:RE} and \ref{lem:lambda}, this event has probability at least $1-n_1^{-1-\alpha} - n_0^{-1-\alpha} - 2\exp\lbrace-r_0^2n_0/(2000\iota^4)\rbrace - 2\exp\lbrace-r_1^2n_1/(2000\iota^4)\rbrace$, which completes the proof for Proposition~\ref{prop:consistency_app}.
\end{proof}

\subsection{Proof of Theorem \ref{thm:balance_phi_app}: the convergence rate of imbalance measure}\label{sec:proof_2}

In this section, we present the proof of Theorem \ref{thm:balance_phi_app}, Corollary~\ref{cor:M_app}, and Proposition~\ref{prop:consistency_app}.

\begin{theorem}\label{thm:balance_phi_app}
Suppose Assumptions \ref{assum} \ref{assum:subgaussian} - \ref{assum:moment} hold and  {\sf ARCS} is used, then 
$\E(\lVert \Lambda_{n,J^*}\rVert_2^2) = O(n^{\frac{1}{\gamma-1}}s_\phi^{\frac{2\gamma-3}{\gamma-1}}) $ and $\Lambda_{n,J^*} = O_p(n^{\frac{1}{2(\gamma-1)}}s_\phi^{\frac{2\gamma-3}{2\gamma-2}})$. Further, if $\E (\lVert \phi(X_{i,J^*})\rVert_2^\gamma) = O(s_\phi^{\frac{\gamma}{2}})$ for all $\gamma>2$, then $\Lambda_{n,J^*} = O_p(n^\epsilon s^{1-\epsilon}_\phi)$ for any small $\epsilon>0$.
\end{theorem}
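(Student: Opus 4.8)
The plan is to reduce the whole problem to the balancing dynamics of the biased coin on the \emph{fixed} true set $J^*$ and then run a drift argument on $\|\Lambda_{i,J^*}\|_2^2$ in the spirit of \cite{ma2022new}, paying attention to the fact that here the feature dimension $s_\phi$ is allowed to diverge. First I would upgrade Proposition~\ref{prop:consistency_app} to hold uniformly over all batches: since every assignment puts mass at least $\min(\rho,1-\rho)>0$ on each arm, an Azuma--Hoeffding bound gives $\min(n_0^{(b)},n_1^{(b)})\gtrsim N_0+bN$ on an overwhelming-probability event, and a union bound over $b=1,\dots,(n-N_0)/N$ of the failure probabilities in Proposition~\ref{prop:consistency_app}---summable because $\alpha>0$ and $n_a^{(b)}$ grows linearly---produces an event $\mathcal E$ with $\p(\mathcal E)\to1$ on which $\widehat J^{(b)}=J^*$ for every batch $b$ once $N_0+bN$ exceeds a threshold of order $\log p$. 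On $\mathcal E$, for every patient past that burn-in, step~(3) acts on $\text{Imb}^\phi_{i,J^*}$, i.e.\ on $\phi$ restricted to $J^*$; the burn-in involves only $O(\log p)$ patients and, being at worst complete randomization on an $s_\phi$-dimensional feature, contributes at most $O_p(\sqrt{(\log p)\,s_\phi})$ to $\Lambda$, which is of smaller order than the claimed rate. So it suffices to analyse $D_i:=\Lambda_{i,J^*}$ assuming the coin targets $\text{Imb}^\phi_{i,J^*}$ throughout.

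Second, fix $i$ and condition on $\mathcal F_{i-1}$ and $X_i$. Because $\text{Imb}^\phi_{i,J^*}(1)-\text{Imb}^\phi_{i,J^*}(0)=4\,D_{i-1}^\top\phi(X_{i,J^*})$, the coin chooses $T_i$ so that $\E[(2T_i-1)\mid\mathcal F_{i-1},X_i]=-(2\rho-1)\,\mathrm{sign}\!\big(D_{i-1}^\top\phi(X_{i,J^*})\big)$; expanding $\|D_i\|_2^2=\|D_{i-1}\|_2^2+2(2T_i-1)D_{i-1}^\top\phi(X_{i,J^*})+\|\phi(X_{i,J^*})\|_2^2$ and taking expectations yields
$$
\E\big[\|D_i\|_2^2\mid\mathcal F_{i-1}\big]\ \le\ \|D_{i-1}\|_2^2\ -\ 2(2\rho-1)\,\E\big[\,|D_{i-1}^\top\phi(X_{i,J^*})|\ \big|\ \mathcal F_{i-1}\big]\ +\ \E\|\phi(X_{i,J^*})\|_2^2 .
$$
To lower-bound the drift term I would truncate: for a unit vector $b$ in the range of $\E[\phi\phi^\top]$ (where $D_{i-1}$ lives almost surely, so that the minimum-nonzero-eigenvalue constant $c_\phi$ of Assumption~\ref{assum}\ref{assum:moment} applies) and any $R>0$,
$$
\E|b^\top\phi|\ \ge\ \tfrac1R\,\E\big[(b^\top\phi)^2\,\1(\|\phi\|_2\le R)\big]\ \ge\ \tfrac1R\big(c_\phi-\E[\|\phi\|_2^2\,\1(\|\phi\|_2>R)]\big)\ \ge\ \tfrac1R\big(c_\phi - C\,s_\phi^{\gamma/2}R^{-(\gamma-2)}\big),
$$
using Assumption~\ref{assum}\ref{assum:moment}; optimizing at $R\asymp s_\phi^{\gamma/(2(\gamma-2))}$ gives $\E[|D_{i-1}^\top\phi(X_{i,J^*})|\mid\mathcal F_{i-1}]\gtrsim s_\phi^{-\gamma/(2(\gamma-2))}\,\|D_{i-1}\|_2$, while $\E\|\phi(X_{i,J^*})\|_2^2\le(\E\|\phi(X_{i,J^*})\|_2^\gamma)^{2/\gamma}=O(s_\phi)$. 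Hence $\E[\|D_i\|_2^2\mid\mathcal F_{i-1}]\le\|D_{i-1}\|_2^2- a\|D_{i-1}\|_2 + d$ with $a\asymp s_\phi^{-\gamma/(2(\gamma-2))}$ and $d\asymp s_\phi$.

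Third---the technical heart---I would iterate this recursion to obtain $\E\|D_n\|_2^2=O(n^{1/(\gamma-1)}s_\phi^{(2\gamma-3)/(\gamma-1)})$, then conclude the $O_p$ bound by Markov's inequality, and finally deduce the last assertion by applying the result with $\gamma=1+\tfrac1{2\epsilon}$, for which $n^{1/(2(\gamma-1))}=n^\epsilon$ and $s_\phi^{(2\gamma-3)/(2\gamma-2)}=s_\phi^{1-\epsilon}$ exactly. The point is that one cannot simply telescope the recursion: dropping the drift gives only the trivial $O(ns_\phi)$, and using it requires controlling $\E\|D_{i-1}\|_2$ by $\sqrt{\E\|D_{i-1}\|_2^2}$, which needs a bound on $\E\|D_{i-1}\|_2^4$ (or higher moments) that is unavailable because $\|\phi(X_{i,J^*})\|_2$ has only $\gamma>2$ finite moments. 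The remedy, adapting \cite{ma2022new} to a diverging $s_\phi$, is to truncate $\|\phi\|_2$ at a level depending on $n$ and $s_\phi$ and to balance, as $n$ grows, the negative drift $-a\|D_{i-1}\|_2$ (which dominates once $\|D_{i-1}\|_2$ exceeds a level of order $s_\phi\cdot s_\phi^{\gamma/(2(\gamma-2))}$), the per-step variance $O(s_\phi)$, and the cumulative cost of the rare large increments; tuning this truncation reproduces exactly the exponents $1/(\gamma-1)$ and $(2\gamma-3)/(\gamma-1)$.

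The hard part will be this last step: getting from the one-step expected-value recursion to the \emph{sharp} bound on $\E\|D_n\|_2^2$ despite heavy-tailed increments---in particular, choosing the truncation level so that the rare large values of $\phi$ contribute precisely the factor $n^{1/(\gamma-1)}$ rather than the cruder rate one gets from, say, controlling $\max_{i\le n}\|\phi(X_{i,J^*})\|_2$. Secondary technical points are the uniform-over-batches selection consistency, and verifying that $D_{i-1}$ lies in the range of $\E[\phi\phi^\top]$ almost surely so that the eigenvalue lower bound is applicable.
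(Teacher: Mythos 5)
Your overall skeleton matches the paper's: reduce via selection consistency (Proposition \ref{prop:consistency_app}) to balancing $\phi(X_{i,J^*})$ on the fixed set $J^*$ after a burn-in of lower order, exploit the biased coin through $\text{Imb}^\phi(1)-\text{Imb}^\phi(0)=4\Lambda^\top\phi$, and finish the ``all $\gamma>2$'' claim by taking $\gamma$ large so that $n^{1/(2(\gamma-1))}=n^\epsilon$. (The paper does the reduction by Borel--Cantelli rather than your union bound, but that difference is immaterial.) The genuine gap is exactly where you say the ``technical heart'' lies, and it is not just hard but unresolved in your plan: you never exhibit the mechanism that produces the exponents $1/(\gamma-1)$ and $(2\gamma-3)/(\gamma-1)$, and the remedy you propose (truncating $\|\phi\|_2$ and balancing rare large increments) is not what is needed. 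Moreover, your diagnosis that the required higher-moment control ``is unavailable'' is incorrect: the paper's fix is precisely a $\gamma$-th--moment drift recursion. Using the elementary inequality $\lVert u+v\rVert_2^\gamma-\lVert u\rVert_2^\gamma\le\gamma u^\top v\lVert u\rVert_2^{\gamma-2}+c_\gamma(\lVert v\rVert_2^\gamma+\lVert v\rVert_2^2\lVert u\rVert_2^{\gamma-2})$ together with the coin's drift, one gets $\E\lVert\Lambda_{m_0:n,J^*}\rVert_2^\gamma\le C_\gamma\, n\, s_\phi^{\gamma-1}$; then one takes the $\gamma=2$ recursion, lets $m'$ be the last time its drift term is nonnegative (so that $\E\lVert\Lambda_{m_0:m',J^*}\rVert_2=O(s_\phi)$ and the second moment is nonincreasing afterwards), and interpolates by H\"older, writing $\lVert\Lambda\rVert_2^2=\lVert\Lambda\rVert_2^{\gamma/(\gamma-1)}\lVert\Lambda\rVert_2^{(\gamma-2)/(\gamma-1)}$, to obtain $\E\lVert\Lambda_{m_0:n,J^*}\rVert_2^2\le C\,n^{1/(\gamma-1)}s_\phi^{(2\gamma-3)/(\gamma-1)}$. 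Without this (or an equivalent) argument, your one-step squared-norm recursion only yields the trivial $O(ns_\phi)$, as you note yourself.

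There is a second, quantitative problem: your truncation-based lower bound on the drift gives only $\E\big[|D^\top\phi|\,\big|\,D\big]\gtrsim s_\phi^{-\gamma/(2(\gamma-2))}\lVert D\rVert_2$, whereas the paper's argument (following the proof of Theorem 3.1 in \cite{ma2022new}, using the orthonormal rotation and the lower bound on the nonzero eigenvalues in Assumption \ref{assum}\ref{assum:moment}) uses $\E\big[|\Lambda^\top\phi|\,\big|\,\Lambda\big]\ge c_0\lVert\Lambda\rVert_2$ with $c_0$ free of $s_\phi$. This is not cosmetic: with your drift coefficient $a\asymp s_\phi^{-\gamma/(2(\gamma-2))}$ and noise $d\asymp s_\phi$, the turning point occurs at $\E\lVert D\rVert_2\asymp s_\phi^{1+\gamma/(2(\gamma-2))}$ instead of $s_\phi$, the $\gamma$-th--moment bound inflates to $n\,s_\phi^{3\gamma/2-1}$, and carrying these through the interpolation yields $\E\lVert D_n\rVert_2^2=O(n^{1/(\gamma-1)}s_\phi^{3})$, strictly worse than the claimed $s_\phi^{(2\gamma-3)/(\gamma-1)}<s_\phi^2$ whenever $s_\phi$ diverges. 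So even if you completed your truncation scheme, the $s_\phi$ exponent in Theorem \ref{thm:balance_phi_app} would not follow; you would need the dimension-free drift constant, which is exactly what the minimum-nonzero-eigenvalue condition is invoked for in the paper.
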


\begin{proof}[Proof of Theorem \ref{thm:balance_phi_app}]
The outline for the proof is as follows. By the selection consistency of Lasso, we can divide the patients who have already been assigned into two parts. The first part consists of finite number of patients at the beginning whose covariates are inadequately selected, resulting in over-incorporation or under-incorporation of balancing. The second part consists of those patients whose covariates are consistently selected and balanced by {\sf ARCS}. Then, for the first part of the patients, we derive the rate of the imbalance measure. Finally, based on \cite{ma2022new}, we establish another rate of the imbalance measure for the second part of the patients, depending on both $n$ and $s_\phi$.

Recall $b_n = \lfloor (n-N_0)/N \rfloor$. Define the event $\mathcal{B}_n := \{\widehat{J}^{(b_n)} = J^*\}$ and its complement $\mathcal{B}_n^c$. 

\textbf{Step 1:} By Proposition \ref{prop:consistency_app},  it follows that$$\p(\mathcal{B}_n^c)\leq n_0^{-(1+\alpha)} + n_1^{-(1+\alpha)} + 2\exp(-C_0n_0) + 2\exp(-C_1n_1),$$ which implies $\sum_{n=1}^\infty \p(\mathcal{B}_n^c) < \infty.$ By the Borel-Cantelli lemma, we have
$$
\p(\mathcal{B}_n^c \text{ infinitly often})=\p(\text{for any } m, \text{ there exists } n>m, \mathcal{B}_n^c)=0.
$$
In other words, with probability 1, there exists $m_0$, for all $n>m_0$, $\mathcal{B}_n$ holds. For this $m_0$, we decompose $\Lambda_{n,J^*}$ into two parts,
\begin{equation}\label{eq:Lambda_m}
\Lambda_{n,J^*} = \Lambda_{m_0,J^*} + \sum_{i=m_0+1}^n (2T_i-1) \phi(X_{i,J^*}).
\end{equation}
The remaining part of the proof is to derive the rates of the two terms in (\ref{eq:Lambda_m}). 

\textbf{Step 2:} For the first term in (\ref{eq:Lambda_m}), we have
\begin{align}\label{eq:bound_lambda_m0}
\E \lVert \Lambda_{m_0,J^*}\rVert_2^2 &= \E \left\{\sum_{1\leq i,k\leq m_0} (2T_i-1) (2T_k-1) (\phi(X_{i,J^*}))^\top\phi(X_{k,J^*}) \right\} \nonumber\\
& \leq \sum_{i=1}^{m_0} \E\lVert \phi(X_{i,J^*})\rVert_2^2 + \sum_{1\leq i,k\leq m_0: i\neq k} \E \{ (\phi(X_{i,J^*}))^\top\phi(X_{k,J^*}) \} \nonumber\\
& \leq C_1 s_\phi 
\end{align}
for some constant $C_1$ that does not depend on $n$. The last inequality in (\ref{eq:bound_lambda_m0}) follows from  Assumption \ref{assum} \ref{assum:moment}.

\textbf{Step 3:} Note that when $n>m_0$, by the law of total expectation, $$\E (\cdot) = \E(\cdot \mid \mathcal{B}_n) \p(\mathcal{B}_n) + \E(\cdot \mid \mathcal{B}_n^c) \p(\mathcal{B}_n^c) = \E (\cdot \mid \mathcal{B}_n).$$ For brevity, the expectations in the sequel are all conditional on the event $\mathcal{B}_n$. We will start with an elementary inequality by Taylor's expansion as follows. For $\gamma\geq 2$, and any vectors $u$ and $v$, we have
\begin{align}\label{eq:l2norm_inequality}
\lVert u+v\rVert_2 - \lVert u\rVert_2 \leq \gamma u^\top v \lVert u\rVert_2^{\gamma-2} + c_\gamma (\lVert v\rVert^\gamma_2 + \lVert v\rVert^2_2 \lVert u \rVert_2^{\gamma-2} )
\end{align}
for some constant $c_\gamma$. Denote the second term in (\ref{eq:Lambda_m}) as $\Lambda_{m_0:n,J^*}$. Substitute the decomposition $\Lambda_{m_0:(n+1),J^*} = \Lambda_{m_0:n,J^*} + (2T_{n+1} - 1)\phi(X_{n+1,J^*})$ into (\ref{eq:l2norm_inequality}), it follows that
\begin{align}\label{eq:Taylor}
\lVert \Lambda_{m_0:(n+1),J^*}\rVert_2^\gamma - \lVert \Lambda_{m_0:n,J^*}\rVert_2^\gamma &\leq \gamma (2T_{n+1}-1)\Lambda_{m_0:n,J^*}^\top \phi(X_{n+1,J^*})\lVert \Lambda_{m_0:n,J^*}\rVert_2^{\gamma-2} \nonumber\\
&\quad + c_\gamma (\lVert \phi(X_{n+1,J^*})\rVert_2^\gamma + \lVert \phi(X_{n+1,J^*})\rVert_2^2 \lVert \Lambda_{m_0:n,J^*}\rVert_2^{\gamma-2}).
\end{align}
By some calculations, we have $\text{Imb}_{n,J^*}^\phi (1) - \text{Imb}_{n,J^*}^\phi (0) = 4\Lambda_{n,J^*}^\top \phi(X_{n+1,J^*})$. Then, under {\sf ARCS}, we have
\begin{align}\label{eq:balance_inequality}
&\quad \E((2T_{n+1}-1)\Lambda_{m_0:n,J^*}^\top \phi(X_{n+1,J^*})\mid X_{n+1},\dots,X_1,T_n,\dots,T_1) \nonumber\\
&= \E((2T_{n+1}-1) (\Lambda_{n,J^*} - \Lambda_{m_0,J^*})^\top \phi(X_{n+1,J^*})\mid X_{n+1},\dots,X_1,T_n,\dots,T_1) \nonumber\\
&\leq -(2\rho-1) |\Lambda_{n,J^*}^\top \phi(X_{n+1,J^*})| + c s_\phi
\end{align}
for some constant $c$. Taking the conditional expectation $\E(\cdot\mid \Lambda_{n,J^*})$ on both sides of (\ref{eq:Taylor}), by inequality (\ref{eq:balance_inequality}) and Assumption \ref{assum} \ref{assum:moment}, we obtain
\begin{align*}
&\quad \E (\lVert \Lambda_{m_0:(n+1),J^*}\rVert_2^\gamma\mid \Lambda_{n,J^*}) - \lVert \Lambda_{m_0:n,J^*}\rVert_2^\gamma \\
&\leq (-\gamma(2\rho-1)\E(|\Lambda_{n,J^*}^\top \phi(X_{n+1,J^*})| \mid \Lambda_{n,J^*}) + c s_\phi) \lVert \Lambda_{m_0:n,J^*}\rVert_2^{\gamma-2} + c_\gamma (s_\phi^{\gamma/2}+s_\phi\lVert \Lambda_{m_0:n,J^*}\rVert_2^{\gamma-2}).
\end{align*}
Similar to the proof of that in Theorem 3.1, \cite{ma2022new}, with Assumption \ref{assum} \ref{assum:moment}, we can derive the lower bound of $\E(|\Lambda_{n,J^*}^\top \phi(X_{n+1,J^*})| \mid \Lambda_{n,J^*})$. To be specific, note that there exists an orthonormal matrix $U$ that satisfies
$$
U \E(\phi(X_{i,J^*})(\phi(X_{i,J^*}))^\top) U^\top = \text{diag}(\eta_1,\dots,\eta_d,0,\dots,0),
$$
where $\eta_1,\dots,\eta_d$ are positive eigenvalues of $\E(\phi(X_{i,J^*})(\phi(X_{i,J^*}))^\top)$, and $d\leq s_\phi$. Denote $\tilde{\Lambda}_{n,j}$ and $\tilde{\phi}_j(X_{n+1,J^*})$ as the $j$-th element of $U \Lambda_{n,J^*}$ and $U\phi(X_{n+1,J^*})$, respectively. Then
\begin{align*}
\E(|\Lambda_{n,J^*}^\top \phi(X_{n+1,J^*})| \mid \Lambda_{n,J^*}) &= \E(|\Lambda_{n,J^*}^\top U^\top U \phi(X_{n+1,J^*})| \mid \Lambda_{n,J^*}) \\
&= \E(|\sum_{j=1}^d \tilde{\Lambda}_{n,j}^\top \tilde{\phi}_j(X_{n+1,J^*})| \mid \Lambda_{n,J^*}) \\
&\geq c_0 \lVert\Lambda_{n,J^*}\rVert_2 \\
&\geq c_0\lVert \Lambda_{m_0:n,J^*}\rVert_2 - c_0\lVert \Lambda_{m_0,J^*}\rVert_2
\end{align*}
for some positive constant $c_0$, and the first inequality is by Assumption \ref{assum} \ref{assum:moment} and
$$
\inf\left\{\E\left((\sum_{j=1}^d a_j \tilde{\phi}_j(X_{n+1,J^*}))^2\right):\sum_{j=1}^d a_j^2=1\right\} = \inf\left\{ \sum_{j=1}^d\eta_j a_j^2 : \sum_{j=1}^d a_j^2=1 \right\} = \inf_{1\leq j \leq d}\eta_j > 0. 
$$
Then it follows that
\begin{align}\label{eq:before_function}
&\quad \E (\lVert \Lambda_{m_0:(n+1),J^*}\rVert_2^\gamma\mid \Lambda_{n,J^*}) - \lVert \Lambda_{m_0:n,J^*}\rVert_2^\gamma \nonumber\\
&\leq -\gamma(2\rho-1)c_0\lVert \Lambda_{m_0:n,J^*}\rVert_2^{\gamma-1} + c_\gamma' (s_\phi^{\gamma/2} + s_\phi \lVert\Lambda_{m_0:n,J^*}\rVert_2^{\gamma-2}),
\end{align}
for some constant $c_\gamma'$. Note that the function $h(x) = - c_1 x^{\gamma-1} + c_\gamma' s_\phi x^{\gamma-2}$ for $x\geq 0$ and $c_1>0$ has a maximum value that increases at a rate of $s_\phi^{\gamma-1}$. We also have $\gamma-1>\gamma/2$ by Assumption \ref{assum} \ref{assum:moment}, then for some constant $C_\gamma$,
\begin{align}\label{eq:bound_lambda_gamma}
\E(\lVert \Lambda_{m_0:n,J^*}\rVert_2^\gamma) \leq \E(\lVert \Lambda_{m_0:(n-1),J^*}\rVert_2^\gamma) + C_\gamma s_\phi^{\gamma-1} \leq \dots \leq C_\gamma n s_\phi^{\gamma-1} .
\end{align}
Let $\gamma=2$ in (\ref{eq:before_function}), it follows that
$$
\E (\lVert \Lambda_{m_0:(n+1),J^*}\rVert_2^2) - \E (\lVert \Lambda_{m_0:n,J^*}\rVert_2^2) \leq -2(2\rho-1)c_0\E(\lVert \Lambda_{m_0:n,J^*}\rVert_2) + 2c_\gamma' s_\phi.
$$
Let $m_0\leq m'\leq n$ be the largest integer such that $-2(2\rho-1)c_0\E(\lVert \Lambda_{m_0:m',J^*}\rVert_2) + 2c_\gamma' s_\phi\geq 0$. Then
\begin{align}\label{eq:bound_lambda_n}
\E(\lVert\Lambda_{m_0:n,J^*}\rVert_2^2) &= \E(\lVert\Lambda_{m_0:(m'+1),J^*}\rVert_2^2) + \sum_{i=m'+2}^n (\E(\lVert\Lambda_{m_0:i,J^*}\rVert^2_2) - \E(\lVert\Lambda_{m_0:(i-1),J^*}\rVert^2_2)) \nonumber\\
&\leq \E(\lVert\Lambda_{m_0:(m'+1),J^*}\rVert_2^2) \nonumber\\
&\leq \E(\lVert\Lambda_{m_0:m',J^*}\rVert_2^2) + 2c_\gamma' s_\phi \nonumber\\
&\leq C_2 n^{1/(\gamma-1)} s_\phi^{(2\gamma-3)/(\gamma-1)}
\end{align}
for some constant $C_2$. The last inequality of (\ref{eq:bound_lambda_n}) is by the H$\ddot{\text{o}}$lder's inequality for $t_1=\gamma-1$, $t_2 = \frac{\gamma-1}{\gamma-2}$ and (\ref{eq:bound_lambda_gamma}), 
\begin{align*}
\E(\lVert \Lambda_{m_0:m',J^*}\rVert_2^2) &\leq \{\E(\lVert \Lambda_{m_0:m',J^*}\rVert_2^{t_1\gamma/(\gamma-1)})\}^{1/t_1} \{\E(\lVert \Lambda_{m_0:m',J^*}\rVert_2^{t_2(\gamma-2)/(\gamma-1)})\}^{1/t_2}\\
&\leq (m')^{1/(\gamma-1)}s_\phi^{(2\gamma-3)/(\gamma-1)}C_\gamma'
\end{align*}
for some constant $C_\gamma'$. 

In conclusion, combining (\ref{eq:bound_lambda_m0}) and (\ref{eq:bound_lambda_n}), we obtain $\E\lVert \Lambda_{n,J^*} \rVert_2^2 = O(n^{1/(\gamma-1)}s_\phi^{(2\gamma-3)/(\gamma-1)})$.
\end{proof}

\begin{corollary}\label{cor:M_app}({\sf ARCS-M}) Assume the conditions of Theorem~\ref{thm:balance_phi_app} hold,  with {\sf ARCS-M},
\begin{itemize}
\item[(a).] $\sum_{i=1}^n (2T_i-1) = O_p(n^{\frac{1}{2(\gamma-1)}}s_\phi^{\frac{2\gamma-3}{2\gamma-2}})$,
\item[(b).] $(\bar{X}_{n,J^*}(1) - \bar{X}_{n,J^*}(0))^\top \Sigma_{J^*}^{-1} (\bar{X}_{n,J^*}(1) - \bar{X}_{n,J^*}(0))=O_p(n^{-\frac{2\gamma-3}{\gamma-1}}s_\phi^{\frac{3\gamma-4}{\gamma-1}})$, where $\bar{X}_{n,J^*}(a) = n_a^{-1} \sum_{1\leq i\leq n:T_i=a} X_{i,J^*}$ for $a\in 
\lbrace 0,1\rbrace$.
\end{itemize}

\end{corollary}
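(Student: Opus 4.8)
The plan is to read both statements off Theorem~\ref{thm:balance_phi_app} by specializing $\phi$ to the Mahalanobis choice of Special case~\ref{ex:M} and using that the resulting imbalance measure splits into two nonnegative blocks. Writing $D_n := \sum_{1\le k\le n:\,T_k=1} X_{k,J^*} - \sum_{1\le k\le n:\,T_k=0} X_{k,J^*} = n_1\bar{X}_{n,J^*}(1) - n_0\bar{X}_{n,J^*}(0)$, under {\sf ARCS-M} we have
$$
\lVert \Lambda_{n,J^*}\rVert_2^2 \;=\; \text{Imb}_{n,J^*}^\phi \;=\; w_0 (n_1-n_0)^2 \;+\; w_1\, D_n^\top \Sigma_{J^*}^{-1} D_n .
$$
Theorem~\ref{thm:balance_phi_app} gives $\E(\lVert \Lambda_{n,J^*}\rVert_2^2) = O(n^{1/(\gamma-1)} s_\phi^{(2\gamma-3)/(\gamma-1)})$, so Markov's inequality yields $\lVert \Lambda_{n,J^*}\rVert_2^2 = O_p(n^{1/(\gamma-1)} s_\phi^{(2\gamma-3)/(\gamma-1)})$, and since both summands are nonnegative, each of $(n_1-n_0)^2$ and $D_n^\top \Sigma_{J^*}^{-1} D_n$ is $O_p$ of this quantity (up to the constants $w_0^{-1}$, $w_1^{-1}$). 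Part~(a) is then immediate: $\sum_{i=1}^n(2T_i-1) = n_1-n_0$, and taking square roots gives $\sum_{i=1}^n(2T_i-1) = O_p(n^{1/(2(\gamma-1))} s_\phi^{(2\gamma-3)/(2\gamma-2)})$ since $2(\gamma-1) = 2\gamma-2$.

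For part~(b), set $\Delta_n := \bar{X}_{n,J^*}(1) - \bar{X}_{n,J^*}(0)$ and $S_n := \bar{X}_{n,J^*}(1) + \bar{X}_{n,J^*}(0)$. The identities $n_1 = (n+(n_1-n_0))/2$ and $n_0 = (n-(n_1-n_0))/2$ give $D_n = \tfrac{n}{2}\Delta_n + \tfrac{n_1-n_0}{2}S_n$, hence $\Delta_n = \tfrac{2}{n}D_n - \tfrac{n_1-n_0}{n}S_n$, and by $(u-v)^\top M(u-v)\le 2u^\top M u + 2v^\top M v$ with the positive semidefinite matrix $M=\Sigma_{J^*}^{-1}$,
$$
\Delta_n^\top \Sigma_{J^*}^{-1}\Delta_n \;\le\; \frac{8}{n^2}\, D_n^\top \Sigma_{J^*}^{-1} D_n \;+\; \frac{2(n_1-n_0)^2}{n^2}\, S_n^\top \Sigma_{J^*}^{-1} S_n .
$$
The first term is $n^{-2}O_p(n^{1/(\gamma-1)} s_\phi^{(2\gamma-3)/(\gamma-1)})$ by the block bound above, and $(n_1-n_0)^2$ carries the same rate by part~(a), so it remains to show $S_n^\top\Sigma_{J^*}^{-1}S_n = O_p(s_\phi)$. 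This follows because $\lVert \Sigma_{J^*}^{-1/2}X_{k,J^*}\rVert_2^2 = X_{k,J^*}^\top \Sigma_{J^*}^{-1}X_{k,J^*} \le w_1^{-1}\lVert \phi(X_{k,J^*})\rVert_2^2$, so Assumption~\ref{assum}\ref{assum:moment} gives $\E\lVert \Sigma_{J^*}^{-1/2}X_{k,J^*}\rVert_2^\gamma = O(s_\phi^{\gamma/2})$; then $\lVert \Sigma_{J^*}^{-1/2}\bar{X}_{n,J^*}(a)\rVert_2 \le n_a^{-1}\sum_{k=1}^n\lVert \Sigma_{J^*}^{-1/2}X_{k,J^*}\rVert_2$, whose expectation is $O(n n_a^{-1} s_\phi^{1/2})$, while part~(a) forces $n_a = n/2 + o_p(n)$ and hence $n/n_a = O_p(1)$; thus $\lVert \Sigma_{J^*}^{-1/2}\bar{X}_{n,J^*}(a)\rVert_2 = O_p(s_\phi^{1/2})$ and $S_n^\top\Sigma_{J^*}^{-1}S_n = O_p(s_\phi)$. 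Combining, $\Delta_n^\top\Sigma_{J^*}^{-1}\Delta_n = n^{-2}O_p(n^{1/(\gamma-1)}s_\phi^{(2\gamma-3)/(\gamma-1)+1})$, and the exponent simplifications $\tfrac{1}{\gamma-1}-2 = -\tfrac{2\gamma-3}{\gamma-1}$ and $\tfrac{2\gamma-3}{\gamma-1}+1 = \tfrac{3\gamma-4}{\gamma-1}$ give the stated rate $O_p(n^{-(2\gamma-3)/(\gamma-1)}s_\phi^{(3\gamma-4)/(\gamma-1)})$.

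The algebraic identity, the $(u-v)^\top M(u-v)$ split, and the exponent arithmetic are routine; the one place that needs care is the control of $\bar{X}_{n,J^*}(a)$, since under {\sf ARCS-M} the assignments $T_k$ depend on the covariates, so I deliberately avoid any i.i.d./CLT argument and instead use the crude bound $\sum_{k:\,T_k=a}\lVert\cdot\rVert_2 \le \sum_{k=1}^n\lVert\cdot\rVert_2$ together with part~(a) to keep $n_a$ of exact order $n$. Finally, note that {\sf ARCS-M} is actually implemented in step~(3') with the sample precision matrix rather than $\Sigma_{J^*}$; the corollary as stated is the population version, and upgrading it to the implemented procedure would require an additional, routine step controlling $\widehat{\Sigma}_{J^*}-\Sigma_{J^*}$ on the consistently selected set $J^*$, available once Proposition~\ref{prop:consistency_app} is in force.
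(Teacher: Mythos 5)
Your proof follows essentially the same route as the paper's: part (a) and the bound on $D_n^\top\Sigma_{J^*}^{-1}D_n$ are read off Theorem~\ref{thm:balance_phi_app} via the block decomposition $\lVert\Lambda_{n,J^*}\rVert_2^2 = w_0(n_1-n_0)^2 + w_1 D_n^\top\Sigma_{J^*}^{-1}D_n$, which is exactly the identity the paper's (two-line) proof invokes. What you add is the step the paper leaves implicit: converting the bound on the sum-based quantity $D_n^\top\Sigma_{J^*}^{-1}D_n$ into the mean-based quantity $(\bar{X}_{n,J^*}(1)-\bar{X}_{n,J^*}(0))^\top\Sigma_{J^*}^{-1}(\bar{X}_{n,J^*}(1)-\bar{X}_{n,J^*}(0))$ when $n_1\neq n_0$, via $D_n=\tfrac{n}{2}\Delta_n+\tfrac{n_1-n_0}{2}S_n$ and the crude bound $S_n^\top\Sigma_{J^*}^{-1}S_n=O_p(s_\phi)$ from Assumption~\ref{assum}\ref{assum:moment}; this is correct and it transparently explains why the corollary's exponent on $s_\phi$ is $\tfrac{3\gamma-4}{\gamma-1}$ rather than $\tfrac{2\gamma-3}{\gamma-1}$.

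One small caveat: your justification that $n/n_a=O_p(1)$ rests on part (a) forcing $n_1-n_0=o_p(n)$, which implicitly requires $s_\phi=o(n)$; the corollary as stated only assumes the theorem's conditions. This is easily repaired without any condition on $s_\phi$: under {\sf ARCS} each patient is assigned to either arm with probability at least $1-\rho>0$ conditionally on the past, so $n_a$ stochastically dominates a Binomial$(n-N_0,1-\rho)$ variable (plus the exactly balanced initial batch), giving $n/n_a=O_p(1)$ directly. Your closing remark about the population versus sample precision matrix in step (3') is a fair observation and matches the paper's own footnote; the corollary is indeed stated for the population version.
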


\begin{proof}[Proof of Corollary \ref{cor:M_app}]
(a) is a direct result of Theorem \ref{thm:balance_phi_app}. For (b), by Theorem \ref{thm:balance_phi_app} and noticing that
$$
\Lambda_{n,J^*}^\top \Lambda_{n,J^*} = (n_1-n_0)^2 + (n_1 \bar{X}_{n,J^*}(1) - n_0\bar{X}_{n,J^*}(0))^\top \Sigma_{J^*}^{-1} (n_1\bar{X}_{n,J^*}(1) - n_0\bar{X}_{n,J^*}(0)),
$$
we obtain the result.
\end{proof}

Next, we derive a similar result to Theorem \ref{thm:balance_phi_app} in the case where covariate selection is not performed.

\begin{proposition}\label{prop:highd_app}
Suppose $X_1,\dots,X_n$ are i.i.d. copies of $X$ and $\E (\lVert \phi(X)\rVert_2^\gamma) = O(a_n^{\gamma/2})$ for some $\gamma>2$, and a rate $a_n$ to be discussed. Also, let us assume the nonzero eigenvalues of $\E(\phi(X) \phi(X)^\top)$ are lower-bounded by a positive constant. Then, utilizing methods in \cite{ma2022new}, $\Lambda_{n} = \sum_{i=1}^n (2T_i-1)\phi(X_i) = O_p(n^{\frac{1}{2(\gamma-1)}}a_n^\frac{2\gamma-3}{2\gamma-2})$.
\end{proposition}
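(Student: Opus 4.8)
The plan is to re-run the argument in the proof of Theorem~\ref{thm:balance_phi_app} directly on $\Lambda_n=\sum_{i=1}^n(2T_i-1)\phi(X_i)$, taking advantage of two simplifications that occur because no covariate selection is performed. Since the randomization rule of \cite{ma2022new} minimizes $\|\Lambda_i\|_2^2$ itself, there is no initial block of patients balanced on a misspecified subset, so the decomposition into $\Lambda_{m_0,J^*}$ plus a remainder and the additive slack $cs_\phi$ appearing in \eqref{eq:balance_inequality} both disappear; and the scalar $s_\phi$ is everywhere replaced by $a_n$, using $\E\|\phi(X)\|_2^\gamma=O(a_n^{\gamma/2})$ together with $\E\|\phi(X)\|_2^2\le(\E\|\phi(X)\|_2^\gamma)^{2/\gamma}=O(a_n)$.

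First I would apply the elementary Taylor bound \eqref{eq:l2norm_inequality} with $u=\Lambda_n$ and $v=(2T_{n+1}-1)\phi(X_{n+1})$ to get the one-step inequality as in \eqref{eq:Taylor}. Using $\text{Imb}^\phi_{n+1}(1)-\text{Imb}^\phi_{n+1}(0)=4\Lambda_n^\top\phi(X_{n+1})$ and the fact that the biased coin favours the arm reducing $\|\Lambda_{n+1}\|_2^2$, conditioning on $(X_1,\dots,X_{n+1},T_1,\dots,T_n)$ yields $\E\big[(2T_{n+1}-1)\Lambda_n^\top\phi(X_{n+1})\mid\cdot\big]\le-(2\rho-1)\,|\Lambda_n^\top\phi(X_{n+1})|$, now with no remainder term. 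The bound $\E\big[|\Lambda_n^\top\phi(X_{n+1})|\mid\Lambda_n\big]\ge c_0\|\Lambda_n\|_2$ follows verbatim from the orthonormal-diagonalization step in the proof of Theorem~\ref{thm:balance_phi_app}, using that the nonzero eigenvalues of $\E(\phi(X)\phi(X)^\top)$ are bounded below. Combining these produces the recursion
\[
\E\|\Lambda_{n+1}\|_2^\gamma-\E\|\Lambda_n\|_2^\gamma\le-\gamma(2\rho-1)c_0\,\E\|\Lambda_n\|_2^{\gamma-1}+c_\gamma'\big(a_n^{\gamma/2}+a_n\,\E\|\Lambda_n\|_2^{\gamma-2}\big),
\]
the analogue of \eqref{eq:before_function}.

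To extract the rate, note that $x\mapsto-c_1x^{\gamma-1}+c_\gamma'a_nx^{\gamma-2}$ on $x\ge0$ is maximized at $x\asymp a_n$ with maximum of order $a_n^{\gamma-1}$; since $\gamma-1>\gamma/2$, telescoping the recursion gives $\E\|\Lambda_n\|_2^\gamma\le C_\gamma na_n^{\gamma-1}$ (as in \eqref{eq:bound_lambda_gamma}). Taking $\gamma=2$ in the recursion and letting $m'$ be the largest index with $-2(2\rho-1)c_0\E\|\Lambda_{m'}\|_2+2c_\gamma'a_n\ge0$, so that $\E\|\Lambda_{m'}\|_2\le Ca_n$ and $\E\|\Lambda_i\|_2^2$ is non-increasing for $i>m'$, one obtains $\E\|\Lambda_n\|_2^2\le\E\|\Lambda_{m'}\|_2^2+2c_\gamma'a_n$. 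Interpolating $\E\|\Lambda_{m'}\|_2^2$ between $\E\|\Lambda_{m'}\|_2^\gamma$ and $\E\|\Lambda_{m'}\|_2$ by H\"older's inequality with exponents $(\gamma-1,\tfrac{\gamma-1}{\gamma-2})$ then gives $\E\|\Lambda_{m'}\|_2^2=O\big((m')^{1/(\gamma-1)}a_n^{(2\gamma-3)/(\gamma-1)}\big)$, hence $\E\|\Lambda_n\|_2^2=O\big(n^{1/(\gamma-1)}a_n^{(2\gamma-3)/(\gamma-1)}\big)$, and Markov's inequality delivers $\Lambda_n=O_p\big(n^{1/(2(\gamma-1))}a_n^{(2\gamma-3)/(2\gamma-2)}\big)$.

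The main obstacle is simply the bookkeeping of how the diverging $a_n$ threads through the recursion: one has to check that the maximum of $-c_1x^{\gamma-1}+c_\gamma'a_nx^{\gamma-2}$ scales as $a_n^{\gamma-1}$ (not $a_n^{\gamma/2}$) and that in the H\"older step the $a_n$-exponents combine to exactly $(2\gamma-3)/(\gamma-1)$, via $1+\tfrac{\gamma-2}{\gamma-1}=\tfrac{2\gamma-3}{\gamma-1}$; the leftover $O(a_n)$ term is automatically dominated by $n^{1/(\gamma-1)}a_n^{(2\gamma-3)/(\gamma-1)}$ since $\gamma>2$ forces $a_n^{(2-\gamma)/(\gamma-1)}\le1$. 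Everything else is a direct transcription of the proof of Theorem~\ref{thm:balance_phi_app} with Step~1 (the selection-consistency argument) and the associated initial error block removed.
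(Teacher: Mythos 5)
Your proof is correct, and the core of it is the same drift/Lyapunov recursion the paper uses: the one-step Taylor bound, the biased-coin inequality $\E[(2T_{n+1}-1)\Lambda_n^\top\phi(X_{n+1})\mid\cdot]\le-(2\rho-1)|\Lambda_n^\top\phi(X_{n+1})|$, the eigenvalue lower bound giving $\E[|\Lambda_n^\top\phi(X_{n+1})|\mid\Lambda_n]\ge c_0\lVert\Lambda_n\rVert_2$, telescoping to $\E\lVert\Lambda_n\rVert_2^\gamma\le C_\gamma n a_n^{\gamma-1}$, the $\gamma=2$ stopping-index argument, and the H\"older interpolation yielding $\E\lVert\Lambda_n\rVert_2^2=O(n^{1/(\gamma-1)}a_n^{(2\gamma-3)/(\gamma-1)})$ — exactly as in the paper's analogue of \eqref{eq:before_function2} and the steps following it, with $s_\phi$ replaced by $a_n$.

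The one genuine difference is structural: the paper's proof of Proposition \ref{prop:highd_app} still carries over Step 1 of Theorem \ref{thm:balance_phi_app} (the Borel--Cantelli argument on the selection event, the split $\Lambda_n=\Lambda_{m_0}+\Lambda_{m_0:n}$, and the resulting additive slack $ca_n$ in the conditional-drift bound), even though the procedure being analyzed performs no covariate selection; you drop all of this and run the recursion directly on $\Lambda_n$, arguing that under the design of \cite{ma2022new} the coin is biased exactly by the sign of $\Lambda_n^\top\phi(X_{n+1})$, so no remainder term arises. That simplification is legitimate and arguably cleaner and more faithful to the statement — the retained $ca_n$ slack in the paper is harmless but vestigial, and in either version it is absorbed by the $c_\gamma'\bigl(a_n^{\gamma/2}+a_n\lVert\Lambda\rVert_2^{\gamma-2}\bigr)$ term from the Taylor expansion, so the two arguments land on the same recursion and the same final rate.
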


\begin{proof}[Proof of Proposition \ref{prop:highd_app}]
Similar to the proof in Theorem \ref{thm:balance_phi_app}, we divide the patients into two parts and obtain the increasing rate of the imbalance measure for each part. However, without the covariates selection step, the rate of both parts will include $a_n$, which is a much larger rate than $s_\phi$.

\textbf{Step 1:} Denote $\mathcal{B}_n = \{\widehat{J}^{(b_n)} = J^*\}$. By Proposition \ref{prop:consistency_app} and the Borel-Cantelli lemma, with probability 1, there exists $m_0$ such that for all $n>m_0$, $\mathcal{B}_n$ holds. Then we decompose $\Lambda_{n}$ as follows
$$
\Lambda_n = \Lambda_{m_0} + \sum_{m_0+1}^n (2T_i-1)\phi(X_i).
$$

\textbf{Step 2:} For the first part, we have
\begin{align}\label{eq:bound_lambda_m0_high}
\E \lVert \Lambda_{m_0}\rVert_2^2 &= \E \left\{\sum_{1\leq i,k\leq m_0} (2T_i-1) (2T_k-1) (\phi(X_{i}))^\top\phi(X_{k}) \right\} \nonumber\\
& \leq \sum_{i=1}^{m_0} \E\lVert \phi(X_{i})\rVert_2^2 + \sum_{1\leq i,k\leq m_0: i\neq k} \E \{ (\phi(X_{i}))^\top\phi(X_{k}) \} \nonumber\\
& \leq C_1 a_n
\end{align}
for some constant $C_1$.

\textbf{Step 3:} When $n>m_0$, $\mathcal{B}_n$ holds almost surely. For brevity, the expectation below are all conditional on the event $\mathcal{B}_n$. Denote $\Lambda_{m_0:n} = \sum_{i=m_0+1}^n(2T_i-1)\phi(X_i)$. Substitute $\Lambda_{m_0:(n+1)} = \Lambda_{m_0:n} + (2T_{n+1}-1)\phi(X_{n+1})$ into the inequality (\ref{eq:l2norm_inequality}), we have
\begin{align}\label{eq:Taylor_highd}
\lVert \Lambda_{m_0:(n+1)}\rVert_2^\gamma - \lVert \Lambda_{m_0:n}\rVert_2^\gamma &\leq \gamma (2T_{n+1}-1)\Lambda_{m_0:n}^\top \phi(X_{n+1})\lVert \Lambda_{m_0:n}\rVert_2^{\gamma-2} \nonumber\\
&\quad + c_\gamma (\lVert \phi(X_{n+1})\rVert_2^\gamma + \lVert \phi(X_{n+1})\rVert_2^2 \lVert \Lambda_{m_0:n}\rVert_2^{\gamma-2}).
\end{align}
Assuming $\E(\lVert \phi(X) \rVert_2^\gamma) = O(a_n^{\gamma/2})$, the procedure in \cite{ma2022new} yields
\begin{align*}
&\quad \E((2T_{n+1}-1)\Lambda_{m_0:n}^\top \phi(X_{n+1})\mid X_{n+1},\dots,X_1,T_n,\dots,T_1) \\
&\quad \E((2T_{n+1}-1)(\Lambda_{n} -\Lambda_{m_0})^\top \phi(X_{n+1})\mid X_{n+1},\dots,X_1,T_n,\dots,T_1) \\
&\leq -(2\rho-1) |\Lambda_n^\top \phi(X_{n+1})| + ca_n
\end{align*}
for some constant $c$. Then
\begin{align}\label{eq:before_function2}
&\quad \E (\lVert \Lambda_{m_0:(n+1)}\rVert_2^\gamma\mid \Lambda_{n}) - \lVert \Lambda_{m_0:n}\rVert_2^\gamma \nonumber\\
&\leq (-\gamma(2\rho-1)\E(|\Lambda_{n}^\top \phi(X_{n+1})| \mid \Lambda_{n}) + c a_n) \lVert \Lambda_{m_0:n}\rVert_2^{\gamma-2} + c_\gamma (a_n^{\gamma/2}+a_n\lVert \Lambda_{m_0:n}\rVert_2^{\gamma-2}) \nonumber \\
&\leq -\gamma(2\rho-1)c_0\lVert \Lambda_{m_0:n}\rVert_2^{\gamma-1} + c_\gamma' (a_n^{\gamma/2} + a_n \lVert\Lambda_{m_0:n}\rVert_2^{\gamma-2})
\end{align}
for some constants $c_0,c_\gamma'$, where the last inequality follows from
$$\E(|\Lambda_{n}^\top \phi(X_{n+1})| \mid \Lambda_{n})\geq c_0\lVert \Lambda_{n}\rVert_2\geq c_0\lVert \Lambda_{m_0:n}\rVert_2 - c_0\lVert \Lambda_{m_0}\rVert_2,$$
which is similar to the proof of that in Theorem 3.1 in \cite{ma2022new} and Theorem \ref{thm:balance_phi_app}. The function $h(x) = -c_1 x^{\gamma-1} + c'a_n x^{\gamma-2}$ has a maximum value that increases at a rate of $a_n^{\gamma-1}$. Thus,
$$
\E(\lVert \Lambda_{m_0:n}\rVert_2^\gamma) \leq \E(\lVert \Lambda_{m_0:(n-1)}\rVert_2^\gamma) + C_\gamma a_n^{\gamma-1} \leq \dots \leq C_\gamma n a_n^{\gamma-1}.
$$
Let $\gamma=2$ in (\ref{eq:before_function2}), we obtain
$$
\E (\lVert \Lambda_{m_0:(n+1)}\rVert_2^2) - \E (\lVert \Lambda_{m_0:n}\rVert_2^2) \leq -2(2\rho-1)c_0\E(\lVert \Lambda_{m_0:n}\rVert_2) + 2c_\gamma' a_n.
$$
Define $m_0\leq m'\leq m$ as the largest integer satisfies $-2(2\rho-1)c_0\E(\lVert \Lambda_{m_0:n}\rVert_2) + 2c_\gamma' a_n \geq0$. Following the proof of Theorem \ref{thm:balance_phi_app}, \begin{align}\label{eq:bound_lambda_n_high}
\E(\lVert\Lambda_{m_0:n}\rVert_2^2) &= \E(\lVert\Lambda_{m_0:(m'+1)}\rVert_2^2) + \sum_{i=m'+2}^n (\E(\lVert\Lambda_{m_0:i}\rVert^2_2) - \E(\lVert\Lambda_{m_0:(i-1)}\rVert^2_2)) \nonumber\\
&\leq \E(\lVert\Lambda_{m_0:(m'+1)}\rVert_2^2) \nonumber\\
&\leq \E(\lVert\Lambda_{m_0:m'}\rVert_2^2) + 2c_\gamma' a_n \nonumber\\
&\leq C_2 n^{1/(\gamma-1)} a_n^{(2\gamma-3)/(\gamma-1)}.
\end{align}

Finally, by (\ref{eq:bound_lambda_m0_high}) and (\ref{eq:bound_lambda_n_high}), we have $\E(\lVert \Lambda_{n}\rVert_2^2) = O_p(n^{\frac{1}{\gamma-1}} a_n^{\frac{2\gamma-3}{\gamma-1}})$, and it implies that $\Lambda_n = O_p(n^{\frac{1}{2(\gamma-1)}} a_n^{\frac{2\gamma-3}{2\gamma-2}})$, which completes the proof.
\end{proof}

\subsection{Proof of Theorem \ref{thm:treatment_phi_app}: optimal precision with {\sf ARCS-M} and {\sf ARCS-COV}}

Using the convergence rate of $\Lambda_{n,J^*}$ proved in section \ref{sec:proof_2}, we can establish the following optimal precision for $\hat{\tau}$ with {\sf ARCS-M} and {\sf ARCS-COV} procedures.
\begin{theorem}(Optimal precision)\label{thm:treatment_phi_app}
Under the conditions of Theorem \ref{thm:balance_phi_app}, and let us assume $s = O(1)$, $s_\phi = o(n^{\frac{\gamma-2}{2\gamma-3}})$. Then with both {\sf ARCS-M} and {\sf ARCS-COV}, we have
$$
\sqrt{n}(\hat{\tau} - \tau) \stackrel{d}{\to} \mathcal{N} (0,4\sigma_\varepsilon^2).
$$
\end{theorem}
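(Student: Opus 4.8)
The plan is to write $\hat\tau-\tau$ as the sum of a covariate-imbalance term and a noise term, to show the first is $o_p(n^{-1/2})$ using the balance rates from Theorem~\ref{thm:balance_phi} and its corollaries, and to prove a martingale central limit theorem for the second. Substituting the linear model~(\ref{eq:outcome}) into $\hat\tau$ and using that $\beta^*$ is supported on $J^*$ gives
\[
\hat\tau-\tau=(\bar X_{n,J^*}(1)-\bar X_{n,J^*}(0))^\top\beta^*_{J^*}+\bar\varepsilon_n(1)-\bar\varepsilon_n(0),
\]
where $\bar\varepsilon_n(a)=n_a^{-1}\sum_{1\le i\le n:T_i=a}\varepsilon_i$ and $\beta^*_{J^*}$ is the subvector of $\beta^*$ indexed by $J^*$. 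Since $s=O(1)$ and $s_\phi=o(n^{(\gamma-2)/(2\gamma-3)})$, Corollary~\ref{cor:M}(a) (resp.\ Corollary~\ref{cor:cov}(a)) gives $n_1-n_0=o_p(\sqrt n)$, hence $n_a/n\to 1/2$; this is the basic quantitative input used throughout.

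For the covariate term I would use the elementary decomposition
\[
\bar X_{n,J^*}(1)_j-\bar X_{n,J^*}(0)_j=\frac{2}{n}\sum_{i=1}^n(2T_i-1)X_{ij}\,(1+o_p(1))+O_p\!\left(\frac{|n_1-n_0|}{n}\right),
\]
in which the remainder equals $\frac{2(n_0-n_1)}{n^2}\sum_i X_{ij}$ with $n^{-1}\sum_i X_{ij}=O_p(1)$. For {\sf ARCS-COV}, Corollary~\ref{cor:cov}(b) bounds $\sum_i(2T_i-1)X_{ij}$ coordinatewise by $O_p(n^{1/(2(\gamma-1))}s_\phi^{(2\gamma-3)/(2\gamma-2)})$ for $j\in J^*$; for {\sf ARCS-M} the same bound follows from Theorem~\ref{thm:balance_phi} applied to the linear block of $\Lambda_{n,J^*}$ together with the bounded spectrum of $\Sigma_{J^*}$. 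Since $|J^*|=s=O(1)$ and the $|\beta^*_j|$ are bounded, summing over $j\in J^*$ yields $(\bar X_{n,J^*}(1)-\bar X_{n,J^*}(0))^\top\beta^*_{J^*}=O_p(n^{1/(2(\gamma-1))-1}s_\phi^{(2\gamma-3)/(2\gamma-2)})$, and the hypothesis $s_\phi=o(n^{(\gamma-2)/(2\gamma-3)})$ makes this exactly $o_p(n^{-1/2})$. Note that the coarser quadratic-form bound of Corollary~\ref{cor:M}(b) is not sharp enough here: one genuinely needs the coordinatewise rate combined with $s=O(1)$.

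For the noise term, writing $n_a^{-1}=\frac{2}{n}(1+O_p(|n_1-n_0|/n))$ gives $\bar\varepsilon_n(1)-\bar\varepsilon_n(0)=\frac{2}{n}\sum_{i=1}^n(2T_i-1)\varepsilon_i+R_n$ with $R_n=O_p(|n_1-n_0|/n)\cdot O_p(n^{-1/2})=o_p(n^{-1/2})$, the $O_p(n^{-1/2})$ factor bounding $n_a^{-1}\sum_{i:T_i=a}\varepsilon_i$ coming from a second-moment computation along the filtration described next. The key structural fact is that $\{(2T_i-1)\varepsilon_i\}_{i=1}^n$ is a martingale difference sequence with respect to $\mathcal{F}_i:=\sigma(\{X_k\}_{k=1}^n,\{U_k\}_{k=1}^n,\{\varepsilon_k\}_{k=1}^i)$, where the $U_k$ are the independent biased-coin variates: by construction $T_i$ depends only on covariates, earlier treatments and coins, and responses from \emph{strictly earlier batches}, hence is $\mathcal{F}_{i-1}$-measurable, whereas $\varepsilon_i$ is mean zero and independent of $\mathcal{F}_{i-1}$. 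Consequently the predictable quadratic variation is $\sum_{i=1}^n(2T_i-1)^2\,\E[\varepsilon_i^2\mid\mathcal{F}_{i-1}]=n\sigma_\varepsilon^2$ exactly, while the conditional Lindeberg condition reduces to $n^{-1}\sum_i\E[\varepsilon_i^2\mathbf{1}(|\varepsilon_i|>\delta\sqrt n)]\to 0$, which holds because the $\varepsilon_i$ are i.i.d.\ sub-Gaussian (Assumption~\ref{assum}~\ref{assum:noise}). The martingale CLT then yields $n^{-1/2}\sum_{i=1}^n(2T_i-1)\varepsilon_i\stackrel{d}{\to}\mathcal{N}(0,\sigma_\varepsilon^2)$, so $\sqrt n\,(\bar\varepsilon_n(1)-\bar\varepsilon_n(0))\stackrel{d}{\to}\mathcal{N}(0,4\sigma_\varepsilon^2)$.

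Combining the two pieces by Slutsky's theorem gives $\sqrt n(\hat\tau-\tau)\stackrel{d}{\to}\mathcal{N}(0,4\sigma_\varepsilon^2)$, and the argument is identical for {\sf ARCS-M} and {\sf ARCS-COV} since in both cases the linear part of the balance vector $\Lambda_{n,J^*}$ equals $\sum_i(2T_i-1)X_{i,J^*}$ up to an invertible linear map. I expect the main obstacle to be the noise term: one must pin down exactly which error variables the assignment rule is allowed to depend on---only those entering responses used in strictly earlier batches---in order to justify the martingale-difference property and hence invoke the martingale CLT; by comparison the covariate term is routine once one observes that it is the coordinatewise balance rate, not the quadratic-form rate, to which the stated condition on $s_\phi$ is calibrated.
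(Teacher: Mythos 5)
Your proposal is correct, and its skeleton matches the paper's proof: both decompose $\sqrt{n}(\hat\tau-\tau)$ into a covariate-imbalance part and a noise part, kill the imbalance part using the \emph{coordinatewise} balance rates (Corollary \ref{cor:cov}(b) for {\sf ARCS-COV}; for {\sf ARCS-M} the paper does exactly what you sketch, passing through the eigen-decomposition $\Sigma_{J^*}^{-1}=\Gamma D\Gamma^\top$ and the bounded spectrum of $\Sigma_{J^*}$ to extract $\sum_i(2T_i-1)x_{ij}=o_p(\sqrt n)$ from the linear block of $\Lambda_{n,J^*}$), and then prove a CLT for $\tfrac{2}{\sqrt n}\sum_i(2T_i-1)\varepsilon_i$; your observation that the quadratic-form rate in Corollary \ref{cor:M}(b) is not what is needed is also consistent with how the paper argues. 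The genuine difference is the noise step: the paper simply invokes the classical CLT ``conditional on $T_1,\dots,T_n$,'' noting $(2T_i-1)^2=1$, and then un-conditions because the limit law is assignment-free, whereas you set up the filtration $\mathcal{F}_i=\sigma(\{X_k\}_{k=1}^n,\{U_k\}_{k=1}^n,\{\varepsilon_k\}_{k=1}^i)$, check that $T_i$ is $\mathcal{F}_{i-1}$-measurable (it uses only covariates, coins, earlier assignments, and responses from strictly earlier batches), and apply the martingale CLT with predictable variation exactly $n\sigma_\varepsilon^2$. Your route is more laborious but also more careful: in a CARA design the assignments depend on earlier responses, so conditioning on the entire vector $T_1,\dots,T_n$ perturbs the joint law of the errors, and the martingale-difference formulation is the cleaner way to justify the limit; the paper's conditional-CLT shortcut buys brevity at the cost of glossing over precisely this dependence. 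Minor points in your write-up (the implicit boundedness of the nonzero $\beta^*_j$, and the $O_p(1)$ versus $O_p(n^{-1/2})$ bound on $n^{-1}\sum_i X_{ij}$ in the remainder) are shared with, or no weaker than, the paper's own treatment and do not affect correctness.
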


\begin{proof}[Proof of Theorem \ref{thm:treatment_phi_app}]
First, it is worth noting  that  $s_\phi = o(n^{\frac{\gamma-2}{2\gamma-3}})$ implies that $\sum_{i=1}^n(2T_i-1) = o_p(\sqrt{n})$,
 then it follows from $s=O(1)$ that
\begin{align}
\sqrt{n}(\hat{\tau} - \tau) &=\frac{2}{\sqrt{n}}\sum_{i=1}^n(2T_i-1)(X_i^\top\beta^* + \varepsilon_i) +o_p(1).\label{eq.dist.tau}
\end{align}

In addition,  since $(2T_i-1)^2=1$ for $i=1,\dots,n$, the central limit theorem conditional on $T_1,\dots,T_n$ yields that 
\begin{align}
    \frac{2}{\sqrt{n}}\sum_{i=1}^n(2T_i-1)\varepsilon_i \stackrel{d}{\to} \mathcal{N}(0,4\sigma^2_\varepsilon). \label{eq.dist.error}
\end{align}
Since the asymptotic distribution  in (\ref{eq.dist.error}) does not depend on  $T_1,\dots,T_n$, it also holds unconditionally.

Next, we derive for {\sf ARCS-COV}.  Since  $s_\phi = o(n^{\frac{\gamma-2}{2\gamma-3}})$ and by Corollary \ref{cor:cov}, we have $\sum_{i=1}^n (2T_i-1) x_{ij} = o_p(\sqrt{n})$ for $j\in J^*$. Further with $s=O(1)$,  it follows that $n^{-1/2} \sum_{i=1}^n (2T_i-1) X_i^\top \beta^* = o_p(1)$. Then the asymptotic distribution of $\hat{\tau}$ follows from (\ref{eq.dist.tau}) and (\ref{eq.dist.error}).

For {\sf ARCS-M}, recall the eigen-decomposition $\Sigma_{J^*}^{-1} = \Gamma D \Gamma^\top$ and $\phi(X_{k,J^*}) = (\sqrt{w_0},\sqrt{w_1} \\ X_{k,J^*}^\top \Gamma D^{1/2})^\top$. Denote the element in the $j$-th row and $r$-th column of matrix $\Gamma$ as $\gamma_{jr}$, $j,r=1,\dots,s$, and $\eta_1,\dots,\eta_s$ are the eigenvalues of $\Sigma_{J^*}$. By Assumption \ref{assum} \ref{assum:subgaussian} and $s=O(1)$, $\eta_1,\dots,\eta_s$ are bounded above by a constant. Furthermore, with $s_\phi = o(n^{\frac{\gamma-2}{2\gamma-3}})$ and Theorem \ref{thm:balance_phi_app}, we have $\sum_{i=1}^n (2T_i-1) \sum_{j\in J^*}x_{ij} \gamma_{jr} / \sqrt{\eta_r} = o_p(\sqrt{n})$ for $r=1,\dots,s$, and thus $\sum_{i=1}^n (2T_i-1) x_{ij} = o_p(\sqrt{n})$. This together with  (\ref{eq.dist.tau}) and (\ref{eq.dist.error}) yields the  the asymptotic distribution of $\hat{\tau}$. The proof for this theorem is now complete. 
\end{proof}

\vspace{-1cm}
\section{Additional simulation results}

\begin{example}[$n=500$ version of Example \ref{ex:1}]\label{ex:n}
Consider the following outcome model (\ref{eq:largen_outcome}):
\begin{equation}\label{eq:largen_outcome}
  Y_i = \mu(1) T_i + \mu(0)(1-T_i) + X_i^\top\beta^* + \varepsilon_i,
\end{equation} 
where  $\beta^* = (3, 1.5, 0, 0, 2, \mathbf{0}_{p-5}^\top)^\top$, $\mu(0)=0$ and $\mu(1)=1$.
Furthermore, we assume $\{X_i\}_{i=1}^n$ are jointly  $\mathcal{N}(0,\Sigma)$ with the $(i,j)$-th element of $\Sigma$,  $\Sigma_{ij} = 0.5^{|i-j|}$ for $i,j \in \{1,\cdots,p\}$,  $\{\varepsilon_i\}_{i=1}^n$ are i.i.d. $\mathcal{N}(0,1)$, and $p\in\{10,150\}$.

\end{example}
\vspace{-0.3cm}
\begin{table}[h]
\caption{Simulation results Example \ref{ex:n}: (1)  $\text{Imb}_{n,J^*}^M$ and (2) $\tau$: mean and  $\sqrt{n}\times$standard deviation ($\sqrt{n}$s.d.).}
\label{tb:n_arm}
\centering
\footnotesize
\begin{tabular}{c|c|cccccc}
\hline
& \multirow{2}{*}{$p$} & \multirow{2}{*}{{\sf CR}} & \multirow{2}{*}{{\sf ARM}} &  \multicolumn{2}{c}{{\sf ARCS-M}}  \\
\cline{5-6}
& & & & $N=2$ & $N=10$ \\
\hline

\multirow{2}{*}{$\text{Imb}_{n,J^*}^M$ }  & 10 & 5.61 & 0.16 & 0.09 & 0.09 \\ 
 & 150 & 6.09 & 2.65 & 0.08 & 0.08 \\ 
 \hline
 \multirow{2}{*}{ $\tau$: mean~($\sqrt{n}$s.d.)} & 10 & 0.98~(9.12) & 1.00~(2.60) & 1.00~(2.29) & 1.00~(2.32) \\ 
&  150 & 1.01~(9.45) & 1.00~(6.25) & 1.00~(2.30) & 1.00~(2.30) \\ 
\hline

\end{tabular}
\end{table}
\vspace{-0.3cm}
\begin{table}[h]
\caption{Simulation results for Example \ref{ex:n}: (1) DNCM; (2) DNC; (3) $\text{Imb}_{n,J^*}^\phi$ in Special case \ref{ex:cov} and  (4) $\tau$: mean and $\sqrt{n}\times$standard deviation.}
\label{tb:n_cov}
\centering
\footnotesize
\begin{tabular}{c|c|cccccc}
\hline
& \multirow{2}{*}{$p$} & \multirow{2}{*}{{\sf CR}} & \multirow{2}{*}{{\sf COV}} &  \multicolumn{2}{c}{{\sf ARCS-COV}}  \\
\cline{5-6}
& & & & $N=1$ & $N=10$ \\
\hline
\multirow{2}{*}{DNCM} & 10 & 6044.10 & 166.22 & 112.58 & 116.89 \\ 
 & 150 & 5860.69 & 1054.43 & 104.49 & 106.40 \\ 
 \hline
 \multirow{2}{*}{DNC} & 10 & 24909.45 & 4277.54 & 1551.77 & 1546.48 \\ 
 & 150 & 25419.84 & 23706.31 & 1148.54 & 1157.76 \\ 
 \hline
\multirow{2}{*}{$\text{Imb}_{n,J^*}^\phi$ } & 10 & 2650.90 & 212.49 & 82.15 & 85.56 \\ 
 & 150 & 2579.55 & 1391.70 & 66.37 & 67.03 \\ 
 \hline
\multirow{2}{*}{ $\tau$: mean~($\sqrt{n}$s.d.)} & 10 & 1.01~(9.50) & 1.00~(2.40) & 0.99~(2.31) & 1.00~(2.41) \\ 
&  150 & 0.98~(9.36) & 1.00~(3.75) & 1.00~(2.21) & 0.99~(2.28) \\ 
\hline
\end{tabular}
\end{table}

\end{appendices}
\bibliographystyle{apalike}
\bibliography{reference}
\end{document}